\definecolor{Darkblue}{rgb}{0,0,0.4}
\definecolor{Brown}{cmyk}{0,0.61,1.,0.60}
\definecolor{Purple}{cmyk}{0.45,0.86,0,0}
\definecolor{Darkgreen}{rgb}{0.133,0.543,0.133}
\newcommand{\commentout}[1]{}
\def\epsilon{\varepsilon}
\def\eps{\varepsilon}
\newcommand{\alert}[1]{\textbf{\color{red}
		[[[#1]]]}\marginpar{\textbf{\color{red}**}}\typeout{ALERT:\@
		\the\inputlineno: #1}}
\newcommand{\atodo}[1]{}
\newcommand{\atodoin}[1]{}
\newcommand{\gtodo}[1]{}
\newcommand{\gtodoin}[1]{}
\def\eps{\epsilon}
\newcommand{\cost}{{\rm cost}}
\newcommand{\dcost}{{\rm diamcost}}
\newcommand{\opt}{{\rm OPT}}
\newcommand{\polylog}{{\rm polylog}}
\newcommand{\poly}{{\rm poly}}
\newcommand{\diam}{{\rm diam}}
\newcommand{\rad}{{\rm rad}}
\newcommand{\R}{\mathbb{R}}
\newcommand{\etal}{\emph{et.\ al. }}
\newcommand{\mommit}[1]{}
\newcommand{\namedref}[2]{\hyperref[#2]{#1~\ref*{#2}}}
\newcommand{\ball}{{\sf ball}}
\newcommand{\FGDB}{\texttt{FIND-GOOD-DENSE-BALL}}
\newcommand{\SC}{\texttt{SETTLE-CLUSTER}\xspace}
\newcommand{\rchble}{{\textsf{reachable}}}
\newcommand{\pof}{{\textsf{Partitions-of}}}
\newcommand{\guess}{\emph{\texttt{guess}}\xspace}
\theoremstyle{plain}
\newtheorem{theorem}{Theorem}[]
\newtheorem{lemma}{Lemma}[]
\newtheorem{claim}[lemma]{Claim}
\newtheorem{remark}[lemma]{Remark}
\newtheorem{definition}{Definition}
\newtheorem{assumption}{Assumption}
\newtheorem{observation}{Observation}
\newtheorem{hypothesis}{Hypothesis}[]
\newcommand{\cC}{\mathcal{C}}
\newcommand{\cD}{\mathcal{D}}
\newcommand{\cG}{\mathcal{G}}
\newcommand{\cI}{\mathcal{I}}
\newcommand{\cJ}{\mathcal{J}}
\newcommand{\cL}{\mathcal{L}}
\newcommand{\cM}{\mathcal{M}}
\newcommand{\cR}{\mathcal{R}}
\newcommand{\cS}{\mathcal{S}}
\newcommand{\cT}{\mathcal{T}}
\newcommand{\csrfull}{\textsc{Capacitated Sum of Radii}\xspace}
\newcommand{\csr}{\textsc{CapSoR}\xspace}
\newcommand{\sor}{\textsf{Sum of Radii}\xspace}
\newcommand{\sod}{\textsf{Sum of Diameters}\xspace}
\newcommand{\kcen}{\textsc{$k$-Center}\xspace}
\newcommand{\ckcen}{Capacitated \textsc{$k$-Center}\xspace}
\newcommand{\kdia}{\textsc{Max-$k$-Diameter}\xspace}
\newcommand{\ckdia}{Capacitated \textsc{Max-$k$-Diameter}\xspace}
\newcommand{\csdfull}{\textsc{Capacitated Sum of Diameters}\xspace}
\newcommand{\csd}{\textsc{CapSoD}\xspace}
\newcommand{\maxcov}{\textsc{Max $k$-Cov}\xspace}
\newcommand{\sat}{\textsc{$3$-SAT}\xspace}
\newcommand{\tp}[1]{Type-$#1$}
\newcommand{\tps}{special\xspace}
\newcommand{\tpr}{regular\xspace}
\newcommand{\pp}{\textsf{P}\xspace}
\newcommand{\np}{\textsf{NP}\xspace}
\newcommand{\fpt}{\textsf{FPT}}
\newcommand{\gapeth}{\textsf{gap-ETH}}
\newcommand{\eth}{\textsf{ETH}}
\newcommand{\nonblind}[1]{}
\author[1]{Arnold Filtser\thanks{This research was supported by the Israel Science Foundation (grant No. 1042/22).}}
\author[2]{Ameet Gadekar\thanks{Work partly done while at Bar-Ilan University.}}
\affil[1]{Bar-Ilan University, Israel. }
\affil[2]{CISPA Helmholtz Center for Information Security, Saarbr\"{u}cken, Germany.\qquad\qquad\qquad Emails:  \texttt{arnold.filtser@biu.ac.il},~ \texttt{ameet.gadekar@cispa.de}}
\title{FPT Approximations for Capacitated Sum of Radii and Diameters}
\date{}
\begin{document}
\pagenumbering{gobble}

\maketitle

\begin{abstract}
The \textsf{Capacitated Sum of Radii} problem involves partitioning a set of points $P$, where each point $p\in P$ has capacity $U_p$, into $k$ clusters that minimize the sum of cluster radii, such that the number of points in the cluster centered at point $p$ is at most $U_p$.
We begin by showing that the problem is APX-hard, and that under \textsf{gap-ETH} there is no parameterized approximation scheme (FPT-AS).
We then construct a $\approx5.83$-approximation algorithm in \textsf{FPT} time (improving a previous $\approx7.61$ approximation in \textsf{FPT} time).
Our results also hold when the objective is a general monotone symmetric norm of radii.  
We also improve the approximation factors for the uniform capacity case, and for the closely related problem of \textsf{Capacitated Sum of Diameters}.
 \end{abstract}

    	\vfill
	\begin{multicols}{2}
		{
			\setcounter{secnumdepth}{5}
			\setcounter{tocdepth}{2} \tableofcontents
		}
	\end{multicols}
	\newpage

\pagenumbering{arabic}


\section{Introduction}
Clustering is a fundamental problem in several domains of computer science, including, data mining, operation research, and computational geometry, among others. In particular, center based clustering problems such as $k$-median, $k$-means, and $k$-center have received significant attention from the research community for more than half a century~\cite{llyod,HS85, MEGIDDO1990327, badoiu-etal:approximate-clustering-coresets,AryaGKMMP04,kumar2010linear,dasgupta08hardness-2-means,har2004coresets,matouvsek2000approximate,KanungoMNPSW04,ByrkaPRST17,bhattacharya2018faster,cohen2019inapproximability,Cohen-AddadG0LL19, Cohen-AddadGHOS22,CKL22,byrka2018constant,10353074,GowdaPST23}. In these problems, we are given a set $P$ of $n$ points together with a distance function (metric) and a positive integer $k$. The goal is to partition $P$ into $k$ parts called \textit{clusters} and choose a \textit{center} point for each cluster, so that to minimize a clustering objective that is a function of the point distances to their centers. 
A related fundamental problem that helps reduce \textit{dissection effect} due to $k$-center~\cite{MS89,CHARIKAR2004417} is called \sor. 
Here the goal is to choose $k$-size subset $X$ of $P$ (called centers, as before) and assign every point to an element in $X$. This partitions the set $P$ into possible $k$ clusters, $C_1,\cdots, C_k$, where cluster $C_i$ corresponds to the set of points assigned to $x_i \in X$. The radius of  cluster $C_i$ centered at $x_i \in X$ is the maximum distance of a point in $P_i$ to $x_i$. The objective is the sum of radii of the clusters $C_1,\cdots, C_k$. \footnote{Alternatively, in the \sor problem we choose centers $x_1,\dots,x_k\in X$, and radii $r_1,\dots,r_k\in\R_{\ge 0}$ so that $P\subseteq\cup_{i=1}^k\ball(x_i,r_i)$. The objective is to minimize $\sum_{i=1}^{k}r_i$.\label{foot:SORdef}}

In the recent years \sor received a great share of interest in all aspects~\cite{CHARIKAR2004417, 10.5555/1347082.1347172,GibsonKKPV12,GibsonKKPV12,  BehsazS15,friggstad_et_al, BuchemERW24, ChenXXZ24, inamdar_et_al:LIPIcs.ESA.2020.62, bandyapadhyay_et_al:LIPIcs.SoCG.2023.12,jaiswal_et_al:LIPIcs.ITCS.2024.65,BandyapadhyayV16,AhmadianS16,BhowmickIV21,FotakisK14,BiloCKK05}. 
Nevertheless, its  computational landscape is not yet fully understood. While, the problem is NP-hard~\cite{10.5555/1347082.1347172} even in weighted planar graphs and in metrics of bounded doubling dimensions, it is known to admit a QPTAS (quasi polynomial time approximation scheme)~\cite{10.5555/1347082.1347172} in general metrics, thus prompting a possibility 
of PTAS (polynomial time approximation scheme).~\footnote{If \sor is APX-hard (equivalently, does not admit a PTAS) then NP$\subseteq$QP (in particular there is a quasi-polynomial time algorithm solving SAT). It is widely believed that  NP$\not\subseteq$QP.}
This is in contrast to related clustering problems like $k$-center, $k$-median, and $k$-means which are all known to be APX-hard~\cite{HS85,JainMS02,awasthi2015hardness}.
Currently, the present best known approximation factor in polynomial time is $(3+\eps)$ due to Buchem \etal \cite{BuchemERW24} (improving over previous results of \cite{CHARIKAR2004417,friggstad_et_al}).
Additionally, there is a recent $(2+\eps)$-approximation algorithm \cite{ChenXXZ24} that runs in \fpt~time\footnote{In this paper, by \fpt\ time, we mean \fpt\  w.r.t. the parameter $k$.}   (fixed parameter tractable).

\paragraph*{Capacitated Clustering.} 
We are interested in a much more challenging generalization where each point $p$ has an inherent capacity $U_p \ge 0$, indicating an upper bound on the number of points it can serve as a center of a cluster in the solution.%
\footnote{Formally, following \cref{foot:SORdef}, a valid solution to the capacitated version also contains an assignment $\sigma:P\rightarrow\{x_1,\dots,x_k\}$ such that if $\sigma(p)=x_i$ then $\delta(p,x_i)\le r_i$, and for every $i$, $|\sigma^{-1}(x_i)|\le U_{x_i}$.}
This is known as   the \textit{Non-Uniform Capacitated} clustering problem. If all the points have the same capacity $U$, the problem is referred to as the \textit{Uniform Capacitated} clustering problem.
Capacitated clustering naturally models many applications: in \emph{load balancing}, cluster centers correspond to servers with limited service capacity, the radius captures latency, and minimizing the sum of radii models reducing aggregate latency subject to serve load; in \emph{wireless sensors}, each cluster head can accommodate only bounded number of devices due to bandwidth, memory, or battery limitations, the radius corresponds to communication range, and optimizing the sum of radii reflects minimizing total communication energy.
Capacitated clustering is thoroughly studied: 
Capacitated $k$-center admits a constant factor approximation in polynomial time for both uniform~\cite{KhullerS00} and non-uniform capacities~\cite{CyganHK12, AnBCGMS15}.
There are several bi-criteria polytime approximations for $k$-median \cite{ChuzhoyR05,BFRS15,ByrkaRU16,DemirciL16,Li16,Li17}.

There are \fpt\ approximation algorithms for capacitated $k$-median and capacitated $k$-means with approximation factors $(3+\eps)$ and $(9+\eps)$ respectively \cite{Cohen-AddadL19} (improving over \cite{AdamczykBMM019,XZZ19}). From the other hand, assuming \gapeth\footnote{Informally, \gapeth\ says that there exists an $\eps>0$ such that no sub-exponential time algorithm for \sat  can distinguish whether a given \sat formula has a satisfying assignment or every assignment satisfies at most $(1-\eps)$ fraction of the clauses.\label{foot:gapETH}}, in \fpt~ time it is impossible to approximate $k$-median and $k$-means with factors better than $(1+\frac2e)$ and $(1+\frac8e)$, respectively (even without capacities) \cite{Cohen-AddadG0LL19}.
There is a $4+\eps$ \fpt-approximation for uniform capacitated $k$-center \cite{jaiswal_et_al:LIPIcs.ITCS.2024.65,GOYAL2023190}\footnote{In fact, \cite{GOYAL2023190} obtained a $2$-approximation for capacitated $k$-center under \emph{soft} assignments (where different centers can be co-located). In contrast, in this paper we consider only \emph{hard} assignments (where we can open only a single center at each point). One can use \cite{GOYAL2023190} to obtain a $4$-approximation for uniform capacitated $k$-center w.r.t. hard assignments. See \Cref{subsec:related} for a further discussion.\label{footnote:GJ23}}. 

For capacitated \sor (\csr) (see \Cref{tb:results} for a summary of previous and new results) Inamdar and Varadarajan \cite{inamdar_et_al:LIPIcs.ESA.2020.62} constructed an \fpt~time algorithm providing an $28$-approximation for  \sor with uniform capacities. This was later improved to $4+\eps$ by Bandyapadhyay, Lochet, and Saurabh
\cite{bandyapadhyay_et_al:LIPIcs.SoCG.2023.12}, and finally to $3+\eps$ by  Jaiswal, Kumar, and Yadav \cite{jaiswal_et_al:LIPIcs.ITCS.2024.65}. For non-uniform capacities, Jaiswal \etal obtained $(4+\sqrt{13}+\eps)\approx 7.61$ approximation in \fpt~ time, improving a previous approximation of $15+\eps$ \cite{bandyapadhyay_et_al:LIPIcs.SoCG.2023.12}.

From the lower bound side, Jaiswal \etal \cite{jaiswal_et_al:LIPIcs.ITCS.2024.65} showed that assuming \eth\footnote{Informally, \eth\ (Exponential Time Hypothesis), says that there is no subexponential time algorithm for $3$-SAT.\label{foot:eth}}, there is some constant $\beta>0$, such that any $\beta$-approximation for \csr  (with non-uniform capacities) requires $2^{\Omega(k/\poly \log k)} n^{O(1)}$ time.
The first contribution of this paper is to show that \csr is APX hard for an explicit factor and even with uniform capacities. That is compared to \cite{jaiswal_et_al:LIPIcs.ITCS.2024.65} we removed the \eth~assumption,  used only uniform capacities, and showed an explicit factor of inapproximability. 
This is in contrast to the uncapacitated version which is believed to admit a PTAS (assuming $NP\not\subseteq QP$).
\begin{restatable}[APX-hard]{theorem}{SORAPX}\label{thm: nphard}
For every constant $\epsilon >0$, it is NP-hard to approximate Uniform \csr to within a factor of $(1+1/e-\epsilon)$.
\end{restatable}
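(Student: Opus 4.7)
The plan is to prove APX-hardness via a gap-preserving reduction from a bounded-occurrence APX-hard problem, most naturally \satb (Max E3-SAT-5): there exist constants $B,\beta<1$ for which it is NP-hard to distinguish satisfiable formulas from those where at most $\beta m$ clauses can be satisfied. Given such a formula $\phi$ on $n$ variables and $m=\Theta(n)$ clauses, I would build the Uniform \csr instance as the shortest-path metric of a graph whose vertices fall into three groups: literal vertices $\ell_i^+,\ell_i^-$ for each variable $x_i$; clause vertices $q_j$ for each clause $c_j$; and ``filler'' vertices co-located with each literal vertex to pad the point count. Weight-$1$ edges join $q_j$ to each literal vertex of a literal that satisfies $c_j$; the two literal vertices of the same variable are joined by a length-$2$ path. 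I set $k=n$ (one cluster per variable) and tune the uniform capacity $U$ so that, in the intended solution, each cluster holds one literal vertex together with its filler and the $O(B)$ clauses that literal satisfies.

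For completeness, given a satisfying assignment $\sigma$, open a cluster at $\ell_i^{\sigma(x_i)}$ for every $i$. Each clause vertex is at distance $1$ from at least one chosen center, and a Hall-type matching argument distributes clauses among centers while respecting the capacity, yielding a clustering of sum of radii at most $n$. For soundness, I first argue structurally---using the filler mass to penalize any center placed outside a literal vertex---that every cluster center must lie at a literal of the variable assigned to it. The chosen literals define an assignment of $\phi$, and by the \satb gap hypothesis, at least $(1-\beta)m$ clauses are unsatisfied; each such clause vertex is at distance $\ge 3$ from every chosen center. The uniform capacity caps the number of unsatisfied clauses absorbed by any one cluster at $O(U)$, forcing $\Omega((1-\beta)m/U)=\Omega(n)$ distinct clusters to have radius at least $3$. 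The sum of radii is therefore at least $(1+\alpha')\,n$ for some constant $\alpha'>0$, yielding the desired multiplicative gap $\alpha=1+\alpha'>1$.

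The main obstacle I foresee is making the uniform capacity actually bite on the soundness side. Without it, an adversary could dump all unsatisfied clause vertices into a single oversized cluster, whose one inflated radius contributes only an additive $O(m)$ rather than a multiplicative $\Omega(1)$ factor to the sum. The capacity $U$ must therefore be small enough to prevent such lumping yet large enough to make the YES-side clustering feasible; balancing these two requirements is the crux of the construction. A secondary delicate point is forcing every near-optimal center to sit at a literal vertex---my plan handles this by attaching enough filler mass to each literal so that any center placed at a clause or filler vertex pays radius at least $2$ just to cover that mass, which summed across $n$ clusters would already exceed the completeness bound.
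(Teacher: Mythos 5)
Your construction has a fatal structural flaw: it does not actually create a gap. First, the completeness bound of $n$ is unreachable. There are $2n$ filler-carrying locations (one per literal vertex) but only $k=n$ clusters, and your capacity $U\approx 2f+O(1)$ forces each cluster to absorb two filler groups; a cluster centered at a chosen literal must therefore also swallow the opposite literal's filler at distance $2$, so the intended solution costs about $2n$, not $n$. Second, and worse, the claim that ``every near-optimal center must sit at a literal of its variable'' is false, and cheap \emph{assignment-independent} solutions exist: a center at the middle vertex of the length-$2$ path of variable $x_i$ covers both literal vertices of $x_i$ and all their co-located filler at radius $1$, and letting roughly $m/5$ of these clusters grow to radius $2$ absorbs every clause vertex (each clause is at distance $2$ from the middle vertex of any of its variables, and each variable occurs in only $5$ clauses, so capacities are respected). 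This solution costs about $n+m/5=4n/3$ whether or not $\phi$ is satisfiable, so the YES and NO cases are not separated and the reduction yields no constant gap. Finally, even setting this aside, your soundness count of $\Omega((1-\beta)m/U)=\Omega(n)$ radius-$3$ clusters requires $U=O(1)$, while the filler-penalization idea requires $U\gtrsim f$ with $f$ large; these two requirements are irreconcilable as stated.

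The paper's reduction (also from \satb, with $k=N$) resolves exactly these issues by a different local structure. Each variable yields a \emph{block} that is a clique consisting of its two assignment points together with $w=10N$ auxiliary points; clause points form a clique among themselves (so the aspect ratio is $5$); and the capacity $U=w+2+5$ is \emph{saturated} by a single block, leaving slack exactly $5$ for clause points. Saturation is what makes the capacity bite: counting forces $|S|=k$, no zero-radius clusters, essentially one center per block (with few ``cheating/bad'' centers, which can be relocated at cost $\le 5$ each thanks to the bounded diameter), and any center of radius $\ge 2$ can serve only $O(1)$ clause points beyond its own block, so unsatisfied clauses cannot be dumped. Because the block is a clique, the chosen assignment point covers the whole block, including the ``wrong'' assignment point, at radius $1$, giving completeness exactly $k$ — precisely the feature your length-$2$ path plus co-located filler destroys, since in your graph covering the opposite literal costs radius $2$ while a middle-vertex center covers both literals at radius $1$ without encoding any assignment.
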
   

Earlier, Bandyapadhyay \etal \cite{bandyapadhyay_et_al:LIPIcs.SoCG.2023.12} showed that assuming \eth$^{\ref{foot:eth}}$, no \fpt \ algorithm can solve \csr exactly (even with uniform capacities).
However, until this point, nothing ruled out the existence of an \fpt-Approximation Scheme (\fpt-AS) or PAS for Parameterized Approximation Scheme\footnote{Such algorithms find $(1+\eps)$-approximation in time $f(k,\eps) n^{g(\eps)}$ for some fixed functions $f$ and $g$.}.
The second contribution of this paper is to show that assuming \gapeth$^{\ref{foot:gapETH}}$, no such \mbox{FPT-AS} exists.

\begin{restatable}[No FPT-AS]{theorem}{FPTLB}
	\label{thm:gapethhard}
	Assuming \gapeth, for every $\eps>0$ and any function $f$, there is no $f(k)n^{o(k)}$ time algorithm that approximates Uniform \csr to within a factor of $(1+1/e-\epsilon)$.
\end{restatable}
Note that if we allow $n^{O(k)}$ time, then we can exactly solve \sor~using brute-force, even with non-uniform capacities (see~\cref{foot: bruteforce for sor}).
Therefore, \Cref{thm:gapethhard} not only rules out \fpt-approximation algorithms for achieving approximation better than factor of $(1+1/e-\epsilon)$ for this problem but also implies that in this context, essentially the best algorithm is the naive brute-force algorithm, which runs in  $n^{O(k)}$ time.

Next, as the best we can hope for is a constant factor approximation in \fpt~time, we turn to improving this factor. The main result of the paper is a $(3+\sqrt{8}+\eps) \approx 5.83$ \fpt-approximation for non-uniform \csr, significantly improving the present best~\cite{jaiswal_et_al:LIPIcs.ITCS.2024.65} factor of $(4+\sqrt{13}+\eps)\approx 7.61$. 
The theorem also mentions cluster capacities and other objectives. These will be explained in the following sub-sections.

\begin{restatable}[Main Theorem]{theorem}{MainFPTUB}\label{thm: nuni csr}
	There is a (deterministic) \fpt-approximation algorithm that finds a $(3+2\sqrt{2} + \epsilon)$-approximation for Non-uniform \csr for any $\epsilon>0$, and runs in time $2^{O(k^2\log k + k \log (k/\epsilon))} n^{O(1)}$. Furthermore,  the algorithm yields  $(3+2\sqrt{2} + \epsilon)$-approximation for non-uniform capacities even when the objective is a monotone symmetric norm of the radii. \\Both results hold also w.r.t. cluster capacities.
\end{restatable}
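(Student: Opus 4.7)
\textbf{Proof plan for \Cref{thm: nuni csr}.}
The plan is to follow the now-standard FPT framework for capacitated sum-of-radii (\cite{bandyapadhyay_et_al:LIPIcs.SoCG.2023.12,jaiswal_et_al:LIPIcs.ITCS.2024.65}) and obtain the improved factor through a more delicate trade-off in the center-selection step. The algorithm proceeds in three conceptual phases: (i) guess a structural skeleton of the optimum (leaders and radii), (ii) set up a capacitated assignment problem parametrised by a tuning constant $\alpha>0$, and (iii) solve it exactly (via an ILP / min-cost flow). The approximation guarantee will then follow by choosing $\alpha$ to balance two competing error terms.

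\textbf{Step 1: Leader and radius guessing.}
First, I would apply color-coding (or the sampling lemma used in \cite{bandyapadhyay_et_al:LIPIcs.SoCG.2023.12}) to enumerate, in $2^{O(k\log k)}$ time, a list of candidate tuples $(q_1,\dots,q_k)$ with the property that, for the \emph{correct} guess, each $q_i$ lies in the optimal cluster $C_i^\star$, so that $\delta(q_i,c_i^\star)\le r_i^\star$ and $\delta(q_i,p)\le 2r_i^\star$ for every $p\in C_i^\star$ (the leader property). Next, I would guess each radius $r_i^\star$ up to a $(1+\epsilon)$ factor; a standard pre-processing on pairwise distances restricts us to $O(\log n/\epsilon)$ candidates per cluster, and these guesses can be compressed into the $2^{O(k\log(k/\epsilon))}$ term. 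Overall enumeration cost is $2^{O(k^2+k\log(k/\epsilon))}n^{O(1)}$ (the $k^2$ coming from the additional guesses needed to handle non-uniform capacities; see below).

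\textbf{Step 2: Capacitated assignment with the tuning parameter $\alpha$.}
Fix a guess $(q_i,r_i)_{i\le k}$ and a parameter $\alpha$ to be optimised. For each $i$, let $B_i=\ball(q_i,\alpha\cdot r_i)$ be the candidate-centre ball for cluster $i$. I would construct an ILP / flow instance with a center-variable $x\in B_i$ of capacity $U_x$ on one side, a point-variable $p$ (demanding one unit) on the other side, and an edge from $p$ to $x\in B_i$ whenever $\delta(p,x)\le \rho\cdot r_i$, for a target stretch $\rho=3+2\sqrt2+\epsilon$. The additional $2^{O(k^2)}$ enumeration time lets us guess, for each pair $(i,j)$, whether the optimal centre $c_i^\star$ coincides with (or lies inside the ball of) $c_j^\star$'s guessed leader, which is needed to tackle non-uniform capacities in the flow network (as in the Jaiswal--Kumar--Yadav construction). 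A feasible integral solution will exist for the correct guess, which we show in the next step; solving the ILP/flow then yields a valid clustering of cost at most $\rho\cdot\opt$.

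\textbf{Step 3: Feasibility via a balanced triangle-inequality argument.}
Assume the correct guess $(q_i,r_i)$. For each $i$ set $x_i:=c_i^\star$ if $\delta(q_i,c_i^\star)\le \alpha\, r_i^\star$; otherwise reroute cluster $i$ to a neighbouring cluster that has a valid center inside its ball. In the first case, any $p\in C_i^\star$ has $\delta(p,x_i)\le r_i^\star\le \rho\cdot r_i$. In the second case, say cluster $i$ is rerouted to cluster $j$ via a centre $x_j\in B_j$; then by two applications of the triangle inequality,
\[
\delta(p,x_j)\le\delta(p,q_i)+\delta(q_i,q_j)+\delta(q_j,x_j)\le 2r_i^\star+\bigl(2/\alpha\bigr)\cdot r_j^\star\cdot(\text{stretch})+\alpha r_j^\star,
\]
because the rerouting is only triggered when $c_i^\star\notin B_i$, which forces a \emph{large} $r_j^\star$ relative to $r_i^\star$ along the rerouting chain. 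A clean accounting shows that the stretch incurred in the worst case is $\alpha+2/\alpha+3$; minimising over $\alpha$ yields $\alpha=\sqrt2$ and an optimal stretch of $3+2\sqrt2$. The capacity of $x_j$ suffices because the rerouting is designed so that no center is assigned more points than its original optimal load plus the rerouted load, which is still bounded by $U_{x_j}$ by construction.

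\textbf{Main obstacle and extensions.}
The hard part is the rerouting analysis under non-uniform capacities: one has to argue that the directed reroute graph (on clusters not satisfying the $\alpha$-ball condition) decomposes into paths ending at a cluster whose center does lie in its ball, so that capacities can be transferred along a bounded-length chain without additional blow-up. This is where the $2^{O(k^2)}$ guess over cluster pairs is used. For the monotone symmetric norm extension, I would additionally guess the sorted profile of the optimal radii (up to $(1+\epsilon)$), and observe that the rerouting preserves the sorted-radii profile up to a $(3+2\sqrt2+\epsilon)$ multiplicative factor coordinate-wise; monotonicity and symmetry of the norm then yield the same multiplicative guarantee on the objective. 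Cluster (as opposed to centre) capacities are handled verbatim by attaching the capacity to the sink of the flow instance rather than to the centre nodes.
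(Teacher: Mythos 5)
Your proposal takes a fundamentally different route from the paper, and it has genuine gaps that I do not think the sketch, as written, can close.

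The paper's algorithm for non-uniform \csr is iterative and greedy, not flow-based: it processes the optimal clusters in non-decreasing order of radius, maintains the invariant that every still-unprocessed cluster has all of its points intact, and for the current cluster $C_i$ it searches for a \emph{good dense ball} $D_i$ of radius $r_i$ in the surviving point set (\Cref{algo:good dense ball}). The analysis then splits into two cases depending on whether the anchor (smallest-radius cluster intersecting $D_i$) has radius at most or greater than a threshold. In the second case (\texttt{EXCHANGE}), the new cluster $C_i'\subseteq D_i$ steals points from clusters $C_t$ reachable from $D_i$, and those $C_t$ are compensated by claiming back an equal number of points from the original $C_i$ — crucially, this is safe because all the stealers have \emph{larger} radius than $r_i$, so the borrowed $C_i$-points are nearby for them. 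The factor $3+2\sqrt2$ comes from balancing the Type-1 cost $(\alpha+2)r_i$ against the Type-3 cost $\frac{8}{\alpha-1}r_t+3r_t$, giving $\alpha=1+2\sqrt2$. Your proposal instead guesses a skeleton via color-coding, opens a capacitated flow with candidate-center balls $B_i=\ball(q_i,\alpha r_i)$, and analyses feasibility by chain rerouting; you minimise $\alpha+2/\alpha+3$ to get $\alpha=\sqrt2$. The two optimization problems are different functions that happen to coincide at $3+2\sqrt2$ (substituting $\alpha\mapsto 2\alpha+1$ maps your $\alpha=\sqrt2$ to the paper's $1+2\sqrt2$, but the two objectives $\alpha+2/\alpha+3$ and $2\alpha+3$ agree only at that one point); this coincidence is suspicious and suggests the triangle-inequality bookkeeping may be off.

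Concretely, the two places where I think your argument would fail as sketched are these. First, the \emph{capacity feasibility} claim: "the capacity of $x_j$ suffices because the rerouting is designed so that no center is assigned more points than its original optimal load plus the rerouted load, which is still bounded by $U_{x_j}$ by construction." This is circular. In non-uniform \csr, an arbitrary point $x_j\in B_j$ need not have $U_{x_j}\ge|C_j^\star|$, let alone $\ge|C_j^\star|$ plus rerouted demand. The paper has to work hard for exactly this: it maintains an availability predicate (\Cref{def: available point}) and, in \FGDB, optimises $\min\{U_{y_j},|\ball_{P'}(y_j,r_i)|\}$ rather than raw density, precisely so that the chosen centre provably has enough capacity. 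Nothing in your Step 2 or Step 3 establishes this, and "attaching the capacity to the sink" for cluster capacities, while correct in form, does not rescue the node-capacitated case, where the feasibility obstruction is at the centre vertices. Second, the \emph{chain decomposition} claim: you assert the reroute graph "decomposes into paths ending at a cluster whose centre lies in its ball" and that rerouting "forces a large $r_j^\star$ relative to $r_i^\star$", but give no mechanism that enforces this. Without a monotone order on the reroute chain, the $\delta(q_i,q_j)$ term does not telescope and you could accumulate stretch along the chain; the paper avoids this by processing in non-decreasing radius order and using the anchor's minimality, which is what makes the Type-2 exchange one-shot rather than recursive. Finally, the $(2/\alpha)r_j^\star\cdot(\text{stretch})$ term in your display is not a bound but a recursion, and you never argue why it bottoms out; the clean accounting to $\alpha+2/\alpha+3$ is asserted, not derived. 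Until these three points are resolved — availability of a high-capacity centre inside each guessed ball, termination/boundedness of the reroute chain, and a closed (non-self-referential) triangle-inequality bound — the proposal does not constitute a proof.
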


\begin{table}[ht!]
	\setlength\doublerulesep{0.4cm} 
	\centering
	\begin{tabular}{|>{\centering}m{0.2\textwidth}|c|c|c|c|}
		\hline
		\textbf{Capacities}                                    & \textbf{Norm}                           & \textbf{Approx. factor  }                                                    & \textbf{Run time}        & \textbf{Reference}                                                     \\ \hline
		\multicolumn{5}{|c|}{\sor}                                                                                                                                                                                                                                             \\ \hline
		\multicolumn{1}{|c|}{}                               & \multicolumn{1}{c|}{}                               & \multicolumn{1}{c|}{$28$}                    & $2^{O(k^2)} n^{O(1)}$                               &~\cite{inamdar_et_al:LIPIcs.ESA.2020.62} \\ \cline{3-5} 
		\multicolumn{1}{|c|}{}                               & \multicolumn{1}{c|}{}                               & \multicolumn{1}{c|}{$4+\epsilon$}                    & $2^{O(k \log (k/\epsilon))} n^{O(1)}$                               &~\cite{bandyapadhyay_et_al:LIPIcs.SoCG.2023.12} \\ \cline{3-5} 
		\multicolumn{1}{|c|}{}                               & \multicolumn{1}{c|}{\multirow{-3}{*}{$\ell_1$}}          & \multicolumn{1}{c|}{{ $3+\epsilon$}}            & \multicolumn{1}{c|}{}               & \multicolumn{1}{c|}{}           \\ \cline{2-3} 
		\multicolumn{1}{|c|}{}                               & \multicolumn{1}{c|}{}  &                      \multicolumn{1}{c|}{$(1+\epsilon)(2 \cdot 4^{p-1}+1)^{1/p}$} &\multirow{-2}{*}{$2^{O(k \log (k/\epsilon))} n^{O(1)}$}                       & \multirow{-2}{*}{~\cite{jaiswal_et_al:LIPIcs.ITCS.2024.65}} \\ \cline{3-5} 
		\multicolumn{1}{|c|}{}                      & \multicolumn{1}{c|}{\multirow{-2}{*}{$\ell_p$}} & {\cellcolor[HTML]{EFEFEF} $(1+\epsilon)(2\cdot 3^{p-1}+1)^{1/p}$} & {\cellcolor[HTML]{EFEFEF}$2^{O(k^2+k \log (k/\epsilon))} n^{O(1)}$} & {\cellcolor[HTML]{EFEFEF} \Cref{thm: uni}}                             \\ \cline{2-5} 
		\multicolumn{1}{|c|}{\multirow{-8}{*}{Uniform}}                    & \multicolumn{1}{c|}{{general}} & {\cellcolor[HTML]{EFEFEF} $3+\epsilon$} &{\cellcolor[HTML]{EFEFEF}$2^{O(k^2+k \log (k/\epsilon))} n^{O(1)}$}  & {\cellcolor[HTML]{EFEFEF} \Cref{thm: uni}}                             \\ \hline
        {\footnotesize(Non-Uni-)}Hardness            & general & $\beta>1$ & $2^{\Omega(\frac{k}{\polylog k})}\cdot\poly(n)$ (\eth)  & \cite{jaiswal_et_al:LIPIcs.ITCS.2024.65}\\\hline
        
		\multirow{3}{*}{\begin{tabular}[c]{@{}c@{}} Hardness\\ Uniform \end{tabular}}                            & \multicolumn{1}{c|}{\multirow{3}{*}{general}}& exact & $n^{\Omega(k)}$ (\eth)  & \cite{bandyapadhyay_et_al:LIPIcs.SoCG.2023.12}\\\cline{3-5}
		& &{ \cellcolor[HTML]{EFEFEF}} & {\cellcolor[HTML]{EFEFEF}NP-hard}    &{\cellcolor[HTML]{EFEFEF} \Cref{thm: nphard}}\\\cline{4-5}

		& & {\cellcolor[HTML]{EFEFEF} \multirow{-2}{*}{$(1+1/e-\eps)$}} & {\cellcolor[HTML]{EFEFEF}$n^{\Omega(k)}$ (\gapeth)}  & {\cellcolor[HTML]{EFEFEF}\Cref{thm:gapethhard}}\\\hline
		
		\multirow{3}{*}{\begin{tabular}[c]{@{}c@{}}Non-Uniform \end{tabular}}                                                 & $\ell_1$                            & $15+\epsilon$                &$2^{O(k^2 \log (k/\epsilon))} n^{O(1)}$                                 & ~\cite{bandyapadhyay_et_al:LIPIcs.SoCG.2023.12}             \\ \cline{2-5} 
		& $\ell_p$                       & $4+\sqrt{13}+\epsilon \approx 7.61$     &$2^{O(k^3+k \log (k/\epsilon))} n^{O(1)}$                                   &~\cite{jaiswal_et_al:LIPIcs.ITCS.2024.65}                   \\ \cline{2-5} 
		& {general} & {\cellcolor[HTML]{FBE4E2} $3+\sqrt{8}+\epsilon \approx 5.83$}       &{\cellcolor[HTML]{FBE4E2}$2^{O(k^2 \log k+k \log (k/\epsilon))} n^{O(1)}$}          & {\cellcolor[HTML]{FBE4E2} \Cref{thm: nuni csr}}                             \\ \hline

		\hline \hline    
		\multicolumn{5}{|c|}{\sod}                                                                                                                                                                                                         \\ \hline

		\multicolumn{1}{|c|}{}                               & \multicolumn{1}{c|}{}                               & \multicolumn{1}{c|}{$6+\epsilon$}                                    &{$2^{O(k \log (k/\epsilon))} n^{O(1)}$}                  &~\cite{jaiswal_et_al:LIPIcs.ITCS.2024.65} \\ \cline{3-5} 
		\multicolumn{1}{|c|}{}                               & \multicolumn{1}{c|}{\multirow{-2}{*}{$\ell_1$}}          & \multicolumn{1}{c|}{{\cellcolor[HTML]{EFEFEF} $4+\epsilon$}}                  &{\cellcolor[HTML]{EFEFEF}$2^{O(k \log (k/\epsilon))} n^{O(1)}$}             & {\cellcolor[HTML]{EFEFEF} \Cref{thm: uni dia}}           \\ \cline{2-4} 
		\multicolumn{1}{|c|}{}                               & \multicolumn{1}{c|}{}                               & \multicolumn{1}{c|}{$(2+\epsilon)(2 \cdot 4^{p-1}+1)^{1/p}$}                  &{$2^{O(k \log (k/\epsilon))} n^{O(1)}$}        &~\cite{jaiswal_et_al:LIPIcs.ITCS.2024.65}                                       \\ \cline{3-5} 
		\multicolumn{1}{|c|}{}                               & \multicolumn{1}{c|}{\multirow{-2}{*}{$\ell_p$}}     & \multicolumn{1}{c|}{{\cellcolor[HTML]{EFEFEF} $(2+\epsilon)(2^{p-1}+1)^{1/p}$}}   &{\cellcolor[HTML]{EFEFEF}$2^{O(k \log (k/\epsilon))} n^{O(1)}$}     & {\cellcolor[HTML]{EFEFEF} \Cref{thm: uni dia}}           \\ \cline{2-5} 
		
		\multicolumn{1}{|c|}{\multirow{-5}{*}{Uniform}}      & \multicolumn{1}{c|}{{general}} & \multicolumn{1}{c|}{{\cellcolor[HTML]{EFEFEF} $4+\epsilon$}}    &{\cellcolor[HTML]{EFEFEF}$2^{O(k \log (k/\epsilon))} n^{O(1)}$}                          & {\cellcolor[HTML]{EFEFEF} \Cref{thm: uni dia}}           \\ \hline
		{Non-Uniform} & \multicolumn{1}{c|}{{general}} & \multicolumn{1}{c|}{{\cellcolor[HTML]{EFEFEF} $7+\epsilon$}}    &{\cellcolor[HTML]{EFEFEF}$2^{O(k^2\log k+ k\log (k/\epsilon))} n^{O(1)}$}                          & {\cellcolor[HTML]{EFEFEF} \Cref{thm: nuni csd}}           \\ \hline
	\end{tabular}
	\caption{\small Summary of our results and previous results for \csr and \csd. 
		The results for \csd are implicit in the corresponding references. Furthermore, \csd makes sense only for the cluster capacities. While \csr with the $\ell_\infty$ objective corresponds to the \ckcen problem, \csd with the $\ell_\infty$ objective corresponds to the \ckdia problem. While our algorithms are deterministic for non-uniform capacities, the algorithms for the uniform capacity case are randomized.
        The inapproximability result, assuming \gapeth,  rules out any $f(k)n^{o(k)}$ time algorithm   that can approximate  uniform \csr (even with $\ell_1$ objective) to a factor better than $(1+1/e)$.
        }
	\label{tb:results}
\end{table}

\subsection{Other Norm Objectives}\label{subsec:Norms}
In the \sor problem the goal is to choose $k$ balls covering all the metric points such that the sum of radii $\sum_ir_i$ is minimized. A natural generalization is to optimize alternate objectives. Specifically, given a norm $\|\cdot\|:\R^k\rightarrow\R_{\ge0}$, the task is to choose balls covering all metric points so as to minimize the norm of the radii vector $\|(r_1,\dots,r_k)\|$.
A canonical example is the $\ell_p$ norm where $\|(r_1,\dots,r_k)\|_p=\left(\sum_{i=1}^kr_i^p\right)^{1/p}$. Note that the $\ell_1$ norm objective is exactly the \sor problem, while the $\ell_\infty$ norm objective ($\|(r_1,\dots,r_k)\|_\infty=\max_{i\in[1,k]}r_i$) is the $k$-center problem. The other $\ell_p$-norm objectives interpolate between these two fundamental problems in clustering. In particular, they capture objectives such as the sum of squared radii,\footnote{Sum of squared radii is essentially equivalent to the $\ell_2$ objective: an $\alpha$-approximation algorithm for the $\ell_2$ objective yields an $\alpha^2$-approximation to the sum of squared radii.} which has  received attention in practice, especially in wireless network applications where it naturally models power consumption~\cite{alt2006minimum,liu2022primal}. The strength of such generalizations has been recently studied~\cite{chakrabarty2019approximation,10353074}, which not only  unified the existing \fpt-AS for several different problems such as $k$-median, $k$-means, and $k$-center, but also lead to \fpt-AS for advanced problems such as priority $k$-center~\cite{li2003asymmetry,bajpai2021revisiting,plesnik1987heuristic}, $\ell$-centrum~\cite{LocationTheory,LocationScience}, ordered $k$-median\cite{byrka2018constant,braverman2019coresets}, and Socially Fair $k$-median~\cite{anthony2010plant,bhattacharya2014new,pmlr-v134-makarychev21a,ghadiri2021socially}, many of which were previously unresolved.

Jaiswal \etal \cite{jaiswal_et_al:LIPIcs.ITCS.2024.65} constructed an algorithm for the uniform capacitated \sor with $\ell_p$ norm of radii with approximation factor  $(1+\eps)(2\cdot4^{p-1}+1)^{1/p}$ in \fpt~time. In particular, this implies a $(3+\eps)$ \fpt-approximation for uniform capacitated \sor $(p=1)$ and $(4+\eps)$ \fpt-approximation$^{\ref{footnote:GJ23}}$ for uniform capacitated $k$-center $(p=\infty)$.
In our work, we generalize and improve these results. 
Specifically, we obtain a $(3+\eps)$ \fpt-approximation w.r.t. any monotone symmetric norm (see \Cref{sec:prelims} for definitions) of radii, thus generalizing the $\ell_p$ norm objective of \cite{jaiswal_et_al:LIPIcs.ITCS.2024.65}. Further, we also improve the approximation ratio for the $\ell_p$ norm objective to  $(1+\epsilon) (2\cdot 3^{p-1}+1)^{1/p}$.

\begin{restatable}[Uniform]{theorem}{UnifromMonSymm}\label{thm: uni}
	There is a randomized algorithm for Uniform \csr, oblivious to the objective, that runs in $2^{O(k^2+k \log (k/\epsilon))} n^{O(1)}$ time, and with probability at least $3/5$ returns a solution ${\rm Sol}$, such that ${\rm Sol}$ is a $(3+\epsilon)$-approximation w.r.t. any monotone symmetric norm objective (simultaneously). 
	Furthermore, for $1< p <\infty$, ${\rm Sol}$ is a $(1+\epsilon) (2\cdot 3^{p-1}+1)^{1/p} $-approximation w.r.t. the $\ell_p$ norm objective.
\end{restatable}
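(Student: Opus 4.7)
The plan is to produce, via FPT enumeration, a solution whose sorted radii vector $r^{\text{ALG}}$ satisfies $r_i^{\text{ALG}} \le (3+\eps)\, r_i^*$ coordinate-wise, where $r_i^*$ is the sorted OPT radii vector. This per-coordinate guarantee automatically yields a $(3+\eps)$-approximation for \emph{every} monotone symmetric norm, since every such norm depends only on the sorted radii and is monotone in each coordinate; this is exactly what delivers the oblivious-to-objective property. The central combinatorial object is a guessed tuple of ``pivots'' $(p_1,\ldots,p_k)$ with $p_i \in C_i^*$ (one representative per OPT cluster). Given such pivots, the triangle inequality gives $d(q,p_i) \le 2 r_i^*$ for every $q \in C_i^*$, and the uniform capacity $U$ guarantees $|C_i^*| \le U$, so in principle opening $p_i$ as a center and assigning $C_i^*$ to it is feasible and yields radius $2 r_i^*$; an additive $r_i^*$ is incurred because the algorithm must compute the assignment without knowing OPT's partition, and so a capacitated-flow routing may move some points by one extra hop.

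The FPT pivot enumeration is the main technical step. I would use a randomized sampling subroutine: iteratively sample a point from the currently uncovered portion of $P$, add it to the pivot collection, and prune the points now covered by the growing collection of balls $B(p_i, \hat r_i)$ under a discretized guess of the radii. Because capacities are uniform, each still-uncovered OPT cluster contributes a non-trivial fraction of the uncovered points, so each round picks a pivot from a new cluster with probability $\Omega(1/k)$; boosting across $2^{O(k \log(k/\eps))}$ independent runs together with enumerating the $2^{O(k^2)}$ possible discretized radius orderings drives the overall probability that we correctly guess one pivot per cluster above $3/5$. For each surviving guess we solve a capacitated feasibility problem via min-cost flow on the candidate centers with guessed radii, and output the feasible configuration whose sorted radii vector is lexicographically minimal. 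A standard capacitated-reassignment argument yields $r_i^{\text{ALG}} \le (3+\eps) r_i^*$, with the extra $r_i^*$ absorbed by the $\eps$-slack in the geometric discretization of the radii.

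For the $\ell_p$ refinement, a finer per-cluster analysis observes that the $3 r_i^*$ bound decomposes as a sum $r_{j_1}^* + 2 r_{j_2}^*$, where $j_1$ is the OPT cluster of the pivot and $j_2$ is the OPT cluster of the point ``borrowed'' during capacitated reassignment. Applying the convexity inequality $(a+2b)^p \le 3^{p-1}(a^p + 2b^p)$ at the level of each ALG cluster, and then charging each OPT radius $r_j^*$ at most once with coefficient $1$ and at most twice with coefficient $3^{p-1}$ across the sum (this bounded multiplicity comes from the bipartite structure of the reassignment flow), one obtains $\sum_i (r_i^{\text{ALG}})^p \le (2\cdot 3^{p-1}+1) \sum_j (r_j^*)^p$, which is the claimed $\ell_p$ factor after taking $p$-th roots.

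The principal obstacle I foresee is ensuring the pivot enumeration succeeds for \emph{every} OPT cluster, including the smallest ones, since naive uniform sampling over $P$ is doomed on the smallest cluster. The fix is the aggressive pruning of covered points after each pivot, so the next round effectively samples only from the still-uncovered clusters, together with enumerating the more structured object ``pivot indices plus an ordering of discretized radii'' rather than raw pivot tuples, paying $2^{O(k^2)}$ rather than $n^k$. Once this succeeds, the oblivious-to-norm guarantee is immediate from outputting the lexicographically minimal sorted feasible radii vector, while the sharper $\ell_p$ bound requires the additional charging analysis outlined above.
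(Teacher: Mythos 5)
Your proposal hinges on a coordinate-wise claim that is in fact false: the sorted radii of the algorithm's solution need \emph{not} satisfy $r^{\mathrm{ALG}}_i \le (3+\eps)\, r^*_i$ entrywise. Consider $k=2$ with one almost-full heavy cluster $C_1$ (radius $r^*_1=10$) and one singleton light cluster $C_2$ (radius $r^*_2=1$), with $U = |C_1|$, so $|C_1|+|C_2|>U$. The paper's solution (and any solution in its candidate family) opens a ball of radius $2r_1+\ell(r_1)=21$ to handle $C_1$ together with the light leader $C_2$, plus a second ball of radius $r_1=10$ (the good dense ball) to offload excess capacity. Sorted ALG radii are $(21,10)$, sorted OPT radii $(10,1)$, so the second coordinate is off by a factor $10$. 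Worse, a coordinate-wise $3$-approximation for this instance would require an ALG cluster of radius at most $3$, i.e.\ locating the unknown singleton $C_2$, which you cannot do in FPT time once it is swallowed by the heavy ball. So the premise that drives your ``oblivious to norm'' conclusion cannot hold. What the paper implicitly uses instead is a \emph{sub-permutation decomposition}: the ALG radius vector is a sum $V_1+V_2+V_3$, where each $V_j$ is a vector whose nonzero entries are a set of \emph{distinct} optimal radii (two copies of $r_i$ for every hit cluster, and one copy drawn from the disjoint collection of leaders and special clusters). Each $V_j$ is coordinate-wise dominated by a permutation of the OPT vector, so monotonicity and symmetry give $\|V_j\|\le\|R\|$, and the triangle inequality yields the $(3+\eps)$ factor for any monotone symmetric norm. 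That is a genuinely different, and weaker, invariant than coordinate-wise domination of sorted radii, and it is what the charging $\sum(3r_i+\ell(r_i))\le 3\sum r_i$ is really encoding.

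There are two further gaps. First, your sampling claim ``each still-uncovered OPT cluster contributes a non-trivial fraction of the uncovered points, so each round picks a pivot from a new cluster with probability $\Omega(1/k)$'' does not follow from uniform capacities: after pruning, one still-uncovered cluster may have $n-1$ remaining points and another may have a single point, so a random draw misses the small one with probability $1-O(1/n)$. The paper's remedy is structurally different: it \emph{samples} to hit the $\mathrm{poly}(1/k)$-fraction heavy clusters, then picks \emph{arbitrary} points from the residual $P'$ while \emph{guessing} which cluster they belong to ($k^{O(k)}$ branches), and for the light clusters that never yield a representative it uses a responsibility map rather than a pivot. This is precisely where the factor becomes $3$ rather than $2$. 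Second, your explanation of the extra $+r^*_i$ term as ``capacitated-flow routing may move some points by one extra hop'' does not match the actual mechanism: if you truly had one pivot per cluster, assigning $C^*_i$ to $p_i$ at radius $2r^*_i$ is capacity-feasible and no hop is needed. The genuine source of the extra radius is (a) serving absorbed light clusters from a heavy ball's center, giving the $2r_i+\ell(r_i)$ term, and (b) opening a good dense ball of radius $r_i$ inside an almost-full cluster to offload an overfull $S_i$. Without the heavy/light split, the responsibility function, and the good-dense-ball offloading, the capacity-violation case is not handled, and the per-coordinate route you sketch has no way to recover.
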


For the special case of $p=\infty$, the $\ell_\infty$ norm objective is simply the $k$-center problem. Here a simple corollary  of \Cref{thm: uni} implies $(3+\eps)$ \fpt-approximation algorithm for $k$-center with uniform capacities, improving the state of the art factor of $4$ due to~\cite{jaiswal_et_al:LIPIcs.ITCS.2024.65,GOYAL2023190}.
Note that for $p \in (1,\infty]$, the approximation factor of our \Cref{thm: uni} is better than that of \cite{jaiswal_et_al:LIPIcs.ITCS.2024.65}. 
In fact, the approximation factor of our algorithm for $\ell_p$ norm objective equals $\max_{\alpha\in[0,1]}\left(\frac{(2+\alpha)^{p}+1}{1+\alpha^{p}}\right)^{\nicefrac{1}{p}}$, which is slightly better than the stated factor. 
For example, it equals $\approx 2.414$ and $\approx2.488$ for  $p=2$ and $p=3$, respectively (instead of the stated $\approx2.65$ and $\approx2.67$). 
Similarly, \cite{jaiswal_et_al:LIPIcs.ITCS.2024.65} explicitly claimed approximation factors of $3$ and $\approx 3.2$, respectively. 
However, if we optimize their final expression, it yields factors $\approx 2.92$ and $\approx3.191$ for $p=2$ and $p=3$, respectively. See~\Cref{rem: better factor for p} for further discussion.

For non-uniform Capacitated \sor, Jaiswal \etal \cite{jaiswal_et_al:LIPIcs.ITCS.2024.65} obtained approximation factor of  $(4+\sqrt{13}+\eps)\approx 7.61$ w.r.t. any $\ell_p$-norm objective in $2^{O(k^3 + k \log (k/\epsilon))} n^{O(1)}$ time.
As stated in \Cref{thm: nuni csr} above, for non-uniform Capacitated \sor, we obtain approximation factor of $(3+\sqrt{8}+\eps) \approx 5.83$ w.r.t. any monotone symmetric norm objective, in $2^{O(k^2 \log k + k \log (k/\epsilon))} n^{O(1)}$ time.
Thus we improve over \cite{jaiswal_et_al:LIPIcs.ITCS.2024.65} on three fronts (see~\Cref{tb:results}): (1) approximation factor, (2) generalizing to any monotone symmetric norm from $\ell_p$ norms, and (3) running time.

\subsection{\sod and Cluster Capacities}\label{subsec:diameters}
A closely related problem is the \sod problem, which has been studied extensively and predates \sor~\cite{HJ87,HJ97,MS89,CRW91,10.5555/642992.642997,CHARIKAR2004417,BehsazS15,10.5555/642992.642997,friggstad_et_al}. 
Here, the goal is to partition the point set $P$ into $k$ clusters $C_1,\dots,C_k$, and the objective is to minimize the sum of cluster diameters $\sum_{i=1}^k\diam(C_i)$, where the diameter of a cluster is $\diam(C_i)=\max_{x,y\in C_i}\delta(x,y)$ the maximum pairwise distance between two cluster points.
Note that, unlike \sor, in this problem, there are no centers representing the clusters.
Furthermore, this problem is NP-hard to approximate to a factor better than $2$ in polynomial time~\cite{10.5555/642992.642997} (unlike \sor, which admits a QPTAS).
A simple observation shows that any $\alpha$-factor approximation for \sor implies $2\alpha$-factor approximation for \sod in a black-box way (and vice-versa).%
\footnote{Given a set of points $P$, denote by $\cR$ and $\cD$ the value of the optimal solutions to \sor and \sod respectively. It holds that $\cR\le \cD\le2\cdot \cR$. Indeed, consider an optimal solution to \sod of cost $\cD$, then by picking arbitrary center in each cluster we obtain a solution for \sor of cost at most $\cD$ (thus $\cR\le\cD$).  On the other hand, given a solution to \sor of cost $\cR$, the clusters induced by the balls constitute a solution to \sod of cost at most $2\cdot\cR$ (thus $\cD\le2\cdot\cR$).\label{foot:SORroSOD}}
This trick has often been used to design approximation algorithms for \sod. 
In fact, the current state-of-the-art algorithms for \sod, including a polynomial-time \((6+\epsilon)\)-approximation~\cite{BuchemERW24} and a quasi-polynomial-time \((2+\epsilon)\)-approximation~\cite{10.5555/1347082.1347172}, are based on this implicit trick by applying it to the polynomial-time \((3+\epsilon)\)-approximation and QPTAS for \sor, respectively.

\paragraph*{Capacitated \sod. } We introduce the problem of Capacitated \sod (\csd). Here we are initially given $k$ capacities $U_1,\dots,U_k$, the goal is to partition the point set $P$ into $k$ clusters $C_1,\dots,C_k$, such that for every $i$, $|C_i|\le U_i$, while the objective is to minimize the sum of cluster diameters. If all the capacities are equal $U_1=\dots=U_k$, we will call the capacities uniform (and otherwise as non-uniform).
Under uniform capacities, the reduction mentioned in \cref{foot:SORroSOD} goes through. Thus an $\alpha$ approximation algorithm to uniform capacitated \sor transfers in a black-box manner into a $2\alpha$ approximation algorithm for the uniform capacitated \sod, with the same running time.
In particular, by using \cite{jaiswal_et_al:LIPIcs.ITCS.2024.65}, one can obtain a $6+\eps$-approximation for uniform capacitated \sod in \fpt \ time.
In fact, similarly to \sor, one can study Capacitated \sod w.r.t. any norm objective, and the reduction will still go through. Thus it follows from \cite{jaiswal_et_al:LIPIcs.ITCS.2024.65} that for any $\ell_p$ norm objective, uniform capacitated \sod admits $2(1+\eps)(4^{p-1}+1)^{1/p}$ \fpt\  approximation.
Similarly, using our \Cref{thm: uni}, we can obtain \fpt time $6+\eps$ approximation for uniform capacitated \sod w.r.t. to any monotone symmetric norm, or 
$2(1+\eps)(3^{p-1}+1)^{1/p}$-approximation w.r.t. $\ell_p$ norm objective.
\sod is a fundamental and important problem. Its capacitated version was not previously explicitly studied simply because there was nothing to say beyond this simple reduction.

In our work, we go beyond this reduction and directly design novel approximation algorithms for capacitated \sod with significantly better approximation factors than twice that of \sor.
\begin{restatable}[Uniform Diameters]{theorem}{uniDiameter}\label{thm: uni dia}
There is a randomized algorithm that given an instance of the Uniform \csd runs in $2^{O(k \log (k/\epsilon))} n^{O(1)}$ time, and with probability at least $3/5$ returns a solution ${\rm Sol}$, such that ${\rm Sol}$ is a $(4+\epsilon)$-approximation w.r.t. any monotone symmetric norm objective (simultaneously). 
Furthermore, for $1< p <\infty$, ${\rm Sol}$ is a $(2+\epsilon) (2^{p-1}+1)^{1/p}$-approximation w.r.t. the $\ell_p$ norm objective.
\end{restatable}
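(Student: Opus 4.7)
My plan is to adapt the algorithmic template of \Cref{thm: uni} (for Uniform \csr) to work directly with diameters, thereby bypassing the factor-$2$ loss of a black-box \sor-to-\sod reduction. The algorithm enumerates, in $2^{O(k\log(k/\eps))}$ total configurations, two types of guesses: (i) the sorted optimal diameters $D_1^*\ge\cdots\ge D_k^*$ to within a $(1+\eps)$-factor (using a geometric grid of $O(\log(k/\eps)/\eps)$ scales after normalizing by $D_1^*$ and truncating values below $\eps D_1^*/k$, which are negligible in any normalized symmetric norm), and (ii) a tuple $(\ell_1,\dots,\ell_k)$ of \emph{leaders}, one from each optimal cluster $C_j^*$, drawn via a bootstrapped random sampling procedure of size $O(k\log(k/\eps))$ that iteratively samples from regions not yet covered by previously chosen leaders' balls, ensuring every optimal cluster---including those of sub-linear size---is hit with probability at least $3/5$.

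For each joint guess, I would set up a capacitated bipartite matching in which a point $p$ is eligible to be assigned to leader $\ell_j$ iff $d(p,\ell_j)\le(1+\eps)D_j^*$, and each leader accepts at most $U$ points. When the guess is correct, feasibility is certified by the identity assignment $p\in C_j^*\mapsto \ell_j$, which respects both the capacity bound $|C_j^*|\le U$ and the distance bound $d(p,\ell_j)\le D_j^*$ (since $\ell_j\in C_j^*$ and $\diam(C_j^*)\le D_j^*$). Solving the matching via min-cost flow produces clusters $C_1',\dots,C_k'$.

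For the analysis, I decompose $\diam(C_j')\le r_j^{(1)}+r_j^{(2)}$, where $r_j^{(1)},r_j^{(2)}$ are the distances from $\ell_j$ to the two endpoints of the diametral pair of $C_j'$, each capped by $(1+\eps)D_j^*$ via the matching constraint. A careful bipartite matching between new clusters and optimal clusters---accounting for how the capacity-induced reassignments displace points across leaders---allows charging $r_j^{(1)}$ and $r_j^{(2)}$ to \emph{distinct} optimal diameters $D^*_{\sigma_1(j)},D^*_{\sigma_2(j)}$. Aggregating via the convexity inequality $(a+b)^p\le 2^{p-1}(a^p+b^p)$ then yields $\|(\diam(C_j'))_j\|_p\le(2+\eps)(2^{p-1}+1)^{1/p}\cdot\|(D_j^*)_j\|_p$ for $1<p<\infty$; the general $(4+\eps)$ bound for any monotone symmetric norm follows as the worst case of this $\ell_p$-formula over $p\in[1,\infty]$ (attained at the endpoints $p=1$ and $p=\infty$). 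The output solution is constructed independently of the norm, yielding the simultaneous guarantee.

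I expect the main obstacle to be two-fold. First, I must establish that the bootstrapped random sampling reliably finds a leader from every optimal cluster, including clusters of size $\ll n/k$, within the $O(k\log(k/\eps))$ sample budget; this requires carefully tracking the success probability across $O(\log(k/\eps))$ iterative sampling rounds, where each round draws from the complement of previously-covered regions. Second, the $\ell_p$-tailored bound requires engineering the matching so that the two radii $r_j^{(1)},r_j^{(2)}$ in the diameter decomposition are charged to distinct optimal diameters, which is what upgrades the crude triangle-inequality factor of $2$ to the sharper $(2^{p-1}+1)^{1/p}$ via the convexity inequality. With these ingredients in place, the claimed runtime and approximation ratios follow.
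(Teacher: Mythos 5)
The central obstacle you acknowledge---reliably sampling a leader from \emph{every} optimal cluster, including sub-linear ones---is in fact insurmountable, and this is where the proposal breaks. A small cluster $C_t$ may be entirely contained in $\ball(\ell_j, D_j^*)$ for some already-sampled leader $\ell_j$ of a larger cluster $C_j^*$. Your bootstrapped procedure samples from the complement of covered regions, so it will \emph{never} pick a point of $C_t$, no matter how many rounds it runs. If you could hit every cluster, the identity assignment $p\in C_j^*\mapsto\ell_j$ would already give diameter $\le 2D_j^*$ per cluster, i.e., a $(2+\eps)$-approximation; the reason the statement only claims $(4+\eps)$ is precisely that light clusters swallowed by other balls cannot be hit, and assigning them to their swallowing cluster costs extra radius.

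The paper resolves this by sampling only from \emph{heavy} clusters ($|C|>n/20k^3$, hittable with probability $\ge \poly(1/k)$ each), then greedily picking uncovered points and guessing their clusters. Any light cluster never picked is fully covered and can be \emph{charged} to a hit cluster via a ``responsibility function'' $\Gamma$: if $C_t$ intersects $\ball(a_i,d_i)$ then every $p\in C_t$ satisfies $\delta(p,a_i)\le d_i+d_t$, so $a_i$ absorbs $C_t$ at the cost of inflating its radius by the leader diameter $\ell(d_i)$. When the absorbing cluster would exceed capacity (only for ``almost-full'' clusters), a second ball of radius $d_i$ relieves the overflow. Summing gives $\sum(2d_i+2\ell(d_i))+\sum 2d_i\le 4\sum d_i$, and the exponent $(2^{p-1}+1)^{1/p}$ arises from pairing the two diameters $2d_i+2\ell(d_i)$ and $2d_i$ against the two distinct optimal diameters $d_i$ and $\ell(d_i)$ via Minkowski. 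Your proposed decomposition of $\diam(C_j')$ into two radii each $\le(1+\eps)D_j^*$ would directly yield factor $2(1+\eps)$---so introducing an extra $(2^{p-1}+1)^{1/p}$ is not a refinement but an artifact of a cost structure that the proposal never actually produces. In short, the approach fails at the hitting step, and the $\ell_p$ accounting is inconsistent with the factors you claim.
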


Finally, we proceed to consider the more challenging problem of non-uniform Capacitated \sod. Here we obtain a $7+\eps$ approximation w.r.t. to any monotone symmetric norm objective.
\begin{restatable}[Non-Uniform Diameters]{theorem}{GeneralCapDiameter}\label{thm: nuni csd}
For any $\epsilon>0$, there is a (deterministic) algorithm running $2^{O(k^2 + k \log (k/\epsilon))} n^{O(1)}$-time and returns a $(7 + \epsilon)$-approximation for Non-uniform \csd w.r.t. any monotone symmetric norm objective.
\end{restatable}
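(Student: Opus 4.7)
The proof will extend the \fpt\ framework used for \Cref{thm: nuni csr} to the sum-of-diameters setting. The algorithm enumerates, in $2^{O(k^2+k\log(k/\eps))}$ time, a list of candidate configurations, each consisting of a tuple of representatives $(p_1,\dots,p_k)\in P^k$ (one per cluster), guesses $\tilde{d}_1,\dots,\tilde{d}_k$ of the cluster diameters up to a multiplicative $(1+\eps)$ factor, and a bijection assigning the $k$ capacities $U_1,\dots,U_k$ to the representatives. The coreset and iterated-sampling machinery used in the proof of \Cref{thm: nuni csr} (see also \cite{bandyapadhyay_et_al:LIPIcs.SoCG.2023.12, jaiswal_et_al:LIPIcs.ITCS.2024.65}) carries over essentially verbatim: it guarantees that at least one enumerated configuration satisfies $p_i\in C_i^*$ (so $p_i$ is within $d_i^*$ of every point of $C_i^*$) and $d_i^*\le \tilde{d}_i\le(1+\eps)d_i^*$, where $C_1^*,\dots,C_k^*$ is the optimal clustering.

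For each configuration, I would build a bipartite transportation instance: a point $p\in P$ may be assigned to representative $p_i$ iff $\delta(p, p_i)\le \alpha\tilde{d}_i$ for a threshold $\alpha=\tfrac{7}{2}+O(\eps)$, subject to the guessed capacity on each representative. Feasibility is tested via max-flow in $n^{O(1)}$ time. A re-routing argument, following the strategy of \cite{jaiswal_et_al:LIPIcs.ITCS.2024.65} and \Cref{thm: nuni csr}, shows that for the correct configuration the instance is feasible: starting from the optimal assignment $\sigma^*$ and swapping points along alternating paths to realign with the guessed capacity-to-representative bijection, one can maintain $\delta(p, p_{\sigma(p)})\le \alpha\tilde{d}_{\sigma(p)}$ throughout. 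For any resulting feasible $\sigma:P\to[k]$, the triangle inequality implies that for every $p,q\in\sigma^{-1}(i)$, $\delta(p, q)\le 2\alpha\tilde{d}_i\le (7+\eps)d_i^*$, so the diameter vector of the output is pointwise bounded by $(7+\eps)$ times the optimal one. Summing yields the $(7+\eps)$-approximation for sum-of-diameters; the monotonicity and symmetry of the norm (cf.\ \cite{chakrabarty2019approximation}) lift this pointwise bound to every monotone symmetric norm objective. The algorithm returns the best clustering over all configurations.

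The hard part will be establishing feasibility of the transportation instance with $\alpha=\tfrac{7}{2}$ under non-uniform capacities. Were the representatives free to be re-matched to the correct capacities (as in the uniform case treated in \Cref{thm: uni dia}), a much smaller $\alpha\approx 2$ would suffice and the approximation would be close to $4$. The mismatch between the enumerated bijection and the true optimum assignment forces a re-routing whose cost must be absorbed into $\alpha$; the tight analysis combines the slack required for this re-routing (analogous to the $\tfrac{3+\sqrt{8}}{2}$ factor appearing for \csr) with the diameter-versus-radius doubling inherent in bounding $\delta(p,q)$ through the representative, delivering the final factor of $7+\eps$. The running time is dominated by the enumeration step, giving the stated bound $2^{O(k^2+k\log(k/\eps))}n^{O(1)}$.
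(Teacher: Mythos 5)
There is a genuine gap at the very first step. Your plan hinges on enumerating, in $2^{O(k^2+k\log(k/\eps))}n^{O(1)}$ time, a list of configurations guaranteed to contain representatives $p_i\in C_i^*$ for \emph{every} optimal cluster, claiming the sampling/coreset machinery from the capacitated setting ``carries over verbatim.'' It does not: in the non-uniform (and even uniform) capacitated setting, optimal clusters can be arbitrarily small (even singletons), so no sampling or FPT-size enumeration can hit all $k$ of them --- the tuple space is $P^k$, i.e.\ $n^{k}$ choices, and nothing smaller is known to suffice. The uniform-capacity algorithms (including \Cref{thm: uni dia}) only sample from \emph{heavy} clusters and treat light clusters greedily, and the whole design of the non-uniform algorithm behind \Cref{thm: nuni csr} (anchors, good dense balls found by \FGDB, the \texttt{EXCHANGE} step) exists precisely to avoid ever needing a point inside each optimal cluster. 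So the premise that one may assume $p_i\in C_i^*$ for all $i$ is unavailable, and with it the rest of your argument collapses.

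The second difficulty you yourself flag --- feasibility of the capacitated transportation instance with threshold $\alpha=\tfrac72$ under the guessed capacity-to-representative bijection --- is also not established; a generic ``alternating path re-routing'' is not known to work here. The paper obtains feasibility by a structurally different route: clusters are processed in non-decreasing order of diameter while maintaining the invariant that unprocessed clusters are untouched; when the anchor of the current dense ball $D_i=\ball(y_i,d_i)$ has small diameter ($\beta=\nicefrac52$ case), the cluster is served within diameter $2(\beta+1)d_i=7d_i$, and otherwise a new cluster is opened inside $D_i$ (which contains at least $|C_i|$ points) and the clusters it intersects reclaim the displaced points of $C_i$, giving partitioned clusters of diameter at most $3d_i+6\max_{C_t\in\pof(C_i)}d_t\le 7d_i$; capacities are then realized by the flow/matching lemma applied to the resulting supersets. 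The factor $7$ arises from this exchange structure, not from a radius threshold of $\tfrac72$ around per-cluster representatives, and the deterministic FPT running time comes from branching on the bounded guesses in these subroutines rather than from enumerating representative tuples.
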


\paragraph{Cluster Capacities.}\label{para:clusterCapacities}
Bandyapadhyay \etal \cite{bandyapadhyay_et_al:LIPIcs.SoCG.2023.12} introduced the problem of Capacitated \sor where each point $p\in P$ has a capacity $U_p$, and a cluster centered in $p$ can contain at most $U_p$ points. This corresponds for example to a scenario where we want to construct water wells, and a well constructed at point $p$ can serve up to $U_p$ clients.
However, an equally natural problem is where one is given $k$ capacities $U_1,\dots,U_k$, and the goal is to construct $k$ clusters with arbitrary centers, such that the $i$'th cluster contains at most $U_i$ points. This problem is similar to our capacitated \sod, and can correspond to a scenario where one wants to distribute already existing $k$ water tanks (for example in a tent village during a festival).
We refer to the two versions of the problem as node capacities, and cluster capacities, respectively.
Note that for uniform capacities the two versions 
coincide.
Further, note that the reduction from \sor to \sod mentioned in \cref{foot:SORroSOD} holds in the capacitated version w.r.t. cluster capacities. 
Our results on node capacities in \Cref{thm: nuni csr} hold for cluster capacities as well ($3+2\sqrt{2}+\eps$ approximation for any monotone symmetric norm objective). No results on \sor with non-uniform cluster capacities were previously known.

\section{Overview of Techniques}\label{sec: tech}
In this section, we highlight our conceptual and technical contributions. 
Due to space constraints, 
we begin with our main theorem (\Cref{thm: nuni csr}) and present the key ideas behind our algorithm for non-uniform capacities in~\Cref{ss: overview nuni}. 
Then, in~\Cref{ss: overview uni}, we delve into the algorithmic ideas for uniform capacities.
Note however, that from a pedagogical viewpoint, it is easier to begin reading first \Cref{ss: overview uni}, followed by~\Cref{ss: overview nuni}.
We now set up basic notations required for the exposition.

\textit{Notations.}
We denote by $\cI$ an instance of capacitated \sor or capacitated \sod, depending upon the context. For uniform capacities, $\cI$ consists of metric space $(P,\delta)$, a positive integer $k$, and a uniform capacity $U > 0$. The elements in $P$ are called  points. 
For non-uniform node capacities, $U$ is replaced by corresponding node capacities $\{U_p\}_{p \in P}$, and for non-uniform cluster capacities, $U$ is replaced by corresponding cluster capacities $\{U_1,\cdots, U_k\}$.
We denote by $\cC=\{C_1,\cdots,C_k\}$  an optimal (but fixed) clustering for $\cI$. For ease of analysis, we assume that $|\cC| =k$, otherwise, we can add zero radius clusters to make it $k$, without increasing the cost of $\cI$.   
Note that the clusters in $\cC$ are disjoint.
When $\cI$ is an instance of   capacitated \sor, we denote by $r^*_i$  and $o_i$ as the radius and the center of cluster $C_i \in \cC$, respectively, and let  $O = \{o_1,\cdots, o_k\}$. In this case, we let   $\text{OPT}= r^*_1 + \cdots+ r^*_k$. 
When $\cI$ is an instance of   capacitated \sod, we denote by $d^*_i$ as the diameter of cluster $C_i \in \cC$. In this case, we let  $\text{OPT}= d^*_1 + \cdots+ d^*_k$.
It is known that we can guess in \fpt\ time in $k$, a set ($\epsilon$-approximation) $\{r_1, \cdots, r_k\}$ corresponding to $\{r^*_1, \cdots, r^*_k\}$ such that $\sum_{i \in [k]} r_i \le (1+\eps)\cdot\sum_{i \in [k]}  r^*_i$ and $r_i \ge r^*_i$. Similarly, let  $\{d_1, \cdots, d_k\}$ denote  $\epsilon$-approximation of $\{d^*_1, \cdots, d^*_k\}$. A feasible solution to $\cI$ is a partition of $P$ (along with centers for \sor) such that the capacities are respected.
For a point $p \in P$ and a positive real $r$, denote by $\ball(p,r)$ as the set of points from $P$ that are at a distance at most $r$ from $p$.

\begin{remark}[Point assignment using matching and flows]\label{rem: point assgn}
In this section (and throughout the paper), we will focus on presenting a set of centers and their corresponding radii as our solution. This approach suffices because, given such a set of centers and radii, we can determine in polynomial time whether there exists a corresponding feasible solution (i.e., feasible assignment of points to the centers/clusters) with the same cost. Moreover, we can also find such a solution by defining an appropriate flow problem (see \Cref{thm: feasibility by matching,thm: feasibility by matching dia}).
\end{remark}

\begin{remark}[\fpt\ and bounded guessing]
    We assume that our algorithms have a power to make $\poly(k)$ guesses each with success probability $p(k)$. Such algorithms can be transformed into a randomized \fpt \ algorithms that are correct with constant probability. See~\Cref{rem: bounded guess} for more details.
\end{remark}
\subsection{Non-Uniform Capacitied \sor}\label{ss: overview nuni}
For ease of exposition, in this technical overview we highlight our ideas for cluster capacities (see paragraph \ref{para:clusterCapacities}).
Transitioning to node capacities introduces several more challenges which we will not discuss here. 
Other than the basic \fpt~framework mentioned in the preliminaries above, our algorithm is fundamentally different from these of~\cite{bandyapadhyay_et_al:LIPIcs.SoCG.2023.12,jaiswal_et_al:LIPIcs.ITCS.2024.65}, and hence we do not attempt to compare them (in contrast, for uniform capacities the algorithms are similar).

Consider the following simple and natural strategy of processing clusters in $\cC$ iteratively (see~\Cref{algo:nucsr} for pseudo-code). 
Initially all the clusters in $\cC$ are unprocessed (colored red).
We will process the clusters in $\cC$ in non-decreasing order of their radii, denoted as $C_1, \cdots, C_k$. Let $U_1, \cdots, U_k$ be the corresponding cluster capacities. 
Consider the first iteration when we process $C_1$. Suppose we could find a dense ball $D_1 = \ball(y_1, r_1)$ of radius $r_1$ in $P$ such that $|D_1| \ge |C_1|$.
Then, we could create a cluster $C'_1 \subseteq D_1$ of size $|C_1|$\footnote{Even though $|C_1|$ is unknown, in retrospect, we will be able to use~\Cref{rem: point assgn} to obtain such a clustering.\label{foot: cluster D_i}} with radius $r_1$, matching the optimal cost. 
To make this cluster permanent, we delete the points from $D_1$ to prevent them from being reassigned in later iterations. 
However, this could create problems --- (i) $C'_1$ may contain points from other clusters, so in the future iterations when we process these clusters then the densest ball may not have enough points, and (ii) we need to make sure the points of $C_1$ are taken care by some cluster in our solution. 
Our algorithm is based on the following two key ideas that handle these two issues:

\textbf{Invariant: do not touch the unprocessed.} We ensure that, throughout the algorithm, every unprocessed cluster in $\cC$ has all of its points intact. In other words, during the  processing iteration of $C_i$, all its points are present. 

\textbf{Making progress: Good dense balls.} An immediate implication of the above invariant is that when we process $C_i$, there is a ball of radius $r_i$ containing at least $|C_i|$ points. 
Let $D_i = \ball(y_i, r_i)$ be the densest ball w.r.t. the remaining points during the iteration of $C_i$. 
Note that such a ball contains at least $|C_i|$ points due to  the above invariant.
Consider the smallest radius cluster $C_j \in \cC$ that intersects%
\footnote{In fact, we use a weaker notion of a \textit{reachable} set (see \Cref{def:reach}).} $D_i$, and call $C_j$ the \emph{anchor} for $D_i$. See~\Cref{fig:sfig21}, where $D_i$ is the ball centered at $y$ with radius $r_i$, $C_i=C_5$ and $C_j=C_8$.
Now, consider the extended ball $D'_i := \ball(y_i, r_i + 2r_j)$, and note that $C_j \subseteq D'_i$.  
Now, if $C_i$ intersects $D'_i$, then  $C_i$ is not far from $y_i$ (see~\Cref{fig:sfig22}). 
In this case, we call $D_i$ a good dense ball (see~\Cref{def:good dense ball}), and will be able to process $C_i$. 
Thus, our first goal when processing the cluster $C_i$ is to find a good dense ball (see~\Cref{algo:settle cluster}). 

\textit{Finding good dense ball in \fpt\ time. }
Suppose $D'_i$ does not intersect $C_i$, then we can temporarily delete $D'_i = \ball(y_i, r_i + 2r_j)$ since this ball does not contain any point of $C_i$ (see~\Cref{fig:sfig21}, where $C_i=C_5$ and $C_j = C_8$). 
Furthermore, we end up deleting at least one cluster (specifically, $C_j$), so this process of temporarily deleting balls can repeat at most $k$ times before $C_i$ intersects the extended ball $D'_i$ (see~\Cref{algo:good dense ball}). 
Once we have a good dense ball $D_i = \ball(y_i, r_i)$ for $C_i$, consider the extended ball $D'_i := \ball(y_i, r_i + 2r_j)$, where $r_j$ is the radius of the anchor $C_j$ for $D_i$. There are two cases:

\textit{Case $1$: Hop through the anchors. } Suppose we are lucky, and it turns out that $r_j \le r_i$ (\Cref{algo:settlecluster: type1} of ~\Cref{algo:settle cluster}). Then, $y_i$ can serve all the points of $C_i$ within radius $3r_i + 2r_j \le 5r_i$. See~\Cref{fig:sfig22}, where $C_j = C_3$. We now mark $C_i$ as processed (color black).
Note that, in this case, we do not delete any points.

\textit{Case $2$: Otherwise ``Exchange". } Suppose we are not lucky, and it turns out that $r_j > r_i$ (see~\cref{fig:sfig23} where $r_i=r_5 > r_8=r_j$). In this case, our basic approach is to open the cluster $C_i$ at $D_i$, and thus relieving the clusters intersecting $D_i$ from their responsibility for these points (see~\Cref{algo:exchange cluster}). In exchange, these clusters will become responsible for $C_i$ points. The crux of the argument is that as the anchor $C_j$ has the minimal radius among the clusters intersecting $D_i$, each such cluster $C_q$ has radius larger than $r_i$. Thus from $C_q$'s perspective, the points of $C_i$ are ``nearby''. Hence $C_q$ can accept responsibility for a number of $C_i$ points proportional to the number of $C_q$ points taken (and thus everybody is taken care of).

In more details: we create a new cluster $C'_i \subseteq D_i$ out of the points in $D_i$ (recall $|D_i|\ge|C_i|$ and see~\cref{foot: cluster D_i}). 
To make this assignment permanent, we delete $D_i$, and mark $C_i$ as processed (color black).
However, we  end up taking points from other clusters (and also end up deleting points from these clusters). 
The key observation is that as we process the clusters in non-decreasing order and $r_j > r_i$, $D_i$ does not contain points from clusters processed via Case $1$.
Since we maintain the invariant that the points of unprocessed clusters are not deleted, we have to mark the clusters intersecting $D_i$ as processed (color black). 
However, before marking these clusters as processed, we need to ensure that they are accounted for in our solution. 
To this end, we use a novel idea of exchanging points. 
Since we have created a new cluster $C'_i$ out of $D_i$, any cluster $C_{j'}$ that has lost, say, $n_{j'} > 0$ points in this process, can instead claim back $n_{j'}$ points from the original cluster $C_i$ (depicted by islands in $C_5$ in~\Cref{fig:sfig23}) by paying slightly more cost since, in this case, we have $r_i < r_j \le r_{j'}$. 
Here, we use the fact that $C_j$ is the anchor for $D_i$, and hence $r_j$ is the smallest radius intersecting $D_i$. 
We call such clusters partitioned clusters. Specifically, the radius of the partitioned $C_{j'}$ is at most $\delta(o_{j'}, y_i) + \delta(y_i, p) \le (r_{j'} + r_i) + (r_i + 2r_j + 2r_i) \le 7r_{j'}$, for $p \in C_i$, as $r_j \le r_{j'}$. 
See~\Cref{fig:sfig23} for an illustration. Note that a cluster $C_{j'}$ can be partitioned multiple times by different $C_i$'s (we track the clusters that partitioned $C_{j'}$ by $\pof(C_{j'})$), as shown in~\Cref{fig:sfig24}. 
However, $o_{j'}$ can still serve all the points of the modified $C_{j'}$ within cost $7r_{j'}$, since, in this case, the radii of the clusters partitioning $C_{j'}$ are strictly less than $r_{j'}$. 
While it is hard to find $o_{j'}$, in~\Cref{sec: nonuni}, we show how to find another point that can serve all the points of the partitioned cluster $C_{j'}$ within radius $7r_{j'}$ (see \textbf{for} loop at~\Cref{algo:nucsr for} of~\Cref{algo:nucsr}).

Note that, we get different approximation factors in Case $1$ and Case $2$.
In the technical section, we interpolate between these two factors to obtain an improved approximation factor for the algorithm.

\subsection{Uniform Capacities}\label{ss: overview uni}
Our algorithm builds upon the algorithm of~\cite{jaiswal_et_al:LIPIcs.ITCS.2024.65}.
The algorithm divides the clusters in $\cC$ into two categories: heavy and light.
A cluster $C \in \cC$ is \emph{heavy} if $|C| > \nicefrac{U}{k^3}$; otherwise, it is a \emph{light} cluster.
The intuition behind this partition is the following: some heavy cluster must exist, and given such heavy cluster $C \in \cC$, a randomly selected point from $P$ belongs to $C$ with probability at least $\poly(1/k)$. 
Therefore, with probability at least $k^{-O(k)}$, we can obtain a set $X \subseteq P$ containing a single point $x_i$ from every heavy cluster $C_i\in\cC$.
For every heavy cluster $C_i$, we wish to open a cluster $B_i=\ball(x_i,2r_i)$ that will take care of $C_i$ points (note $C_i\subseteq B_i$).
We will assume that the union of these balls, $\cup_{x_i \in X}B_i$, covers the point set $P$ (this assumption catches the essence of the problem, as we can greedily take care of light clusters not covered by this union). 
Let $\cL \subseteq P$ be the set of points corresponding to the light clusters in $\cC$. For each light cluster $C_t \in \cC$, arbitrarily assign it to a heavy cluster $C_i \in \cC$ such that $C_t$ intersects $B_i$.
Then, for each heavy cluster $C_i$, consider the set $S_i : = C_i \cup (\ball(x_i,2r_i + 2r_{\ell(i)}) \cap \cL)$, where $r_{\ell(i)}$ is the maximum radius of a light cluster assigned to $C_i$. 
Note that $S_i$ contains $C_i$, and all the light clusters assigned to $C_i$.
If $|S_i| \le U$, then we can open cluster $C'_i= S_i$ without violating the uniform capacity. 
For the other case, note that $|S_i| \le (1+\nicefrac{1}{k^2})U$ since there can be at most $k$ light clusters assigned to $C_i$, each of which has at most $U/k^3$ points. 
Let $I \subseteq [k]$ be the indices of heavy clusters, and let $I' \subseteq I$ be the indices of heavy clusters for whom $S_i$ violated the capacity constraint.
\cite{jaiswal_et_al:LIPIcs.ITCS.2024.65} used a very neat matching based argument to construct $|I'|$ balls $\{\hat{C}_i\}_{i\in I'}$ each with a unique radius from $\{r_i\}_{i\in I}$ (the heavy clusters) such that $\cup_{i\in I'}\hat{C}_i$ contains at least $\nicefrac{U}{k^2}$ points from each $S_i$ (for $i\in I'$).
These clusters $\{\hat{C}_i\}_{i\in I'}$ are then used to unload enough points from the overloaded heavy clusters. 
Now, $\{\ball(x_i,2r_i + 2r_{\ell(i)})\}_{i\in I}$ together with $\{\hat{C}_i\}_{i\in I'}$ is a valid feasible solution.
The $\ell_1$ norm cost (\csr cost) of this solution is bounded by
$
\sum_{i \in I} (2r_i + 2r_{\ell(i)}) + \sum_{j \in I} r_j \le 3 \sum_{i \in [k]} r_i \le (3+\eps)\opt,
$
where we used that fact that $C_{\ell(i)}$ is distinct for each $C_i$, and that in $\cup_{i\in I'}\hat{C}_i$ we used the radius of every heavy cluster at most once.
From the other hand, the $\ell_\infty$ cost (\ckcen) can only be bounded by $(4+\eps)\opt$. It is also possible to bound the approximation factor for general $\ell_p$ norm objective.

\begin{wrapfigure}{r}{0.25\textwidth}
	\begin{center}
		\vspace{-20pt}
		\includegraphics[width=0.24\textwidth]{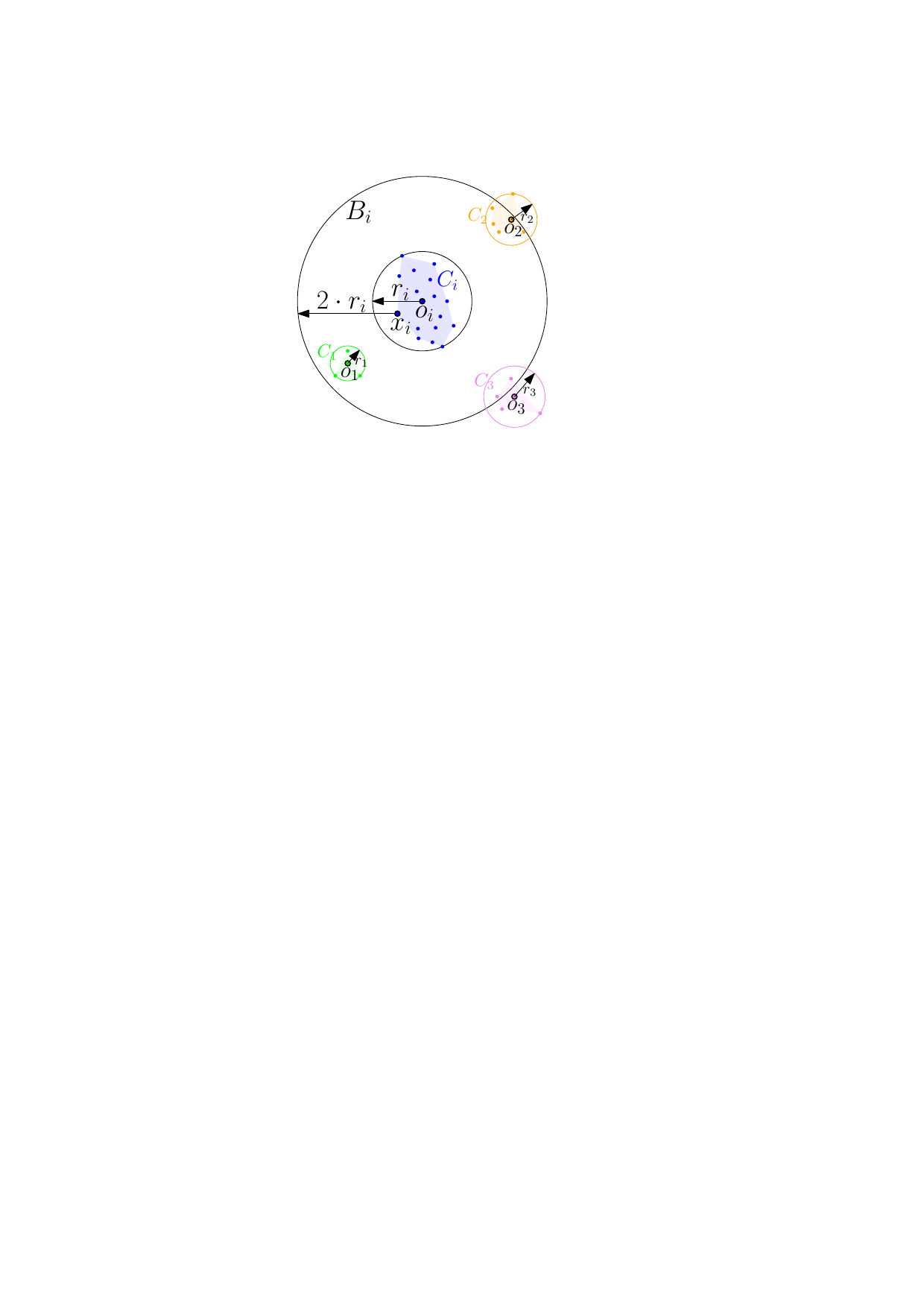}
		\vspace{-5pt}
	\end{center}
	\vspace{-15pt}
\end{wrapfigure}
\paragraph*{Improvement for $\ell_p$ norm objectives.}
Our modification to get a better factor is very small, but leads to a significant improvement.
Specifically, we fine tune the definition of $C_{\ell(i)}$ for $C_i$. 
\cite{jaiswal_et_al:LIPIcs.ITCS.2024.65} assigned each light cluster $C_j$ to an arbitrary heavy cluster $C_i$ such that $C_j$ intersects the extended ball $B_i$, and let $r_{\ell(i)}$ to be the maximum radius of a light cluster assigned to $C_i$.
Instead, we assign each light cluster $C_j$ to an arbitrary  heavy cluster $C_i$ such that $B_i$ contains its center $o_i$, while $r_{\ell(i)}$ remains the maximum radius of a light cluster assigned to $C_i$. 
The crux is that now the ball $\ball(x_i,2r_i+r_{\ell(i)})$ contains the heavy cluster $C_i$ and all the light clusters assigned to it.
In the example on the right we consider a heavy cluster $C_i$, where we sampled a point $x_i\in C_i$ and consider the ball $B_i=\ball(x_i,2r_i)$. $B_i$ intersects $3$ light clusters ($C_1,C_2,C_3$) with radii $r_1<r_2<r_3$. Now $r_{\ell(i)} = r_2$ because $o_2\in B_i$, while $o_3\notin B_i$.
As previously, we open clusters of two types: 
(1) for every heavy cluster $C_i$, a cluster $C'_i$ centered at $x_i$ of radius $2r_i+r_{\ell(i)}$,
and (2) the clusters $\{\hat{C}\}_{i\in I'}$ relieving the extra load.
Overall, we save a factor of $r_{\ell(i)}$ in the radius of the cluster centered at $x_i$. 
Note that this solution yields a $3$ approximation w.r.t. the $\ell_\infty$ norm objective (\ckcen), 
and yields a similar improvement w.r.t. other $\ell_p$ norm objectives.

\subsection{Hardness of \fpt-Approximation for Uniform Capacities} \label{ss: overview hardness}
Our hardness result is based on a polynomial time reduction from \textsc{Max $k$-Coverage} (\maxcov), where, given a universe $\cG$ of elements, a positive integer $k$, and $k$ collections $\{\cC_1,\dots,\cC_k\}$ of subsets of $G$, the task is to find $\{S_1,\dots,S_k\}$ such that $S_i\in \cC_i, i\in [k]$ that maximizes the number of elements covered. It is known that \maxcov is NP-hard to approximate to a factor better than $(1-1/e)$, even when every set in the collection has precisely $|\cG|/k$ elements. Under \gapeth, there is no $\fpt(k)$ algorithm for \maxcov that approximates to a factor better than $(1-1/e)$, even when every set in the collection has precisely $|\cG|/k$ elements

On a high level, given an instance of \maxcov, we create the set-element incidence graph  of the instance. We then add edges between the vertices corresponding to the sets of the same collection. Let $G=(V,E)$ be the resulting graph.
The points of our \csr instance corresponds to the vertices in $V$, and the distance metric is given by the shortest path distance in $G$. Finally, we set the uniform capacity $U=|\cG|/k+m$. It is easy to see that when there are $k$ sets $\{S_1,\dots,S_k\}$ such that $S_i\in \cC_i, i\in [k]$ that cover $\cG$, then we can find a solution for \csr with cost $k$, by selecting the points corresponding to the sets $\{S_1,\dots,S_k\}$ as the centers. For the No case, observe that any solution to \csr must have clusters of size precisely $U$ due to our construction. This simple observation forces the centers to belong to different collections, yielding the result.

\begin{figure}
\begin{subfigure}{.5\textwidth}
  \centering
  \includegraphics[width=\linewidth]{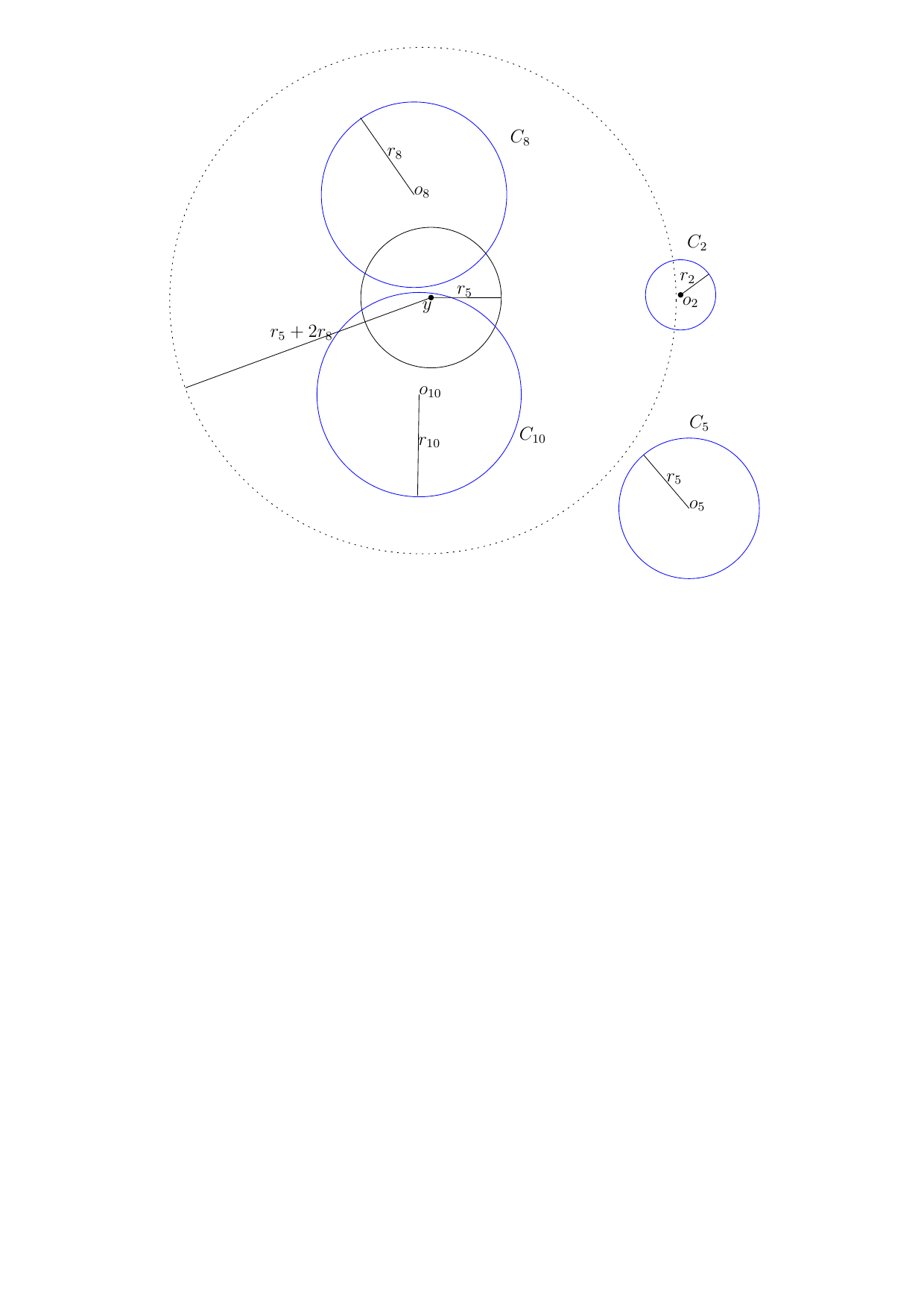}
  \caption{}
    \label{fig:sfig21}
\end{subfigure}%
\begin{subfigure}{.5\textwidth}
  \centering
  \includegraphics[width=.8\linewidth]{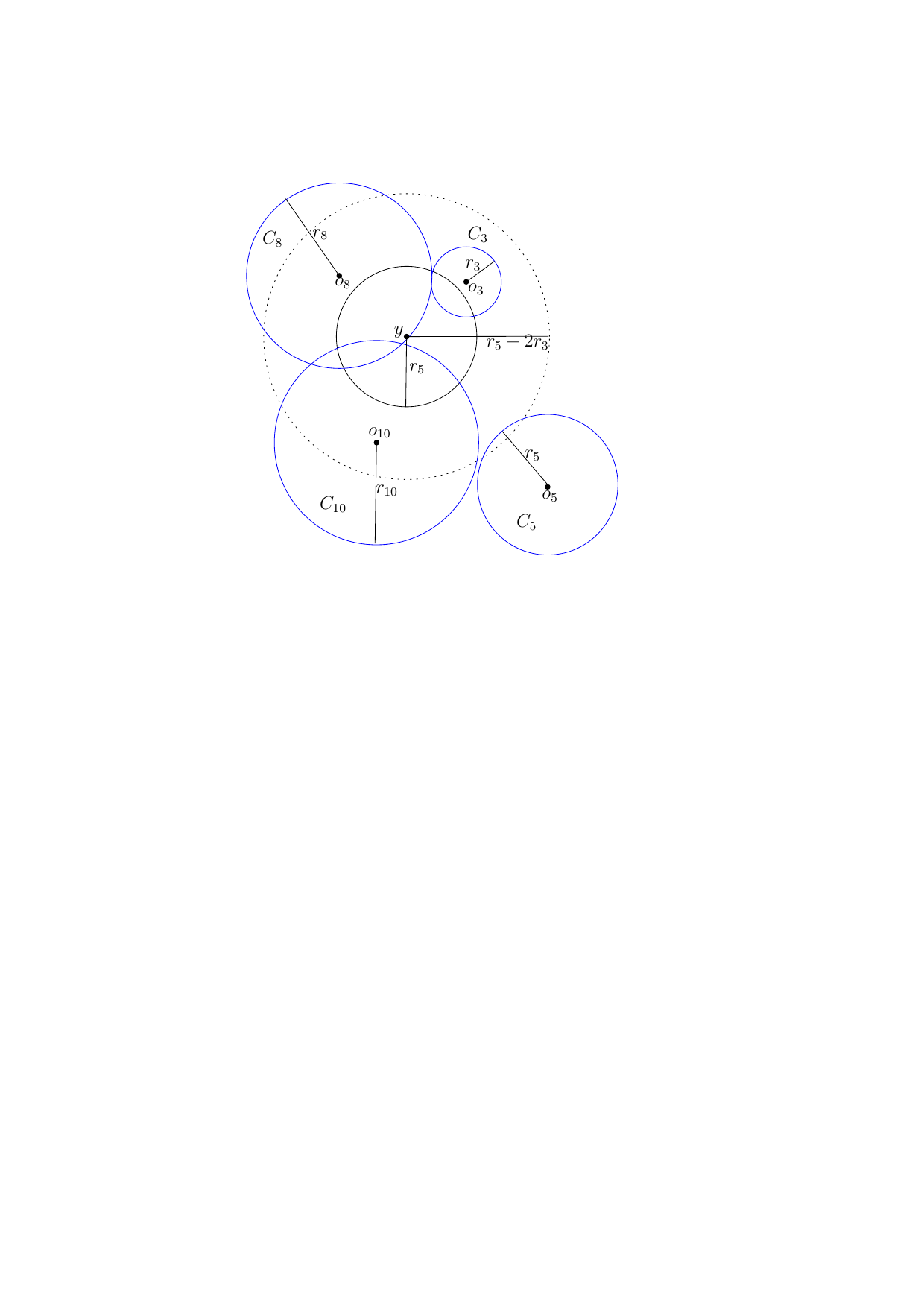}
  \caption{}
  \label{fig:sfig22}
\end{subfigure}
\begin{subfigure}{.5\textwidth}
  \centering
  \includegraphics[width=.9\linewidth]{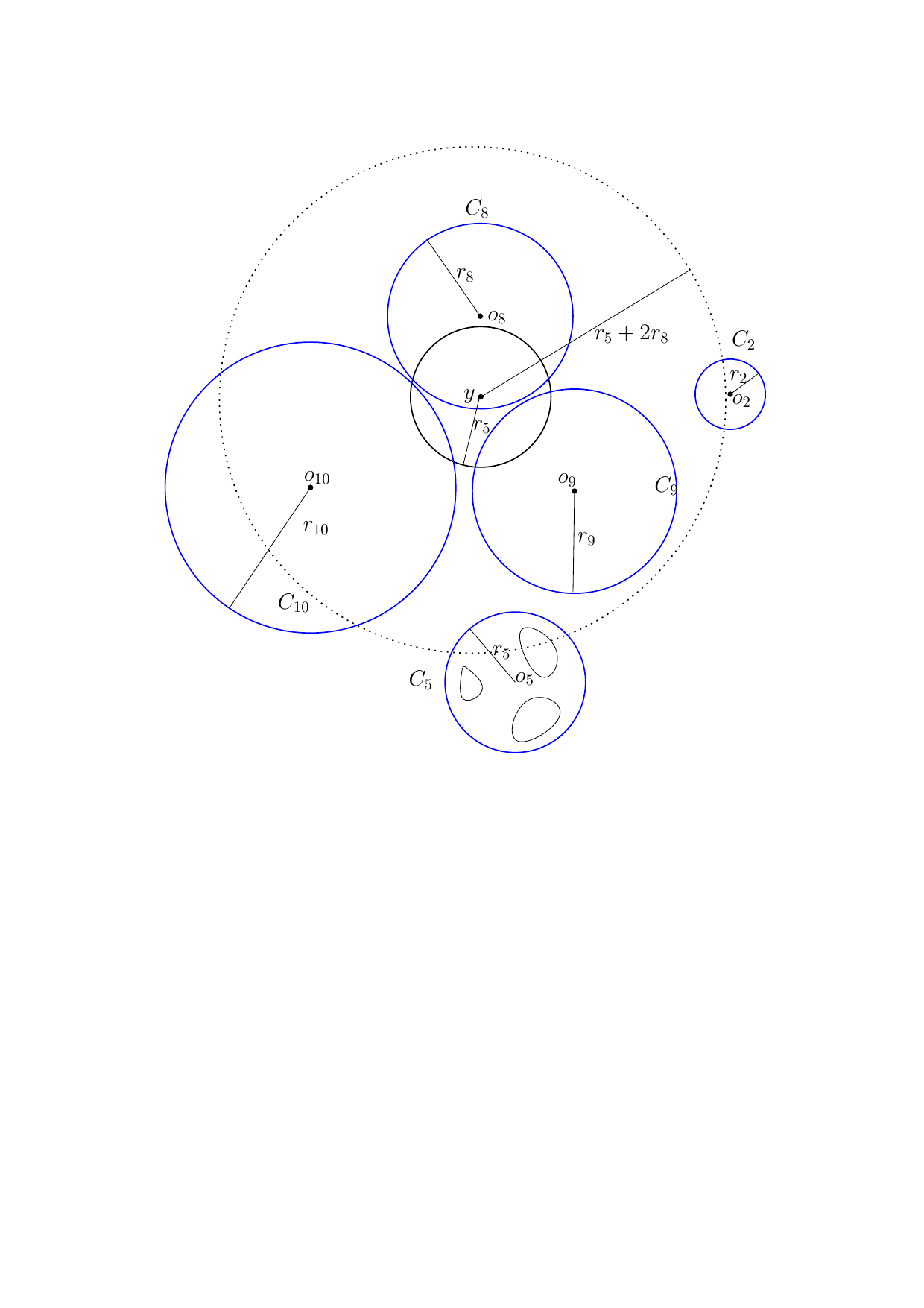}
  \caption{}
    \label{fig:sfig23}
\end{subfigure}
\begin{subfigure}{.5\textwidth}
  \centering
  \includegraphics[width=1.1\linewidth]{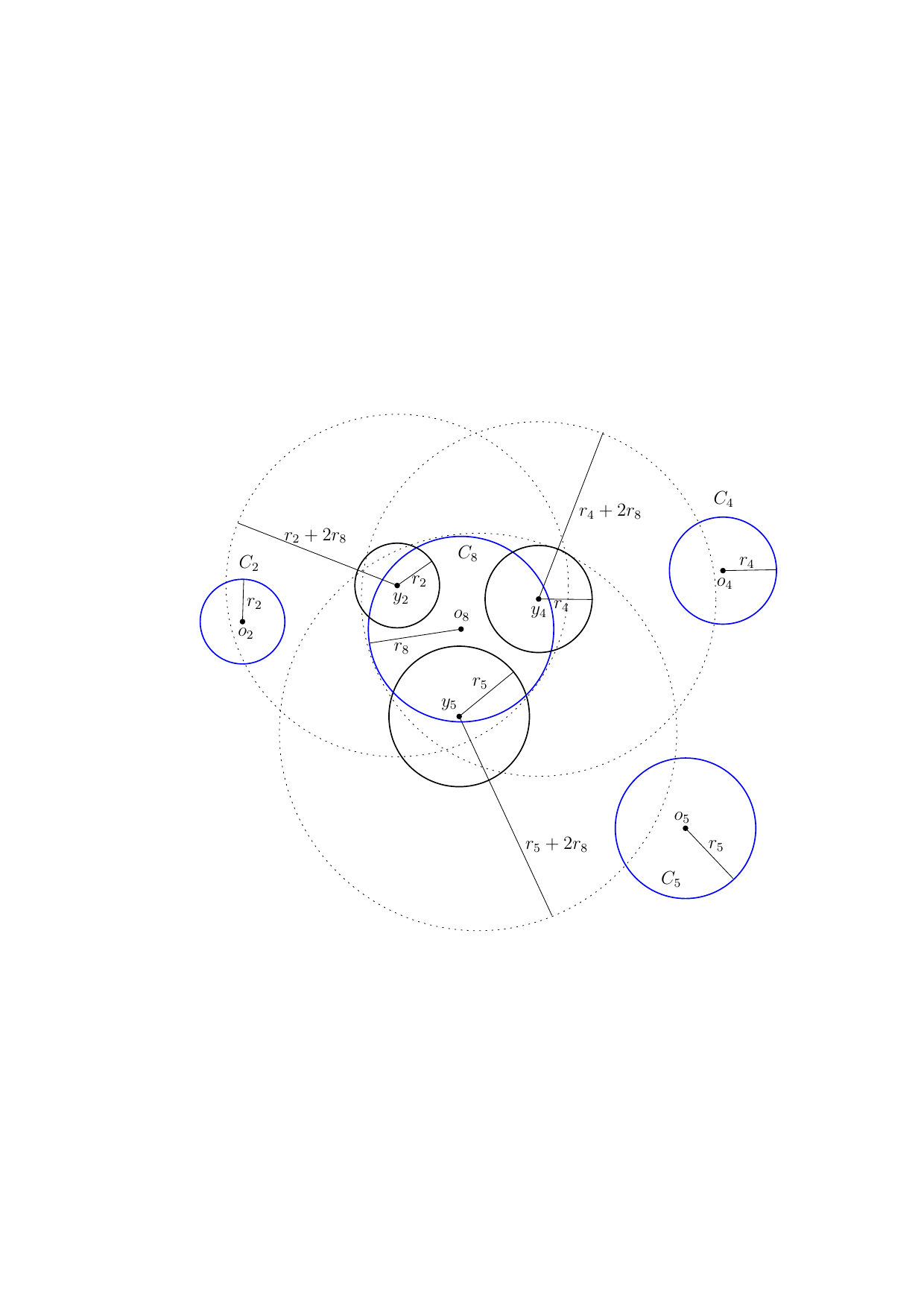}
  \caption{}
    \label{fig:sfig24}
\end{subfigure}
\caption{\footnotesize \Cref{fig:sfig21} shows a dense ball $\ball(y,r_5)$, colored in black, that is not good for cluster $C_5$, because $\ball(y,r_5+2r_8)$ is disjoint from $C_5$.
Hence, it is safe to (temporary) delete $\ball(y,r_5+2r_8)$ without affecting cluster $C_5$, and recursively search for good dense ball for $C_5$ in the remaining points. 
On the other hand, the $\ball(y,r_5)$ of~\Cref{fig:sfig22} is good dense ball for $C_5$.
Furthermore, note that in this case the radius of the anchor $r_3 \le r_5$, and hence $y$ can serve all the points of $C_5$ within distance $3r_5+2r_3 \le 5r_5$. 
In~\Cref{fig:sfig23}, the radius of the anchor is $r_8 > r_5$, and hence the previous argument fails. 
However, in this case, since $\ball(y_,r_5)$ is a good dense ball for $C_5$, so we can open a new cluster $C'_5 \subseteq \ball(y,r_5)$ of appropriate size, instead of serving the points of $C_5$. 
Note that, in this case, the new cluster $C'_5$ ends up taking points from clusters $C_8, C_9$ and $C_{10}$. 
However, these clusters can claim back the lost points from the original cluster $C_5$ (depicted by islands in $C_5$), which is not too far from them. 
In this case, we say that  $C_8, C_9$ and $C_{10}$ are partitioned by $C'_5$. 
\Cref{fig:sfig24} shows cluster $C_8$ partitioned by $C_2,C_4$ and $C_5$, and it has to collect its lost points from the respective clusters.
}
\label{fig:fig2}
\end{figure}

\section{Further Related Work}\label{subsec:related}
We begin with further background on the (uncapacitated) \sor and \sod problems. 
When the point set $P$ is from either $\ell_1$ or $\ell_2$ there is an exact poly-time algorithm \cite{10.5555/1347082.1347172}, while there is also an EPTAS for the Euclidean ($\ell_2$) case \cite{10.5555/1347082.1347172}.
Another interesting case is when the metric between the input points $P$ corresponds to the shortest path metric of an unweighted graph $G$. Here, assuming the optimal solution does not contain singeltons (clusters with only a single point), there is a polynomial time algorithm finding optimal solution \cite{BehsazS15}.
The \sod with fixed $k$ admits exact polytime algorithm \cite{BehsazS15} (see also \cite{HJ87,CRW91}). Further, there is a PTAS for \sod where the input points are from the Euclidean plane \cite{BehsazS15}.
The \sor problem was also studied in the fault tolerant regime \cite{BhowmickIV21}, considering outliers \cite{AhmadianS16}, and in an online (competitive analysis) setting \cite{FotakisK14,CEIL13}.

\textbf{\ckcen} is a special case of capacitated $\ell_p$ norm of radii when $p=\infty$. The uniform \ckcen problem has been explicitly studied by Goyal and Jaiswal~\cite{GOYAL2023190}, and is also implicitly captured by the results of Jaiswal \etal \cite{jaiswal_et_al:LIPIcs.ITCS.2024.65}. Goyal and Jaiswal design a factor $2$ \fpt-approximation for the soft assignment setting, where multiple centers can be opened at a single point, enabling it to serve more than $U$ points. For instance, the algorithm can opt to open two centers at a point $p \in P$, thereby allowing up to $2U$ points to be served from location $p$.
In contrast, similar to this paper, Jaiswal \etal focus on the hard assignment scenario, where at most one center can be opened at a given point, and they present a $(4+\epsilon)$ \fpt-approximation for this case.
We are not aware of any approximation preserving subroutines that transform a soft assignment algorithm to a hard assignment algorithm. However, by replacing the chosen center with any other point in the cluster one can obtain a solution with hard assignments while incurring a factor $2$ multiplicative loss in approximation. 
Consequently, the algorithm of~\cite{GOYAL2023190} can only yield $4$ \fpt-approximation using this approach, which matches the factor of~\cite{jaiswal_et_al:LIPIcs.ITCS.2024.65}.

\textbf{\ckdia} is a special case of capacitated $\ell_p$ norm of diameters problem when $p=\infty$.
Its  uncapacitated version turns out to be computationally very challenging.  While there is a simple $2$-approximation for \kdia~\cite{GONZALEZ1985293}, it is NP-hard to approximate the problem better than factor $2$ in $\ell_\infty$ metric  even for $k=3$~\cite{MEGIDDO1990327}. Very recently, Fleischmann~\cite{fleischmann2024inapproximabilitymaximumdiameterclustering} showed that, even for $k=3$, the problem is NP-hard to approximate better than factors $1.5$ and $1.304$ when the points come from $\ell_1$ and $\ell_2$ metrics, respectively. Given that these hardness results hold for constant $k$ (i.e., even for $k=3$), they also translate to \fpt\ lower bounds. Thus, these hardness results imply that no \fpt\ algorithm can beat these lower bounds for \kdia, unless $\pp = \np$. Of course, the corresponding capacitated version of \kdia is a generalization of \kdia, and hence these \fpt-hardness of approximations also hold for the capacitated version.
For upper bounds, it is interesting to note that the reduction from \sor to \sod mentioned in \cref{foot:SORroSOD} works even when solving \sor w.r.t. soft assignments. Thus using \cite{GOYAL2023190} one can obtain a $4$ \fpt-approximation for the uniform  \ckdia problem, matching our \Cref{thm: uni dia} for this case.

\addtocontents{toc}{\protect\setcounter{tocdepth}{1}}
\section{Preliminaries}\label{sec:prelims}

\subsection{Problems}
In this section, we define the problems of interest. We start with defining capacitated versions of \sor.
\begin{definition}[Uniform \csrfull]
    Given a set $P$ of $n$ points in a metric space with distance $\delta$ and non-negative integers $U$ and $k$, the \emph{Uniform \csrfull (Uniform \csr)} problem asks to find a set of centers $S \subseteq P$ of size $k$ and an assignment function $\sigma : P \rightarrow X$ such that  $|\sigma^{-1}(s)| \le U$ for all $s \in S$ and  the objective $\sum_{s \in S} \rad(s)$ is minimized, where $\rad(s) := \max_{p \in \sigma^{-1}(s)} \delta(s,p)$ is the radius of cluster centered at $s \in S$.
\end{definition}

\begin{definition}[Non-Uniform Node \csr]
    Given a set $P$ of $n$ points in a metric space with distance $\delta$, a non-negative integer $U_p$ for every $p \in P$ and a non-negative integer $k$, the \emph{Non-Uniform Node \csr} problem asks to find a set of centers $S \subseteq P$ of size $k$ and an assignment function $\sigma : P \rightarrow S$ such that  $|\sigma^{-1}(s)| \le U_{s}$ for all $s \in S$ and the objective
    $\sum_{s \in S} \rad(s)$ is minimized, where $\rad(s) := \max_{p \in \sigma^{-1}(s)} \delta(s,p)$.
\end{definition}

\begin{definition}[Non-Uniform Cluster \csr]
    Given a set $P$ of $n$ points in a metric space with distance $\delta$, a set of cluster capacities $\{U_1,\cdots, U_k\}$ and a non-negative integer $k$, the \emph{Non-Uniform Cluster \csr} problem asks to find a set of centers $S = \{s_1,\cdots, s_k\} \subseteq P$ of size $k$ and an assignment function $\sigma : P \rightarrow S$ such that  $|\sigma^{-1}(s_i)| \le U_{i}$ for all $i \in [k]$ and the objective
    $\sum_{i \in [k]} \rad(s_i)$ is minimized, where $\rad(s_i) := \max_{p \in \sigma^{-1}(s_i)} \delta(s_i,p)$.
\end{definition}
Next, we define capacitated version of \sod.

\begin{definition}[Uniform \csdfull]
    Given a set $P$ of $n$ points in a metric space with distance $\delta$
     and non-negative integers $U$ and $k$, the \emph{Uniform \csdfull (Uniform \csd)} problem asks to find $k$ partitions (clusters) $C_1,\cdots,C_k$ of $P$ such that  $|C_i| \le U$ for all $i \in [k]$ and the objective $\sum_{i \in [k]} \diam(C_i)$ is minimized, where $\diam(C_i) := \max_{p,q \in C_i} \delta(p,q)$, is the diameter of cluster $C_i$.
\end{definition}

\begin{definition}[Non-Uniform \csd]
    Given a set $P$ of $n$ points in a metric space with distance $\delta$,  a set of cluster capacities $\{U_1,\cdots, U_k\}$ and a positive integer $k$, the \emph{Non-Uniform \csd} problem asks to find $k$ partitions (clusters) $C_1,\cdots,C_k$ of $P$ such that  $|C_i| \le U_{i}$ for all $i \in [k]$ and the objective
    $\sum_{i \in [k]} \diam(C_i)$ is minimized, where $\diam(C_i) := \max_{p,q \in C_i} \delta(p,q)$.
\end{definition}

\subsection{Norm objectives}
\begin{wrapfigure}{r}{0.18\textwidth}
	\begin{center}
		\vspace{-20pt}
		\includegraphics[width=0.17\textwidth]{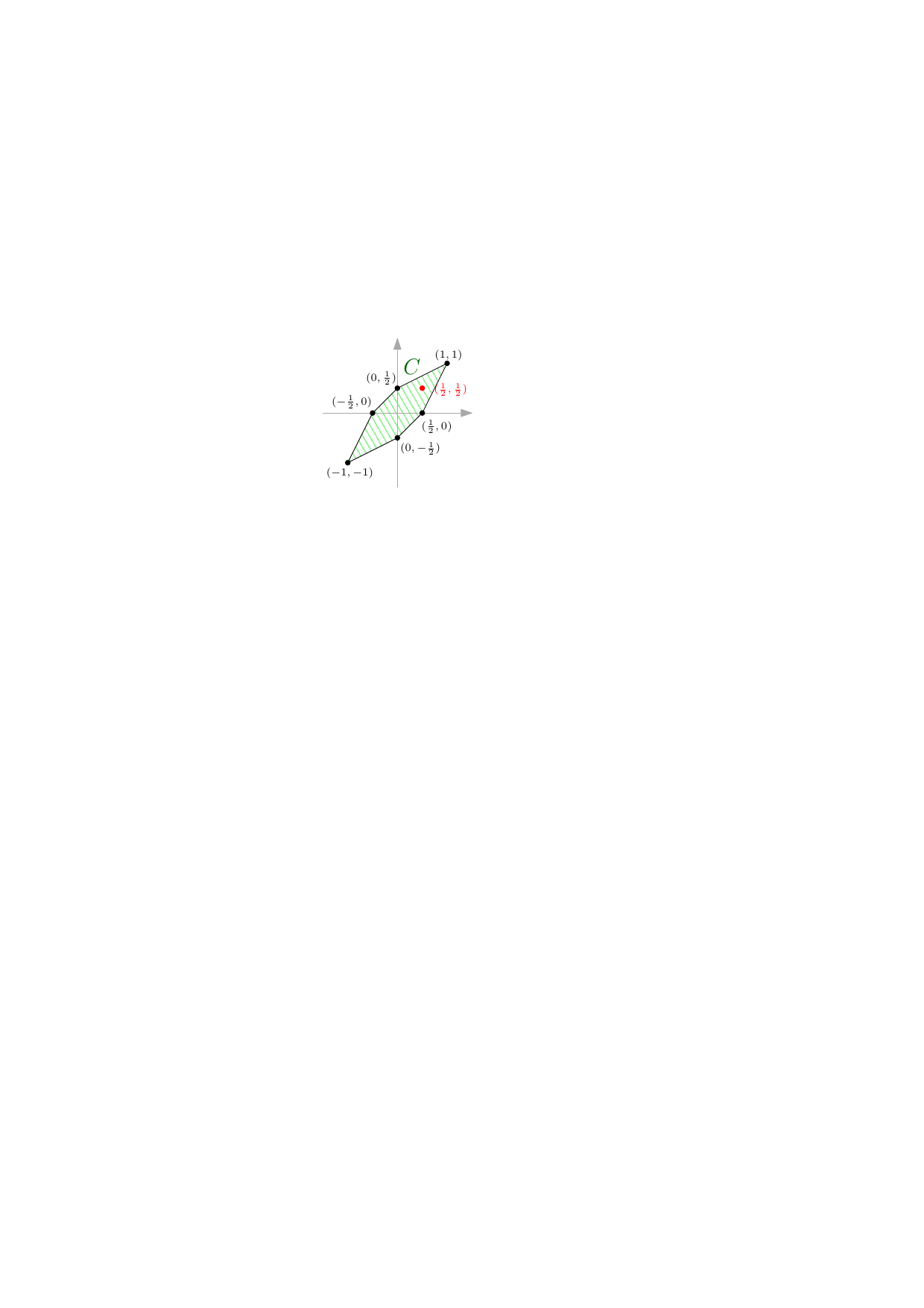}
		\vspace{-5pt}
	\end{center}
	\vspace{-15pt}
\end{wrapfigure}
A norm $\|\cdot\|:\R^k\rightarrow\R_{\ge0}$ is monotone if for every two vectors $(x_1,\dots,x_k),(y_1,\dots,y_k)$ such that $\forall i,~|x_i|\le|y_i|$, it holds that $\|(x_1,\dots,x_k)\|\le \|(y_1,\dots,y_k)\|$. 
A norm $\|\cdot\|$ is symmetric if for every permutation $\pi:[k]\rightarrow[k]$ it holds that $\|(x_1,\dots,x_k)\|=\|(x_{\pi(1)},\dots,x_{\pi(k)})\|$. Interestingly, as pointed out by \cite{chakrabarty2019approximation}, a symmetric norm is not necessarily monotone. Indeed, consider the set $C \subseteq \mathbb{R}^2$, which is the convex hull of the points $\{(1,1), (-1,-1), (0,0.5), (0.5,0), (0,-0.5), (-0.5,0)\}$ (illustrated on the right), and define $\|x\|_C$ to be the smallest $\lambda$ such that $x/\lambda \in C$. Then $\|\cdot\|_C$ is a symmetric norm over $\mathbb{R}^2$, and $\|(0, 0.5)\|_C=1$, while  $\|(0.5, 0.5)\|_C=0.5$, and thus $\|\cdot\|_C$ is not monotone.

In this paper, we generalize the above problems to objective that is a monotone symmetric norm of radii and diameters.

\subsection{Preliminary claims}
In this section, we first define feasible solutions for \csr and \csd. Then, we show how to recover a feasible solution from a set of centers and radii (diameters). Finally, we show how to guess a close approximation of optimal radii and diameters in \fpt \ time.

\subsubsection{Recovering feasible clusters}
We show how to recover feasible clusters given a set of centers and radii (diameters). For simplicity, we work with sum objective, but the same ideas extend to any monotone symmetric norm objective.
\begin{definition}[Feasible solution for \csr]
    Let $\cI=((P,\delta),k,\{U_p\}_{p \in P})$ be an instance of \csr. A \emph{solution} to $\cI$ is given by a pair $(S,\sigma)$, where  $S \subseteq P$ and $\sigma: P \rightarrow S$. Furthermore, we say solution $(S, \sigma)$ is \emph{feasible} for $\cI$ if $|S| \le k$ and $|\sigma^{-1}(s)| \le U_s$, for every $s \in S$. Moreover, the \emph{cost} of a feasible solution  $(S, \sigma)$ is given by $\cost((S, \sigma)) = \sum_{s \in S} \rad(s)$. For a monotone norm objective, $\cost((S, \sigma))$ denotes the norm of the radii.
\end{definition}

\begin{definition}
    Let  $\cI=((P,\delta),k,\{U_p\}_{p \in P})$ be an instance of \csr and let $k' \le k$. Let $S = \{s_1,\cdots, s_{k'}\} \subseteq P$ and let $\{\gamma_1,\cdots, \gamma_{k'}\}$ such that $\gamma_i \in \mathbb{R}_{\ge 0}$, for $i \in [k']$. We say $\cS = \{ (s_1,\gamma_1),\cdots, (s_{k'}, \gamma_{k'})\}$ is \emph{feasible for $\cI$} 
    if there exists an assignment $\sigma: P \rightarrow S$ such that  for all $s_i \in S$ it holds that $|\sigma^{-1} (s_i)| \le U_{s_i}$ and $\rad(s_i)  \le \gamma_i$. Furthermore,  the \emph{cost} of $\cS$ is $\cost(\cS) = \sum_{i \in [k']} \gamma_i$.
\end{definition}

The following lemma states that, given a set of centers and their corresponding radii, we can find a feasible assignment of points to centers that has the same cost without violating the capacity constraints.

\begin{lemma}\label{thm: feasibility by matching}
     Let  $\cI=((P,\delta),k,\{U_p\}_{p \in P})$ be an instance of \csr, $k' \le k$, and let  $\cS \subseteq P \times \mathbb{R}_{\ge 0}, |\cS| = k'$.
     Then,  the following can be done in time $\poly(n)$. We can verify whether or not $\cS$ is feasible for $\cI$, and if  $\cS$ is feasible for $\cI$, then we can  find a feasible solution $(S,\sigma)$ to $\cI$ such that  $\cost((S,\sigma)) \le \cost(\cS)$.
\end{lemma}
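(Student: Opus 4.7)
The plan is to reduce the feasibility question to a bipartite $b$-matching (equivalently, a max-flow) computation, which is a textbook transformation. Given the input $\cS=\{(s_1,\gamma_1),\dots,(s_{k'},\gamma_{k'})\}$, I would build the following directed network $G$. Create a super-source $\mathsf{src}$, a super-sink $\mathsf{snk}$, a node $v_p$ for every point $p\in P$, and a node $u_i$ for every center $s_i$ with $i\in[k']$. Add an arc $\mathsf{src}\to v_p$ of capacity $1$ for each $p$, an arc $u_i\to\mathsf{snk}$ of capacity $U_{s_i}$ for each $i$, and an arc $v_p\to u_i$ of capacity $1$ whenever $\delta(p,s_i)\le\gamma_i$.

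Next, I would compute an integral maximum $\mathsf{src}$--$\mathsf{snk}$ flow $f$ in $G$; since all capacities are integers, such a flow exists and can be found in $\poly(n)$ time by any standard algorithm. I claim $\cS$ is feasible for $\cI$ if and only if $|f|=|P|=n$. For the ``if'' direction, integrality of $f$ means that each $v_p$ is saturated by exactly one outgoing arc to some $u_{i(p)}$ with $\delta(p,s_{i(p)})\le\gamma_{i(p)}$; define $\sigma(p):=s_{i(p)}$ and $S:=\{s_i:i\in[k']\}$. Then the $u_i\to\mathsf{snk}$ capacity bound yields $|\sigma^{-1}(s_i)|\le U_{s_i}$, and the existence of the arc $v_p\to u_{i(p)}$ yields $\rad(s_i)\le\gamma_i$ for all $i$. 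For the ``only if'' direction, any feasible $(S,\sigma)$ for $\cS$ induces a valid integral flow of value $n$ by routing one unit along $\mathsf{src}\to v_p\to u_{\sigma(p)}\to\mathsf{snk}$ for each $p$, with arc capacities respected by assumption.

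Finally, the cost comparison is immediate from the construction: in any assignment $\sigma$ produced as above, $\rad(s_i)=\max_{p\in\sigma^{-1}(s_i)}\delta(p,s_i)\le\gamma_i$ because $\sigma(p)=s_i$ only along arcs $v_p\to u_i$ that exist in $G$, which requires $\delta(p,s_i)\le\gamma_i$. Summing over $i\in[k']$ gives $\cost((S,\sigma))=\sum_i\rad(s_i)\le\sum_i\gamma_i=\cost(\cS)$. Note that this argument works verbatim for any monotone symmetric norm of radii in place of the sum, since the coordinate-wise inequality $\rad(s_i)\le\gamma_i$ is preserved under such norms, which is why the same lemma can be invoked later for the norm objective. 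The entire procedure runs in polynomial time, dominated by the max-flow computation. There is no real obstacle: the only point to double-check is integrality, which follows from integer capacities and standard flow theory.
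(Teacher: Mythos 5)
Your proof is correct and takes essentially the same approach as the paper: the paper encodes the capacity $U_{s_i}$ by cloning the center vertex $U_{s_i}$ times and running Hopcroft--Karp bipartite matching, whereas you impose $U_{s_i}$ as an arc capacity into the sink and run max-flow — two standard, interchangeable formulations (the paper itself notes in the remark after the lemma that one can equivalently solve a flow problem). Your cost and feasibility arguments, including the observation that the coordinate-wise bound $\rad(s_i)\le\gamma_i$ makes the claim carry over to any monotone symmetric norm, match the paper's reasoning.
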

\begin{proof}
    Let $\cS=  \{ (s_1,\gamma_1),\cdots, (s_{k'}, \gamma_{k'})\}$, where for $i\in[k'], s_i \in P$ and $\gamma_i \in \mathbb{R}_{\ge 0}$ and hence, $\beta = \cost(\cS) = \sum_{i \in [k']} \gamma_i$. Denote by $S = \{s_1,\cdots, s_{k'}\}$. We create a bipartite  graph $G= (A \cup B, E)$, with left partition $A$ that contains $U_{s_i}$ many vertices $\{s^1_i,\cdots, s^{U_{s_i}}_i\}$ for each $s_i \in S$ and right partition $B=P$. 
    Then, for every $p \in B$ and $s_i \in S$, we add edges from $p$ to $s^1_i,\cdots s^{U_{s_i}}_i$ if $\delta(p,s_i) \le \gamma_i$.
    Now, note that there is an assignment $\sigma: P \rightarrow S$ such that $(S,\sigma)$ is feasible for $\cI$ if and only if there is a matching on size $|P|$. To see this, suppose there is a matching $\cM$ of size $|P|$, we define $\sigma(p) = s_i$ for $p \in B=P$ such that $(p,s^j_i) \in \cM$ for some $s^j_i \in A$. Since $s_i$ has $U_{s_i}$ many vertices in $A$, we have $|\sigma^{-1}(s_i)| \le U_{s_i}$, and hence $(S,\sigma)$ is feasible for $\cI$.
    For the other direction, suppose there exists an assignment $\sigma: P \rightarrow S$ such that $(S,\sigma)$ is feasible for $\cI$. Then, since, for every $s_i\in S$, we have $|\sigma^{-1}(s_i)| \le U_{s_i}$, we can match every $p \in \sigma^{-1}(s_i)$ to a unique copy $s^j_i$ to obtain a matching of size $|P|$. Finally, we can find maximum matching in $G$ in $\poly(n)$ time using the algorithm of Hopcroft and Karp~\cite{HopcroftKarp}.
    The cost of solution $(S,\sigma)$ thus obtained is $\cost((S,\sigma))  \le \sum_{i \in [k']}  \gamma_i= \cost(\cS)$, as desired.
\end{proof}
In fact, instead of solving a matching problem, we can solve a flow problem and use the recent result of~\cite{9996881} to improve the running time to almost-linear time, specifically  $(nk)^{1+o(1)}$.
\begin{remark}\label{rem: matching cluster cap}
    \Cref{thm: feasibility by matching} is stated for node capacities. However, it can be easily adapted to cluster capacities if we are given the mapping of the elements of $\cS$ and the cluster capacities.
\end{remark}

\begin{definition}[Feasible solution for \csd]\label{def: feasible sol csd}
    Let $\cI=((P,\delta),k,\allowbreak\{U_i\}_{i \in [k]})$ be an instance of \csd. A \emph{solution} to $\cI$ is given by a partition $\cC$ of $P$. We say that $\cC$ is \emph{feasible} for $\cI$ if $|\cC| \le k$ and $|C_i| \le U_i$ for $C_i \in \cC$.
      Moreover, the \emph{cost} of a feasible solution  $\cC$ is given by $\dcost(\cC) = \sum_{C_i \in \cC} \diam(C_i)$, where $\diam(C_i) := \max_{p, p' \in C_i} \delta(p,p')$, is the diameter of cluster $C_i \in \cC$. Furthermore, when the objective is a monotone norm of diameters, $\dcost(\cC)$ denotes the corresponding norm of diameters.
\end{definition}

\begin{definition}[Feasible set for \csd]\label{def: feasible set csd}
     Let $\cI=((P,\delta),k,\{U_i\}_{i \in [k]})$ be an instance of \csd and let $k' \le k$. 
      Let $\cC' = \{C'_1,\cdots, C'_{k'}\}$ such that $C'_i \subseteq P, i \in [k']$. Note that the sets in $\cC'$ are not necessarily disjoint. We say $\cC'$     is \emph{feasible set for $\cI$} if there exists a feasible solution $\tilde{\cC} = \{\tilde{C}_1,\cdots, \tilde{C}_{k'}\}$ to $\cI$ such that $\tilde{C}_i \subseteq C'_i$, for $i \in [k']$. Furthermore,  the \emph{cost} of $\cC'$ is $\dcost(\cC') = \sum_{i \in [k']} \diam(C'_i)$.
\end{definition}

Similar to \csr,  given a collection of subsets of points covering $P$, we can find a feasible clustering  whose sum of diameters is bounded by the sum of diameters of the subsets.

\begin{lemma}\label{thm: feasibility by matching dia}
     Let  $\cI=((P,\delta),k,\{U_i\}_{i \in [k]})$ be an instance of \csd and $k' \le k$.
     Let $\cC' = \{C'_1,\cdots, C'_{k'}\}$ such that $C'_i \subseteq P, i \in [k']$. 
     Then,  the following can be done in time $\poly(n)$. We can verify whether or not $\cC'$ is a feasible set for $\cI$, and if  $\cC'$ is a feasible set for $\cI$, then we can  find a feasible solution $\tilde{\cC}$ to $\cI$ such that  $\dcost(\tilde{\cC}) \le \dcost(\cC')$.
\end{lemma}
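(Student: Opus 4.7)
The plan is to mimic the proof of \Cref{thm: feasibility by matching}, adapting it to the diameter objective with cluster (rather than node) capacities. The key observation is that once we fix which $C'_i$ each point of $P$ is assigned to, the induced subsets $\tilde{C}_i\subseteq C'_i$ automatically satisfy $\diam(\tilde{C}_i)\le \diam(C'_i)$, so the cost bound comes for free; we only need to certify that a valid assignment exists. This is precisely a bipartite matching feasibility question.

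Concretely, I would construct a bipartite graph $G=(A\cup B,E)$ where $B=P$ and $A$ contains, for each $i\in[k']$, a block of $U_i$ copies $\{C'^{\,1}_i,\dots,C'^{\,U_i}_i\}$. I add an edge between $p\in B$ and every copy $C'^{\,j}_i$ whenever $p\in C'_i$. I then claim that $\cC'$ is a feasible set for $\cI$ if and only if $G$ admits a matching of size $|P|$. From such a matching $\cM$, define $\tilde C_i=\{p\in P:(p,C'^{\,j}_i)\in\cM\text{ for some }j\}$; the blocks enforce $|\tilde C_i|\le U_i$, and since every $p\in P$ is matched to exactly one element of $A$, the sets $\tilde C_1,\dots,\tilde C_{k'}$ form a partition of $P$ with $\tilde C_i\subseteq C'_i$, so $\tilde\cC=\{\tilde C_1,\dots,\tilde C_{k'}\}$ is a feasible solution per \Cref{def: feasible sol csd}. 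Conversely, any feasible $\tilde\cC$ with $\tilde C_i\subseteq C'_i$ and $|\tilde C_i|\le U_i$ yields a matching of size $|P|$ by arbitrarily matching the points assigned to $C'_i$ to distinct copies $C'^{\,j}_i$.

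For the cost, since $\tilde C_i\subseteq C'_i$, we have $\diam(\tilde C_i)\le \diam(C'_i)$ for every $i$; summing gives $\dcost(\tilde\cC)\le \dcost(\cC')$. The same argument extends verbatim to any monotone symmetric norm objective, since a coordinate-wise domination of the diameter vectors implies domination in norm by monotonicity. Feasibility of the matching instance can be decided, and a maximum matching computed, in polynomial time via Hopcroft--Karp~\cite{HopcroftKarp} (or in almost-linear time $(nk)^{1+o(1)}$ by phrasing it as a flow instance and invoking~\cite{9996881}, analogously to the remark following \Cref{thm: feasibility by matching}).

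The only subtle point — and the reason the statement is not completely immediate — is that the subsets $C'_i$ are allowed to overlap arbitrarily, so a greedy assignment of shared points need not produce a feasible partition even when one exists; this is exactly what the matching reformulation handles. The blow-up $\sum_i U_i$ in the size of $A$ is polynomial in the input, so the construction and the matching computation both run in $\poly(n)$ time, as claimed.
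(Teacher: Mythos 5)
Your proposal is correct and matches the paper's proof essentially verbatim: the same bipartite graph with $U_i$ copies per cluster, the same equivalence between feasibility of $\cC'$ and a matching of size $|P|$, the same use of Hopcroft--Karp, and the same observation that $\tilde C_i\subseteq C'_i$ gives the cost bound. No issues.
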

\begin{proof}
    Then, we create a bipartite  graph $G= (A \cup B, E)$, with left partition $A$ that contains $U_{i}$ many vertices $\{c^1_i,\cdots, c^{U_{i}}_i\}$ for each $i \in [k']$ and right partition $B=P$. 
    Then, for every $p \in C'_i$, we add edges from $p \in B$ to $c^1_i,\cdots c^{U_{i}}_i$.
    Now, note that $\cC'$ is feasible for $\cI$ if and only if there is a matching on size $|P|$. To see this, suppose there is a matching $\cM$ of size $|P|$, then for $i \in [k']$, we define, $\tilde{C}_i = \{p \in B \vert \exists c^j_i \in A, (p,c^j_i) \in \cM \}$. It is easy to see that $|\tilde{C}_i| \le U_i$ and and hence $\{\tilde{C}_1,\cdots,\tilde{C}_{k'}\}$ is feasible for $\cI$.
    For the other direction, suppose there exists a feasible solution $\tilde{\cC}=\{\tilde{C}_1,\cdots \tilde{C}_{k'}\}$ for $\cI$, i.e., $\tilde{C}_i \subseteq C'_i$ for $i \in [k']$.
    Then, since, for every $\tilde{C}_i\in \tilde{\cC}$, we have $|\tilde{C}_i| \le U_{i}$, we can match $p \in \tilde{C}_i$ to a unique copy $c^j_i$ to obtain a matching of size $|P|$. Finally, we can find maximum matching in $G$ in $\poly(n)$ time using the algorithm of Hopcroft and Karp~\cite{HopcroftKarp}.
    The cost of solution $\tilde{\cC}$ thus obtained is $\dcost(\tilde{\cC})  \le \sum_{i \in [k']}  \diam(\tilde{C}_i) \le \cost(\cC')$, as desired.
\end{proof}

Note that~\Cref{thm: feasibility by matching} and~\Cref{thm: feasibility by matching dia} also work when the objective is a monotone symmetric norm of radii and diameters, respectively.

\subsubsection{$\eps$-approximation of radii and diameters} \label{ss: eps apx}
In our algorithms, we will assume that we have a close approximations of the optimal radii/diameters. Consider a (unknown) multi-set $T^*=\{t^*_1,\dots,t^*_k\}$ of $k$ non-negative reals such that $t^*_1 \le \dots \le t^*_k$ and the largest entry $t^*_k$ is known. We say that a multi-set $T=\{t_1,\cdots,t_k\}$ of non-negative reals is an \emph{$\eps$-approximation} of $T^*$ if for all $i \in [k]$, it holds that $t_i \geq t^*_i$, and  $\|(t^*_1,\dots,t^*_k)\| \le \| (t_1,\dots,t_k) \| \le (1+\eps) \|(t^*_1,\dots,t^*_k)  \|$, where $\|\cdot\|$ is a monotone symmetric norm.
The following lemma says that we can guess an $\eps$-approximation in $2^{O(k\log(k/\eps))}$ time, when the largest entry $t^*_k$ is known.
\begin{lemma}\label{lem: eps apx}
    Suppose there is an unknown multi-set $T^*=\{t^*_1,\cdots,t^*_k\}$ of non-negative reals such that the largest entry is known. Then, there is an algorithm, that for any $\eps >0$, runs in time $2^{O(k \log (k/\eps))}$ and outputs a list $\mathcal{L}$ of $k$-multi-set non-negative reals of size $2^{O(k \log (k/\eps))}$ such that $\mathcal{L}$ contains an $\eps$-approximation of $T^*$.
\end{lemma}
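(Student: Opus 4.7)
The plan is to build $\mathcal{L}$ by discretizing the interval $[0,M]$ with a geometric grid, where $M := t^*_k$ is the known largest entry, and then enumerating all $k$-multisets over the grid. Let $\|\cdot\|$ be an arbitrary monotone symmetric norm; by rescaling we may assume it is normalized so that $\|(1,0,\ldots,0)\| = 1$, which via symmetry and monotonicity gives $\|T^*\| \ge t^*_k = M$. Set $\eps' := \eps/3$ and define
$$G := \{0\} \cup \left\{ (1+\eps')^i \cdot \tfrac{\eps' M}{k} :\; 0 \le i \le \lceil \log_{1+\eps'}(k/\eps') \rceil \right\},$$
so that the nonzero grid points cover $[\eps' M/k,\, M]$ and $|G| = O\!\left(\tfrac{\log(k/\eps)}{\eps}\right)$. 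Take $\mathcal{L}$ to be the collection of all $k$-multisets over $G$; then $|\mathcal{L}| \le |G|^k = 2^{O(k \log(k/\eps))}$, and $\mathcal{L}$ can be enumerated within the same time bound up to polynomial factors.

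To show $\mathcal{L}$ contains an $\eps$-approximation of $T^*$, for each $i$ I would let $t_i$ be the smallest element of $G$ with $t_i \ge t^*_i$ (well-defined because $M \in G$) and set $T = \{t_1,\ldots,t_k\}$. Since the rounding is entrywise non-decreasing, sorting both $T$ and $T^*$ in non-increasing order yields coordinatewise domination of $T^*$ by $T$, hence $\|T\| \ge \|T^*\|$ by monotonicity and symmetry of $\|\cdot\|$. For the upper bound, the construction gives $t_i \le (1+\eps')\max(t^*_i,\, \eps' M/k) \le (1+\eps')(t^*_i + \eps' M/k)$; combining monotonicity, the triangle inequality, and the bound $\|\mathbf{1}\| \le \sum_{j=1}^k \|e_j\| = k$ that follows from the normalization, we get
$$\|T\| \le (1+\eps')\bigl(\|T^*\| + \tfrac{\eps' M}{k}\,\|\mathbf{1}\|\bigr) \le (1+\eps')(\|T^*\| + \eps' M) \le (1+\eps')^2\|T^*\| \le (1+\eps)\|T^*\|,$$
where the penultimate step uses $\|T^*\| \ge M$ and the last uses $\eps' = \eps/3$.

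The one subtle point is that the entrywise additive error $\eps' M / k$ must be absorbed into a norm-level bound that is $(1+\eps)$-tight uniformly over \emph{every} monotone symmetric norm. The factor-$k$ blowup of $\|\mathbf{1}\|$ is exactly compensated by placing the grid floor at $\eps' M / k$ rather than $\eps' M$, and the anchor $\|T^*\| \ge M$ obtained from the normalization is precisely what allows the additive error $\eps' M$ to be charged as a multiplicative $\eps'$-error. Crucially, $\mathcal{L}$ is defined without reference to the norm, so the same list serves as an $\eps$-approximation simultaneously for every monotone symmetric norm, which is what the downstream algorithms require when they do not know in advance which norm is being optimized.
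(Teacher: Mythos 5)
Your proof is correct and takes essentially the same approach as the paper: both build $\mathcal{L}$ by discretizing $[0,M]$ and then convert the per-entry perturbation into a multiplicative norm bound via the inequality $\|(M,\dots,M)\|\le k\,\|T^*\|$, which is exactly the combination of your two observations $\|T^*\|\ge M$ and $\|\mathbf{1}\|\le k$ under the normalization $\|e_1\|=1$. The only difference is cosmetic --- the paper cites prior work for a uniform additive grid with step $\eps M/k$ yielding $t^*_i \le t_i \le t^*_i + (\eps/k) M$, whereas you explicitly construct a geometric grid with a floor at $\eps'M/k$ and absorb the $(1+\eps')^2$ overhead --- but both give the same $2^{O(k\log(k/\eps))}$ list size and the same norm-independent guarantee.
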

\begin{proof}
Let $\delta = \eps/k$. Given $t^*_k$, it is known that one compute (see~\cite{bandyapadhyay_et_al:LIPIcs.SoCG.2023.12,jaiswal_et_al:LIPIcs.ITCS.2024.65}), in time $2^{O(k \log (k/\delta))}$, a list $\mathcal{L}$ of $k$-multi-set non-negative reals of size $2^{O(k \log (k/\delta))}$  such that there exists $(t_1,\dots,t_k) \in \mathcal{L}$ with $t^*_i \le t_i \le t^*_i + \delta \cdot t^*_k$, for $i \in [k]$. 
First, note that $\|(t^*_{k},\dots,t^*_{k})\|\le\sum_{i=1}^{k}\|t^*_{k}\cdot e_{i}\|=k\cdot\|t^*_{k}\cdot e_{k}\|\le k\cdot\|(t^*_{1},\dots,t^*_{k})\|$, where the first inequality follows from triangle inequality, the equality is because $\|\cdot\|$ is symmetric, and the last inequality follows due to monotonicity. Hence,
\begin{align*}
\|(t^*_{1},\dots,t^*_{k})\|\le\|(t_{1},\dots,t_{k})\| & \le\|(t^*_{1},\dots,t^*_{k})\|+\delta\cdot\|(t^*_{k},\dots,t^*_{k})\| \quad \text{by triangle inequality}\\
 & \le\|(t^*_{1},\dots,t^*_{k})\|+\delta\cdot k\cdot\|(t^*_{1},\dots,t^*_{k})\|\\
 & =(1+\delta\cdot k)\cdot\|(t^*_{1},\dots,t^*_{k})\|\\
 &=(1+\eps)\cdot\|(t^*_{1},\dots,t^*_{k})\|
\end{align*}

\end{proof}

\subsubsection{Algorithms with bounded guesses.} 
For ease of exposition, we assume that our algorithms can make a bounded guesses. Such algorithms can be transformed into a (randomized) \fpt \ algorithms, as mentioned in the following remark.
\begin{remark}\label{rem: bounded guess}
  We assume that our algorithms have a power to make guesses in constant time. In particular, it can guess an element correctly  with probability $p(k)$ in constant time, for some  computable function $p$.
  It can be easily verified that  an algorithm for a minimization (maximization) problem with such guesses running in  $f(k)n^{O(1)}$ time can be transformed into a randomized algorithm without guesses that runs in time  $h(k)n^{O(1)}$, for some $h$, by replacing each guess either with a procedure that enumerates all elements of the sample space when the size of the sample space   is bounded by a function of $k$ or  the corresponding sampling procedure, and returning a minimum (maximum) cost solution.  

  In our algorithms, we explicitly show subroutines that can replace corresponding guess calls in \fpt \ time.
\end{remark}

\addtocontents{toc}{\protect\setcounter{tocdepth}{2}}
\section{Non-Uniform Capacities}\label{sec: nonuni}

In this section, we provide improved \fpt-approximation algorithms for non-uniform capacities. The main theorem is restated for convenience.

\MainFPTUB*
As mentioned before, the above theorem implies, in a black-box way, a factor $2(3+2\sqrt{2} + \epsilon) \approx 11.656 + \eps$ \fpt-approximation algorithm for Non-uniform \csd.
Interestingly, our algorithmic framework \Cref{thm: nuni csr} can be adapted to Non-uniform \csd to obtain a significantly better factor of $(7+\eps)$ (\Cref{thm: nuni csd} restated bellow for convenience). 

\GeneralCapDiameter*

In the next section, we will present a detailed exposition of the algorithm of~\Cref{thm: nuni csr} for node capacities. In~\Cref{ss: non cluster csr}  and~\Cref{ss: non  csd}, we explain the changes required in the algorithm to adapt it to cluster capacities and \csd, respectively.

\subsection{Node \csr}\label{ss: non node csr}
In this section, we prove \Cref{thm: nuni csr} for node capacities. 
Let $\cI = ((P,\delta),k,\{U_p\}_{p \in P})$
be a given instance of non-uniform node \csr.
Let $\cC=\{C_1,\cdots,C_k\}$ be an optimal clustering of $\cI$ such that $C_i$ has radius $r^*_i$ with center $o_i$ and $r^*_1 \le \cdots \le r^*_k$.
Let $O = \{o_1,\cdots, o_k\}$, and
let $\text{OPT}= r^*_1+\cdots+ r^*_k$ be the  cost of clustering $\cC$. Let $r_1 \le \cdots \le r_k$ be an $\epsilon$-approximation of the optimal radii of $\cC$ (see~\Cref{ss: eps apx}).
 \Cref{algo:nucsr} describes the pseudo-code of our algorithm for \Cref{thm: nuni csr} that is based on the ideas presented in~\Cref{sec: tech}. 
 \Cref{algo:nucsr} makes guesses (using the command \guess), and hence, we make the following assumption for brevity. Later, when we proof~\Cref{thm: nuni csr} in~\Cref{ss: proof of nuni}, we show how to remove this assumption.
\begin{assumption}\label{as:nuni}
    All the guesses of \Cref{algo:nucsr} are correct.
\end{assumption}

As mentioned in~\Cref{sec: tech}, we process the clusters in $\cC$ in non-decreasing order of their radii. The first thing we do when we process a cluster $C_i \in \cC$ (\Cref{algo:settle cluster}) is to find a good dense ball, which we define next. However, we need the following definitions first.
\begin{definition}[reachable set, anchor]\label{def:reach}
    Given a ball $B := \ball(y,r)$, for $y \in P$ and radius $r$, we say that a cluster $C_i \in \cC$ is \emph{reachable} from $B$ if $\delta(y,p) \le r + 2r_i$, for all $p \in C_i$. Let \emph{\rchble}$(B)$ denote the set of all clusters in $\cC$ reachable from $B$. Furthermore, by breaking the ties arbitrarily, the cluster with smallest radius in $\emph{\rchble}(B)$ is called \emph{anchor} for $B$.
\end{definition}

See~\Cref{fig:sfig21} for an illustration, where clusters $C_8$ and $C_{10}$ are reachable from $\ball(y,r_5)$, while $C_2$ and $C_5$ are not. Furthermore, cluster $C_8$ is anchor for $\ball(y,r_5)$.
\begin{definition}\label{def: available point}
    $y \in P$ is said to be \emph{available for $C_i$} if $(i)\ U_y \ge |C_i|$ and  $(ii)\ y \notin O \setminus \{o_i\}$.
\end{definition}

Now we are ready to define good dense ball for $C_i$.
For simplicity, we assume we have a copy $F$ of $P$ from which we pick our solution. Our algorithm sometimes works on subsets of $P$ and $F$. Furthermore, for $P' \subseteq P, y\in P$ and a positive real $r$, let $\ball_{P'}(y,r) = \ball(y,r) \cap P'$.
\begin{definition}[Good Dense ball for $C_i$]\label{def:good dense ball}
For $P',F' \subseteq P$,
we say a ball $D_i= \ball_{P'}(y_i,r_i), y_i \in F'$ is a \emph{good dense ball} for $C_i$ in $(P',F')$ if
\begin{itemize}
    \item $|\ball_{P'}(y_i,r_i)| \ge |C_i|$ and
    \item $y_i$ is available for $C_i$  and
    \item $C_i \in \rchble(\ball(y_i,r_i+2r_j))$, where $r_j$ is the radius of anchor $C_j$ for $D_i$. 
\end{itemize}
\end{definition}
Note that, for every $C_i \in \cC$, we have that $\ball_{P'}(o_i,r_i)$ is a good dense ball for $C_i$ in every $(P',F')$ such that $C_i \subseteq P' \subseteq P$ and $F' \subseteq F$ containing $o_i$. See~\Cref{fig:sfig22} for an illustration, where $\ball(y,r_5)$ is a good dense ball for $C_5$ (assuming $|\ball(y,r_5)| \ge |C_5|$). On the other hand, in~\Cref{fig:sfig21}, the $\ball(y,r_5)$ is not a good dense ball for $C_5$.


\begin{algorithm}[h]
\caption{$(3+2\sqrt{2}+\epsilon)$ \fpt approximation for Non Uniform \csr (node capacities)} \label{algo:nucsr}
  \KwData{Instance $\cI=((P,\delta),k,\{U_p\}_{p \in P})$ of Non Uniform \csr, $\epsilon>0$}
  \KwResult{A $(3+2\sqrt{2} + \epsilon)$ approximate solution}

    $F \leftarrow P, Q \leftarrow P$ \tcp*{$F$ is a set of potential centers}
    $\cS \leftarrow (\emptyset,\emptyset)$\;
    \guess $R=(r_1,\cdots,r_k)$, an $\epsilon$-approximation of optimal radii such that $r_i \le r_{i+1}$ for $1 \le i \le k-1$, and let $(C_1,\dots,C_k)$ be the names of the corresponding clusters in optimal\;
    Let $\alpha \leftarrow 1 + 2\sqrt{2}$\label{algo:nucsr alpha}\tcp*{constant for extended balls}
    Color each cluster $C_i \in \cC$ red\;
    Set $\pof(C_t) = \emptyset$ for $C_t \in \cC$\;
    \While{$\exists$ a red cluster\label{algo:nucsr while}}{
        $C_i \leftarrow$ smallest radius red cluster according to R\;
        \SC$(C_i)$\;
    }

    \ForEach{black cluster $C_t$ such that $\emph{\pof}(C_t) \ne \emptyset$\label{algo:nucsr for}}{
        \guess available $y_t \in \cap_{r_j \in \pof(C_t)} \ball_{F}(y_j,r_j+r_t)$ for $C_t$\tcp*{$U_{y_t} \ge |C_t|$}  \label{algo:nucsr tp3 yt}
        Add $(y_t,(\alpha + 2)r_t)$ to $\cS$\label{algo:nucsr add cen tp3}\;

        Delete $y_t$ from $F$\;
        
    }
    \lIf{$\cS$ is feasible  for $\cI$}{
        \Return $(S,\sigma)$ obtained from \Cref{thm: feasibility by matching}\label{algo: nuni find feasible sol}}
   \textbf{fail}\;
\end{algorithm}

\begin{algorithm}[h]
\caption{\SC$(C_i)$} \label{algo:settle cluster}

    Let $D_i = \ball_Q(y_i,r_i)$ be the ball returned by \FGDB$(r_i,(Q,F))$\label{algo:settle cluster findgoodenseball} \;
    \guess \eIf{$C_i \in \rchble(\ball(y_i,\alpha r_i))$\label{algo:settlecluster: type1}}{
         Color $C_i$ black\label{algo:settlecluster type1 black}\;
         Delete $y_i$ from $F$\label{algo: nuni sc del yi}\;
         Add $(y_i,(\alpha + 2)r_i)$ to $\cS$\label{algo: sc add cen2}\;
    }
    { \texttt{EXCHANGE}$(D_i)$\tcp*{$r_j > \frac{(\alpha-1)r_i}{2} = \sqrt{2}r_i$}}
    
    \Return
\end{algorithm}

\begin{algorithm}[h]
\caption{\FGDB$(r_i,(Q,F))$} \label{algo:good dense ball}

    $F' \leftarrow F$,
    $P' \leftarrow Q$\;
    \For{$2k$ times}{
        Let $D_i = \ball_{P'}(y_i,r_i), y_i \in F'$ be the argument that maximizes the function $\max_{y_j \in F'} \min \{U_{y_j},|\ball_{P'}(y_j,r_i)|\}$\label{algo: nuni fgdb max}\;

        \guess the radius $r_j$ of the anchor of $D_i$ \tcp*{see Definition~\ref{def:reach}}
        \guess \lIf{$C_i \notin \rchble(\ball(y_i,r_i+2r_j))$}{
            $P' \leftarrow P' \setminus \ball(y_i,r_i+2r_j))$\label{algo: nuni fgdb del ball}}
        \Else{
            \guess \lIf{$y_i$ is not available for $C_i$}{
                $F' \leftarrow F' \setminus \{y_i\}$\label{algo: nuni fgdb del cen}}
        \lElse{
         \Return $D_i$}
    }
    }
     \textbf{fail}\;
\end{algorithm}

\begin{algorithm}[h]
\caption{\texttt{EXCHANGE}$(D_i)$} \label{algo:exchange cluster}

    Delete $D_i$ from $Q$\label{algo: nuni exch del di}\; 
    Color $C_i$ black\label{algo:exchangecluster type2 black}\;
    Add $(y_i,r_i)$ to $\cS$ and delete $y_i$ from $F$\label{algo: nuni exch del yi}\;
    \guess $\rchble(D_i)$\; 
    \ForEach{$C_t \in \rchble(D_i)$}{
        Add $r_i$ to \pof$(C_t)$\;
        Color $C_t$ black\;
    }
    \Return
\end{algorithm}

\subsubsection{Analysis}
For the analysis, we suppose that \Cref{as:nuni} is true. The following lemma says that the subroutine \FGDB (\Cref{algo:good dense ball}) finds a good dense ball for every $C_i \in \cC$ that is processed by the algorithm.
\begin{lemma}\label{lm:good dense ball} \sloppy
Let $\cC_p  = \{C_{i_1},C_{i_2},\cdots, C_{i_t}\}, t \le k'$ be the red clusters processed by \Cref{algo:nucsr} in Step~\ref{algo:nucsr while} in that order, which is also the same order in which the subroutine \FGDB\ is invoked.
Then, \Cref{as:nuni} implies the following.     For iteration $s \in[t]$, consider the cluster $C_{i_s} \in \cC_p$ for which \FGDB\ is invoked on $(r_{i_s}, (Q_{i_s}, F_{i_s}))$. Then, the following is true.
    \begin{enumerate}
        \item At the beginning of iteration $s$, the ball $G_{i_s}:= \ball_P(o_{i_s}, r_{i_s}) \cap C_{i_s}$ is a good dense ball for $C_{i_s}$ in $(Q_{i_s}, F_{i_s})$. Therefore,  \FGDB\ returns a good dense ball $D_{i_s}$  for $C_{i_s}$ in  $(Q_{i_s}, F_{i_s})$ within $2k$ iterations.
        \item Iteration $s$ neither deletes any point from $C_{i_{\hat{s}}}$ for $\hat{s} >s$ nor deletes a center from $O \setminus \{o_{i_s}\}$.
    \end{enumerate}
\end{lemma}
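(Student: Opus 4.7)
The plan is to prove both parts of the lemma simultaneously by strong induction on the iteration index $s$. The base case $s=1$ is immediate: $Q_{i_1}=F_{i_1}=P$ since nothing has been deleted yet. For the inductive step, assume both parts hold for every $s'<s$. Applying part 2 to iterations $1,\dots,s-1$ yields $C_{i_s}\subseteq Q_{i_s}$ and $o_{i_s}\in F_{i_s}$ at the start of iteration $s$. I would then verify the three conditions of \Cref{def:good dense ball} for the ball $\ball_{Q_{i_s}}(o_{i_s},r_{i_s})$: its cardinality is $\ge|C_{i_s}|$ because $C_{i_s}\subseteq Q_{i_s}$ and $C_{i_s}$ is contained in this ball; $o_{i_s}$ is available for $C_{i_s}$ because $\cC$ is feasible ($U_{o_{i_s}}\ge|C_{i_s}|$) and $o_{i_s}\notin O\setminus\{o_{i_s}\}$; and the reachability condition $C_{i_s}\in\rchble(\ball(o_{i_s},r_{i_s}+2r_{j'}))$ holds trivially since every $p\in C_{i_s}$ satisfies $\delta(o_{i_s},p)\le r_{i_s}$.

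To show that \FGDB returns a good dense ball within $2k$ iterations, I would establish two invariants that persist throughout its inner loop. First, $o_{i_s}$ never leaves $F'$: the center-deletion rule removes only unavailable centers, and $o_{i_s}$ is always available for $C_{i_s}$. Second, no point of $C_{i_s}$ is ever removed from $P'$: point deletions occur only when $C_{i_s}\notin\rchble(\ball(y,r_{i_s}+2r_j))$, and the diameter bound $\diam(C_{i_s})\le 2r_{i_s}$ together with the triangle inequality rules out any point of $C_{i_s}$ lying in $\ball(y,r_{i_s}+2r_j)$ (otherwise all of $C_{i_s}$ would be reachable, a contradiction). These two invariants guarantee that $o_{i_s}$ is always a witness realizing argmax value $\ge|C_{i_s}|$ in \Cref{algo: nuni fgdb max}, so the picked $y$ automatically satisfies $U_y\ge|C_{i_s}|$ and $|\ball_{P'}(y,r_{i_s})|\ge|C_{i_s}|$. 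Any non-returning iteration is therefore either a cluster deletion that empties the anchor $C_j\subseteq\ball(y,r_{i_s}+2r_j)$ out of $P'$, or a center deletion of some $y\in O\setminus\{o_{i_s}\}$ from $F'$. Charging the first type to the $k$ clusters of $\cC$ and the second type to the $k-1$ elements of $O\setminus\{o_{i_s}\}$ caps the loop at at most $2k-1<2k$ iterations.

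For part 2 of the lemma, I would case-split on whether \SC takes the Type 1 or Type 2 branch. In both branches the only point removed from $F$ is $y_{i_s}$, and availability of $y_{i_s}$ for $C_{i_s}$ yields $y_{i_s}\notin O\setminus\{o_{i_s}\}$, so no center of a not-yet-processed cluster is harmed. Type 2 additionally removes $D_{i_s}$ from $Q$ via \texttt{EXCHANGE}. The key observation is that any point $p\in D_{i_s}$ belonging to a cluster $C_t$ automatically places $C_t$ in $\rchble(D_{i_s})$: $\delta(y_{i_s},p)\le r_{i_s}$ and $\diam(C_t)\le 2r_t$ give $\delta(y_{i_s},q)\le r_{i_s}+2r_t$ for every $q\in C_t$. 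Since \texttt{EXCHANGE} colors every cluster in $\rchble(D_{i_s})$ black, any cluster $C_{i_{\hat{s}}}$ with $\hat{s}>s$ that remains red at the end of iteration $s$ must have been disjoint from $D_{i_s}$, so no point of it is deleted from $Q$.

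The main obstacle is the $2k$-iteration bound for \FGDB. The two invariants above ($o_{i_s}$ stays in $F'$ and $C_{i_s}$ stays in $P'$) are crucial: if either failed, the argmax value could dip below $|C_{i_s}|$ and the loop could spin on unproductive iterations. Once those invariants are secured, the remaining bound is a clean pigeonhole between the $k$ clusters of $\cC$ (for cluster deletions) and the $k-1$ centers in $O\setminus\{o_{i_s}\}$ (for center deletions).
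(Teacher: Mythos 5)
Your proof is correct and follows the same inductive strategy as the paper's: use part 2 of the induction hypothesis to guarantee $o_{i_s}\in F_{i_s}$ and $C_{i_s}\subseteq Q_{i_s}$ at the start of iteration $s$, argue that both are preserved inside \FGDB\ so that the argmax value at Step~\ref{algo: nuni fgdb max} is always at least $|C_{i_s}|$, bound the \FGDB\ loop by $2k$ by charging each non-returning iteration to a distinct deleted cluster in $\cC\setminus\{C_{i_s}\}$ or a distinct deleted center in $O\setminus\{o_{i_s}\}$, and deduce part 2 from the fact that every cluster meeting $D_{i_s}$ lies in $\rchble(D_{i_s})$ and so is colored black in \texttt{EXCHANGE}. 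You are somewhat more explicit than the paper at two points it asserts without elaboration — the disjointness of the temporarily deleted ball $\ball(y,r_{i_s}+2r_j)$ from $C_{i_s}$ (via the diameter-plus-triangle argument), and the fact that an unavailable argmax center must lie in $O\setminus\{o_{i_s}\}$ because the argmax already forces $U_y\ge|C_{i_s}|$ — but the argument is structurally identical.
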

\begin{proof}
We prove the claim by induction on $s \in [t]$. For the base case, note that the first cluster in $\cC_p$ is $C_1$ and hence $Q_{i_1}=P$ and $F_{i_1}=P$. Thus, $G_{i_1} = \ball_P(o_1,r_1) \cap C_1$ is a good dense ball for $C_1$ in $(P,P)$, as required.
Furthermore,  \FGDB\ either deletes one cluster from $\cC \setminus \{C_{1}\}$ in $P$ at Step~\ref{algo: nuni fgdb del ball}  or deletes one center from $O \setminus \{o_{i_s}\}$ at Step~\ref{algo: nuni fgdb del cen}. 
Thus, the for loop \FGDB\ runs at most $2k$ times. However, in both the cases, neither any point in  $C_1 = G_{i_1}$ is deleted   nor $o_{1}$ is deleted. 
Hence, $G_{i_1}$ is a candidate for the maximizer of the function at Step~\ref{algo: nuni fgdb max} of \FGDB\ for  $F' \subseteq F, P' \subseteq Q$ considered in any iteration. Therefore,  $\min\{U_{y_{1}}, |\ball_{P'}(y_{1},r_{1})|\} \ge \min\{U_{o_1}, |G_{i_1}|\}$. However, note that  $\min\{U_{o_{1}},  |G_{i_1}|\} \ge |C_{1}|$. 
 Thus, we have  $U_{y_{1}} \ge |C_{1}| \text{ and }  |\ball_{P'}(y_{1},r_{1})| \ge |C_1|$.
Hence, \FGDB\ returns a good dense ball for $C_{1}$ within $2k$ iterations. Finally, note that if iteration $1$ deletes $D_1$ from $Q$ at Step~\ref{algo: nuni exch del di} of subroutine \texttt{EXCHANGE}, then it colors all the clusters in $\rchble(D_1)$ black and hence $\rchble(D_1) \cap \cC_p = \emptyset$. Hence, no point from $C_{i_{s'}}$ is deleted for $s'>1$. Similarly, whenever iteration $1$ deletes the center $y_1$ of $D_1$ (in Step~\ref{algo: nuni sc del yi} of \SC and Step~\ref{algo: nuni exch del yi} of \texttt{EXCHANGE}), it is the case that $y_1 \notin O \setminus \{o_1\}$ since $D_1$ is good dense ball for $C_1$.

Now, assume that the lemma is true for all iterations $s'< s$, and consider the iteration $s$ and the ball $G_{i_s} = \ball_P(o_{i_s}, r_{i_s}) \cap C_{i_s}$. Induction hypothesis implies that none of the previous iteration $s' < s$ deleted any point from $C_{i_s}$ nor did they delete $o_{i_s}$. Hence, all the points of $G_{i_s}$ remain as they were at the start of the algorithm. Using the above arguments, since every iteration \FGDB\ either deletes a cluster or a optimal center, the for loop of \FGDB\ runs at most $2k$ times. Again, using same arguments, since, in both the cases, neither any point in  $C_{i_s} = G_{i_{s}}$ is deleted   nor $o_{i_{s}}$ is deleted, we have that  $G_{i_s}$ is a candidate for the maximizer of the function at Step~\ref{algo: nuni fgdb max} of \FGDB\ for  $F' \subseteq F, P' \subseteq Q$ considered in any iteration.
 Therefore,  $\min\{U_{y_{1}}, |\ball_{P'}(y_{1},r_{1})|\} \ge \min\{U_{o_1}, |G_{i_1}|\} \ge |C_{i_s}|$. 
Thus, we have  $U_{y_{i_s}} \ge |C_{i_s}| \text{ and }  |\ball_{P'}(y_{i_s},r_{i_s})| \ge |C_{i_s}|$.
Hence, \FGDB\ returns a good dense ball for $C_{i_s}$ within $2k$ iterations. Finally, note that if iteration $i_s$ deletes $D_{i_s}$ from $Q$ at Step~\ref{algo: nuni exch del di} of subroutine \texttt{EXCHANGE}, then it colors all the clusters in $\rchble(D_{i_s})$ black and hence $\rchble(D_{i_s}) \cap \cC_p = \emptyset$. Hence, no point from $C_{i_{\hat{s}}}$ is deleted for $\hat{s}>1$. Similarly, whenever iteration $i_s$ deletes the center $y_{i_s}$ of $D_{i_s}$ (in Step~\ref{algo: nuni sc del yi} of \SC and Step~\ref{algo: nuni exch del yi} of \texttt{EXCHANGE}), it is the case that $y_{i_s} \notin O \setminus \{o_{i_s}\}$ since $D_{i_s}$ is good dense ball for $C_{i_s}$. 
\end{proof}

Now, consider Step~\ref{algo:settlecluster: type1} of \SC$(C_i)$, for some red cluster $C_i$. 
Let $a(r_i)$ denote the radius of anchor for $D_i$, where $D_i$ is the ball obtained in Step~\ref{algo:settle cluster findgoodenseball}. Now, if $C_i \in \rchble(\ball(y_i,\alpha r_i))$, then we call $C_i$ as \textit{Type-$1$ cluster}, otherwise we call it \textit{Type-$2$ cluster}. 
Note that when Type-$2$ cluster $C_i$ is processed in $\texttt{EXCHANGE}(D_i)$, it turns all the clusters in $\texttt{reachable}(D_i)$ black (some of them may have been already black). We call a cluster $C_t \in \texttt{reachable}(D_i)$, a \textit{Type-$3$ cluster}. We also say that the cluster $C_t \in \texttt{reachable}(C_i)$ is \textit{partitioned} by $C_i$.
Since, $r_t > r_i$, for every $C_t \in \texttt{reachable}(D_i)$, we have the following observation.

\begin{observation}\label{obs:type3}
None of \tp{1} or \tp{2} clusters are turned Type-$3$ by this process, and moreover, a Type-$3$ cluster remains a Type-$3$ cluster throughout the execution of the algorithm.    
\end{observation}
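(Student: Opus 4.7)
My plan rests on two structural facts that are already essentially in place. First, the \texttt{while} loop of \Cref{algo:nucsr} always picks the smallest-radius red cluster, so clusters are processed in non-decreasing order of their guessed radii. Second, whenever \Cref{algo:settle cluster} enters its \texttt{EXCHANGE} branch (so $C_i$ is being classified as Type-$2$), the anchor radius $r_j$ of $D_i$ satisfies $r_j>\sqrt 2\, r_i$; this is the comment printed in \Cref{algo:settle cluster}, and would be justified by combining the defining good-dense-ball inequality $C_i\in\rchble(\ball(y_i, r_i+2r_j))$ (guaranteed by \Cref{lm:good dense ball}) with the negation $C_i\notin\rchble(\ball(y_i,\alpha r_i))$ of the Type-$1$ condition, using $\alpha=1+2\sqrt 2$.

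To prove the first claim, I would argue by contradiction: assume some cluster $C_{j}$, already processed and classified as Type-$1$ or Type-$2$, is later placed in $\rchble(D_i)$ while a Type-$2$ cluster $C_i$ is being processed. Because $C_{j}$ is already black at the moment $C_i$'s iteration begins, and the loop serves the smallest-radius red cluster first, we get $r_{j}\le r_i$. Let $C_k$ be the anchor of $D_i$; since the anchor has the smallest radius among all clusters in $\rchble(D_i)$, we get $r_k\le r_{j}$. Chaining this with the $\sqrt 2$-gap from the second fact above yields $r_k>\sqrt 2\, r_i\ge \sqrt 2\, r_{j}>r_{j}\ge r_k$, a contradiction. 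Hence no cluster previously processed as Type-$1$ or Type-$2$ can be subsequently turned Type-$3$.

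For the second claim, I would observe that once a cluster is coloured Type-$3$ by \texttt{EXCHANGE}, it is black; since the \texttt{while} loop of \Cref{algo:nucsr} only picks red clusters, the cluster is never re-entered into \Cref{algo:settle cluster} and therefore cannot be reclassified as Type-$1$ or Type-$2$. It may of course later be added to $\rchble(D_{i'})$ for some further Type-$2$ cluster $C_{i'}$, adding another radius to $\pof(\cdot)$, but this only reaffirms its Type-$3$ status.

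I expect the only non-trivial step to be justifying the $\sqrt 2$-gap in the Type-$2$ case; once that is granted, the rest of the observation collapses to a one-line comparison of three radii together with the invariant that Type-$3$ clusters are always black, so I do not anticipate any further obstacle.
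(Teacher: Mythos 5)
Your proof is correct, and its core idea is exactly what the paper leaves implicit: it asserts ``Since $r_t>r_i$ for every $C_t\in\texttt{reachable}(D_i)$, we have the following observation'' and gives no further details. You derive this $r_t>r_i$ property from the $\sqrt{2}$-gap on the anchor, then get the contradiction by comparing it with the fact that an already-processed \tp{1}/\tp{2} cluster $C_j$ must satisfy $r_j\le r_i$ because the \texttt{while} loop serves the smallest-radius red cluster first; that fills in the missing reasoning faithfully. The second part (a black cluster is never re-selected, so its type is frozen) is likewise the argument the paper intends.

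One microscopic nit in your chain of inequalities $r_k>\sqrt2\,r_i\ge\sqrt2\,r_j>r_j\ge r_k$: the middle strict inequality $\sqrt2\,r_j>r_j$ fails when $r_j=0$. This does not affect the conclusion, since the first inequality in the chain is already strict, so replacing $>$ with $\ge$ there still yields $r_k>r_k$; you should just write it with $\ge$. Also note you don't actually need the numerical value $\sqrt{2}$; all that matters is that the gap factor exceeds $1$, i.e.\ $(\alpha-1)/2>1$, which is the only property of $\alpha=1+2\sqrt{2}$ used here (the specific value is chosen elsewhere to balance the Case~1 and Case~2 approximation factors).
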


Now, we are ready to prove the guarantees of~\Cref{algo:nucsr}.
\begin{lemma}\label{lm: nuni bound}
\Cref{as:nuni} implies that Step~\ref{algo: nuni find feasible sol} successfully finds a feasible solution $(S,\sigma)$ using $\cS$ with cost $(3+2\sqrt{2}+\epsilon)\opt$ in time $n^{O(1)}$.
\end{lemma}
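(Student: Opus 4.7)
The plan is to verify two claims about the set $\cS$ returned by Algorithm~\ref{algo:nucsr}: (i) $\cost(\cS)\le(3+2\sqrt{2}+\epsilon)\opt$, and (ii) $\cS$ is feasible for $\cI$. The conclusion then follows at once by invoking Lemma~\ref{thm: feasibility by matching}, which runs in $\poly(n)$ time.

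For the cost, by the discussion preceding Observation~\ref{obs:type3} the three categories Type-1, Type-2, and Type-3 partition $\cC$. Examining Steps~\ref{algo: sc add cen2}, \ref{algo: nuni exch del yi}, and \ref{algo:nucsr add cen tp3}, a Type-1 $C_i$ contributes $(\alpha+2)r_i$ to $\cost(\cS)$, a Type-2 $C_i$ contributes $r_i$, and a Type-3 $C_t$ contributes $(\alpha+2)r_t$. Since $r_i\le(\alpha+2)r_i$,
\[
\cost(\cS)\le(\alpha+2)\sum_{i=1}^{k}r_i\le(\alpha+2)(1+\epsilon)\opt=(3+2\sqrt{2})(1+\epsilon)\opt,
\]
and rescaling $\epsilon$ produces the claimed $(3+2\sqrt{2}+\epsilon)\opt$ bound.

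For feasibility I exhibit an explicit assignment of the $|P|=\sum_i N_i$ points (where $N_i=|C_i|$) to the centers in $\cS$. For every Type-2 $C_i$, let $D_i=\ball_Q(y_i,r_i)$ be the good dense ball from \FGDB, $C_{a(i)}$ its anchor of radius $r_{a(i)}$, $x_{i,i}=|C_i\cap D_i|$, and $x_{i,t}=|C_t\cap D_i|$ for each Type-3 $C_t\in\rchble(D_i)$. Because $|D_i|\ge N_i$ by Lemma~\ref{lm:good dense ball}, we may pick non-negative integers $x'_{i,t}\le x_{i,t}$ with $\sum_t x'_{i,t}=N_i-x_{i,i}$. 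The assignment then routes every point of a Type-1 $C_i$ to the corresponding $y_i$; sends all $x_{i,i}$ points of $C_i\cap D_i$ together with $x'_{i,t}$ chosen points of $C_t\cap D_i$ to each Type-2 center $y_i$ (exactly $N_i$ points in total); and, for every Type-3 $C_t$, sends the original $C_t$ points not yet routed plus $e_{i,t}:=x'_{i,t}$ \emph{exchange} points drawn from $C_i\setminus D_i$ (for each Type-2 $C_i$ with $C_t\in\rchble(D_i)$) to $y_t$. The identity $\sum_t e_{i,t}=N_i-x_{i,i}=|C_i\setminus D_i|$ shows that $C_i\setminus D_i$ is exhausted exactly, the arrival count at $y_t$ equals $N_t$, and because each center in $\cS$ is \emph{available} for its cluster, the node capacities are respected.

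The remaining task is distance-feasibility. Type-1 uses $\delta(y_i,p)\le\alpha r_i+2r_i=(\alpha+2)r_i$, the Step~\ref{algo:settlecluster: type1} condition. Type-2 is trivial since the routed points lie in $D_i\subseteq\ball(y_i,r_i)$. For Type-3, fix a $C_t$ partitioned by a Type-2 $C_i$. Step~\ref{algo:nucsr tp3 yt} enforces $\delta(y_t,y_i)\le r_i+r_t$. An original $p\in C_t$ satisfies $\delta(y_i,p)\le r_i+2r_t$ because $C_t\in\rchble(D_i)$, giving $\delta(y_t,p)\le 2r_i+3r_t$. For an exchange point $p\in C_i\setminus D_i$, the good dense ball certificate $C_i\in\rchble(\ball(y_i,r_i+2r_{a(i)}))$ together with $r_{a(i)}\le r_t$ (anchor minimality and $C_t\in\rchble(D_i)$) gives $\delta(y_i,p)\le 3r_i+2r_{a(i)}\le 3r_i+2r_t$, so
\[
\delta(y_t,p)\le(r_i+r_t)+(3r_i+2r_t)=4r_i+3r_t<\tfrac{4}{\sqrt{2}}r_t+3r_t=(3+2\sqrt{2})r_t=(\alpha+2)r_t,
\]
where the strict inequality uses the Type-2 branch condition $r_{a(i)}>\sqrt{2}\,r_i$ combined with $r_t\ge r_{a(i)}$. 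Invoking Lemma~\ref{thm: feasibility by matching} on $\cS$ then produces $(S,\sigma)$ with $\cost((S,\sigma))\le\cost(\cS)$ in $\poly(n)$ time. The delicate point is the calibration of $\alpha$: the value $\alpha=1+2\sqrt{2}$ is chosen precisely so that $4r_i+3r_t=(\alpha+2)r_t$ at the worst-case ratio $r_i=r_t/\sqrt{2}$ forced by the anchor condition, balancing the Type-2 cost savings (paying only $r_i$) against the larger Type-3 radius $(\alpha+2)r_t$.
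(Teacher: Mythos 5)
Your proof is correct and follows essentially the same route as the paper's: cost via direct summation over the three cluster types, feasibility via an explicit routing (Type-1 points to $y_i$; for Type-2, $N_i$ points of $D_i$ to $y_i$; exchange points of $C_i\setminus D_i$ to the Type-3 centers $y_t$), and then invoking \Cref{thm: feasibility by matching}. In fact your triangle-inequality bound $\delta(y_t,p)\le 2r_i+3r_t<(\alpha+2)r_t$ for an \emph{original} point $p\in C_t$ (going through $y_i$) is slightly more careful than the paper's, which asserts $\delta(y_t,p)\le r_t$ — a claim that would only hold if $y_t=o_t$, which the algorithm does not guarantee — so your detour through $y_i$ quietly repairs that imprecision while reaching the same conclusion.
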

\begin{proof}
 Let $Y$ be the set of centers that the algorithm added to $\cS$. Then, note that $|Y| = k'$ since we add a center for every cluster in $\cC$. More precisely,  for \tp{1} cluster, we add a center at Step~\ref{algo: sc add cen2} of~\Cref{algo:settle cluster}, for \tp{2} cluster, we add a center at Step~\ref{algo: nuni exch del yi} of~\Cref{algo:exchange cluster}, and for \tp{3} cluster, we add a center at
Step~\ref{algo:nucsr add cen tp3} of~\Cref{algo:nucsr}.
We will show an assignment $\sigma': P \rightarrow Y$ such that $(Y,\sigma')$ is feasible for $\cI$ with $\cost((Y,\sigma')) \le \cost(\cS)$. Therefore, Step~\ref{algo: nuni find feasible sol} is successful and returns a feasible solution $(S,\sigma)$ for $\cI$ such that $\cost((S,\sigma)) \le \cost(\cS)$. 

Let $\cC_1 \subseteq \cC$ be the collection of \tp{1} clusters. Let $P_1 = \cup_{C \in \cC_1} C$ and let $P_2 = P \setminus P_1$. Since $P_1$ and $P_2$ are disjoint, we show $\sigma': P_1 \cup P_2 \rightarrow Y$. 
Consider a \tp{1} cluster $C_i \in \cC$ and let $y_i \in Y$ be the corresponding center in $Y$. 
Then, for every $p \in C_i$, we let $\sigma'(p)=y_i$.
Since, $y_i$ is the center of good dense ball $D_i$ returned by \FGDB, we have that $U_{y_i} \ge |C_i|$, as required. Furthermore, $\rad(\sigma'^{-1}(y_i)) \le (\alpha+2)r_i$, as $C_i \in \rchble(\ball(y_i,\alpha r_i))$.

Now consider the points in $P_2$ which are due to \tp{2} and \tp{3} clusters.
Consider a \tp{2} cluster $C_i$ and consider the good dense ball $D_i$ for $C_i$ centered at $y_i$ returned by \FGDB. Since $|D_i| \ge |C_i|$, let $E_i \subseteq D_i$ such that $|E_i| = |C_i|$, for $t \in [t_i]$. 
For each \tp{3} cluster $C_t \in \rchble(D_i)$, let $E_i^t = E_i \cap C_t$.
Since $|E_i|=|C_i|$, we can consider a partition $\{C^t_i\}_{C_t \in \rchble(D_i)}$ of $C_i$ such that $|C^t_i| = |E^t_i|$. Note that since $|C_i| = |E_i| = \cup_{C_t \in \rchble(D_i)} |E^t_i| =  \cup_{C_t \in \rchble(D_i)} |C^t_i| = |C_i|$, such a partition of $C_i$ is possible. Finally, for a \tp{3} cluster $C_t$, let $C'_t = \cup_{C_i \in \pof(C_t)} E^t_i$ and let $C''_t= \cup_{C_i \in \pof(C_t)} C^t_i$. Now, we assign the points of $P_2$ to $Y$ as follows.
For $p \in E_i$, we let $\sigma'(p)=y_i$. As before, we have $|U_{y_i}| \ge |C_i|$. Furthermore, $\rad(\sigma'^{-1}(y_i)) \le r_i$.
Note that $E^t_i \subseteq E_i$, for $C_t \in \rchble(D_i)$, is the subset of $C_t$ that has been assigned to $y_i$ instead of $y_t$. However, we can assign $C^t_i \subseteq C_i$ to $y_t$, noting that $|C^t_i| = |E^t_i|$, and hence $y_t$ serves exactly $|C_t|$ many points. More formally, for $p \in (C_t \setminus C'_t) \cup C''_t$, we let $\sigma(p)=y_t$. The number of points assigned to $y_t$ is $|(C_t \setminus C'_t) \cup C''_t|= C_t = U_{y_t}$ due to Step~\ref{algo:nucsr tp3 yt} of \Cref{algo:nucsr}. Finally, for $p \in \sigma'^{-1}(y_t) \cap C_t$,  we have $\delta(y_t,p) \le r_t$. Consider $p \in \sigma'^{-1}(y_t) \cap C_i$, where $C_i \in \pof(C_t)$. Then,
\[
\delta(p,y_t) \le d(y_t,y_i) + \delta(y_i,p) \le r_t+r_i + 3r_i + 2r_t = 4r_i+3r_t
\]
However, we know that $C_t$ was partitioned due to $D_i=\ball(y_i,r_i)$ because $C_i \notin \rchble(\ball(y_i,\alpha r_i))$ (Step~\ref{algo:settlecluster: type1} of \SC). This means that $\alpha r_i < r_i + 2a(r_i)$, where $a(r_i)$ is the radius of anchor of $D_i$. Since, $C_t \in \rchble(D_i)$, we have that $r_t \ge a(r_i)$. Hence, we have that $r_i < \nicefrac{2r_t}{\alpha-1}$. Therefore, $\delta(p,y_t) \le \nicefrac{8r_t}{\alpha-1} + 3r_t = (\alpha+2)r_t$. Thus, we have $\rad(\sigma'^{-1}(y_t)) \le (\alpha+2)r_t$.

\textbf{Cost of $(Y,\sigma)$.}
We have,   
\begin{align*}
(\cost(Y,\sigma'))^p &= \sum_{C_i \text{ is  \tp{1}}} \rad(\sigma'^{-1}(y_i))^p + \sum_{C_i \text{ is  \tp{2}}} \rad(\sigma'^{-1}(y_i))^p + \sum_{C_i \text{ is  \tp{3}}} \rad(\sigma'^{-1}(y_i))^p \\
&\le  \sum_{C_i \text{ is  \tp{1}}} ((\alpha+2)r_i)^p + \sum_{C_i \text{ is  \tp{1}}} (r_i)^p + \sum_{C_i \text{ is  \tp{1}}} ((\alpha+2)r_i)^p \\
&\le (\alpha+2)^p \sum_{C_i \in \cC} (r_i)^p \\
&\le (1+\eps)^p(\alpha+2)^p \sum_{C_i \in \cC} (r^*_i)^p \quad \text{using~\Cref{lem: eps apx}}
\end{align*}
In fact, we get $(\alpha+2+\eps)=(3+2\sqrt{2}+\eps)$-approximation when the objective is a monotone symmetric norm of radii. 

\textbf{Running time.}
We have the following claim that bounds the number of iterations of \Cref{algo:nucsr} by $n^{O(1)}$.

\begin{claim}
    During each iteration of while loop (Step~\ref{algo:nucsr while}) of  \Cref{algo:nucsr}, at least one red cluster is turned black. Hence, the while loop runs at most $k$ times.
\end{claim}
\begin{proof}
    From \Cref{lm:good dense ball}, we have that \emph{\FGDB}$(r_i,(P,F))$ never fails for every red cluster $C_i$ processed by the while loop of \Cref{algo:nucsr}. If $C_i$ is a Type-$1$ cluster, then $C_i$ is colored black in Step~\ref{algo:settlecluster type1 black} of \SC$(C_i)$. Otherwise, $C_i$ is Type-$2$ cluster and hence it is colored black in Step~\ref{algo:exchangecluster type2 black} of \texttt{EXCHANGE}$(D_i)$.
\end{proof}
\end{proof}

\subsubsection{Proof of \Cref{thm: nuni csr}}\label{ss: proof of nuni}
    Note that it is sufficient to show how to remove \Cref{as:nuni} in \Cref{lm: nuni bound}. We think of \Cref{algo:nucsr} as a linear program that  has access to a guess function and we will transform it to a branching program without guess function. We show how to replace the guess function.
    
    The algorithm makes the following guesses.
    \begin{enumerate}
        \item[E1] Guess an $\epsilon$-approximate radius profile of the clusters in $\cC$. This can be obtained by enumerating $(k/\epsilon)^{O(k)}  n^2$ choices using~\Cref{lem: eps apx}.

        \item[E2] Guesses in \FGDB. Each time \FGDB\ is invoked, the for loop runs $2k$ iterations and each iteration makes three guesses. First, it guesses the radius $r_j$ of the anchor of $\ball(y_i,r_i+2r_j)$, which can be simulated by creating $k$ branches corresponding to the elements of the radius profile $R$. 
        Second guess is whether or not $C_i \in \rchble(\ball(y_i,r_i+2r_j))$. For this, we create two branches---one  for each outcome. The third guess is whether or not $y_i$ is available for $C_i$, which again can be simulated by creating two branches for each outcome of the guess. 
        Hence, we create at most $(4k)^{2k}$ branches of execution for each invocation of \FGDB.
        
        \item[E3] Guesses in \SC. 
        Since we counted the guesses in \FGDB\ separately, the only guess in this subroutine is whether or not $C_i \in \rchble(\ball(y_i,\alpha r_i))$, which results in additional $2$ branches, for each invocation of \SC.

         \item[E4] Guesses in \texttt{EXCHANGE}. This subroutine requires the set $\rchble(D_i)$. This can be done by creating $2^k$ many branches of execution, one for each possible outcome.

         \item[E5]  Guess available $y_t \in \cap_{r_j \in \pof(C_t)} \ball_{F}(y_j,r_j+r_t)$ for $C_t$  at~\Cref{algo:nucsr tp3 yt} of~\Cref{algo:nucsr}.
         We need to show how to find such an available point $y_t$ for $C_t$. First note that $o_t \in\cap_{r_j \in \pof(C_t)} \ball_{F}(y_j,r_j+r_t)$, and hence such a point exists.
         Let $\hat{Y}_t$ be a set of $k$ maximum capacity points in $\cap_{r_j \in \pof(C_t)} \ball_{F}(y_j,r_j+r_t)$. We claim that $\hat{Y}_t$ contains such a point $y_t$. 
         If $o_t \in \hat{Y}_t$, then we are done since $o_t$ is available for $C_t$.
         For the other case when $o_t \notin \hat{Y}_t$, then note that $U_{y} \ge U_{o_t} \ge |C_t|$, for every $y \in \hat{Y}_t$. Since $|\hat{Y}_t| = k$ and $o_t \notin \hat{Y}_t$, there is a point $y_t \in \hat{Y_t}$ such that $y_t \notin O$ and $U_{y_t} \ge |C_t|$. Hence, in either case, there is  a point as required by ~\Cref{algo:nucsr tp3 yt} of~\Cref{algo:nucsr}. Since, $|\hat{Y}_t| =k$, we can branch on all $k$ possibilities.
    \end{enumerate}

    Since \Cref{algo:nucsr} calls \SC at most $k$ times and each execution of \SC creates $k^{O(k)}$ branches of execution, we have that the total running time of  \Cref{algo:nucsr} is bounded by $ 2^{O(k^2\log k  + k \log (k/\epsilon))}n^{O(1)}$.

\begin{remark}
    Since~\Cref{algo:nucsr} explicitly maintains a set $(F)$ from which it selects the center, it also yields $(3+\sqrt{8}+\eps)$ \fpt-approximation for the \emph{supplier (facility)} version of \csr, where points are from $Q \subseteq P$ and the solution of centers must come from another set $F \subseteq P$. For many fundamental problems, such as \kcen, $k$-\textsc{Median}, it is known that the supplier version is harder to approximate than the non-supplier version~\cite{HS85, GOYAL2023190, AnandL24}.
\end{remark}

\subsection{Extension to Cluster \csr}\label{ss: non cluster csr}
In this section, we highlight changes required to~\Cref{algo:nucsr} for cluster capacities. Let  $\cI=((P,\delta),k,\{U_1,\cdots, U_k\})$ be an instance of Non Uniform cluster \csr. The idea is to let~\Cref{algo:nucsr} run on $\cI$ with the following modifications at the places that depended on the node capacities.
\begin{itemize}
    \item[M1]~\Cref{def: available point} for available point:\\
    $y \in P$ is said to be \emph{available for $C_i$} if  $y \notin O \setminus \{o_i\}$.

     \item[M2]~\Cref{algo: nuni fgdb max} of \FGDB:\\
             Let $D_i = \ball_{P'}(y_i,r_i), y_i \in F'$ be the argument that maximizes the function\\ $\max_{y_j \in F'} |\ball_{P'}(y_j,r_i)|$;

    \item[M3]~\Cref{algo: find feasible sol} of~\Cref{algo:nucsr} needs the correct permutation of capacities for $\cS$. Then~\Cref{rem: matching cluster cap} allows us to use~\Cref{thm: feasibility by matching}:\\
            Guess the capacities of the elements in $\cS$;\\
             \textbf{If} $\cS$ is feasible  for $\cI$ \textbf{then} \Return $(S,\sigma)$ obtained from \Cref{thm: feasibility by matching};
\end{itemize}

Note that the capacities are now on the cluster, instead of on the nodes, and hence it makes sense to run~\Cref{algo:nucsr} with above modifications. (M1) is easy to see. For (M2), since we only want to make sure that $D_i$ contains as many points as $C_i$, it is sufficient to look for a densest ball of radius $r_i$, as the algorithm maintains the invariant that all the points of $C_i$ are intact (\Cref{lm:good dense ball}). Next, once we have $\cS$ at~\Cref{algo: find feasible sol} of~\Cref{algo:nucsr}, we need to also guess the capacities of the clusters in $\cS$, in order to use~\Cref{thm: feasibility by matching} (see \Cref{rem: matching cluster cap}). This will incur an additional multiplicative factor of $k!$ to the running time. Finally, we need to guess available $y_t \in \cap_{r_j \in \pof(C_t)} \ball_{F}(y_j,r_j+r_t)$ for $C_t$  at~\Cref{algo:nucsr tp3 yt} of~\Cref{algo:nucsr} (E5). In this case, we define $\hat{Y}_t$ to be an arbitrary set of $k$ facilities in $\cap_{r_j \in \pof(C_t)} \ball_{F}(y_j,r_j+r_t)$. It is easy to see that $\hat{Y}_t$ must contain an available point for $C_t$.

\subsection{Extension to Non-Uniform \csd}\label{ss: non csd}
In this section, we modify~\Cref{algo:nucsr} for Non-Uniform \csd. The pseudo-code is shown in~\Cref{algo:nucsd}.

\subsubsection{Overview of \Cref{algo:nucsd}}
The idea of \Cref{algo:nucsd} is similar to~\Cref{algo:nucsr}, where it starts processing the clusters in $\cC$ in non-decreasing order of diameters. 
As before, the key invariant of the algorithm is that whenever cluster $C_i \in \cC$ is being processed, there is a good dense ball for $C_i$ in the remaining points. 
Specifically, suppose we could find a ball $D_i = \ball(y_i,d_i)$, during the processing of $C_i$ such that $|D_i| \ge |C_i|$. Then, letting $C_j$ be an anchor for $D_i$, i.e., $C_j$ has the smallest diameter in $\rchble(D_i)$, note that if $C_i \notin \rchble(\ball(y_i,d_i+d_j))$, then we can (temporary) delete the $\ball(y_i,d_i+d_j)$, since this ball does not contain any point of $C_i$. Furthermore, note that in this case we end up deleting at least  one cluster (namely, $C_j$), and hence this process of temporary deleting balls can repeat at most $k$ times before the event that $C_i$ is in the reachable set of the extended ball happens. We call such ball a good dense ball.
Once, we find a good dense ball $D_i = \ball(y_i,d_i)$ for $C_i$, i.e.,  $C_i \in \rchble(\ball(y_i,d_i+d_j))$, then note that if $d_j \le d_i$, then all the points of $C_i$ are within distance $3d_i$ from $y_i$. This means we can create a cluster $C'_i \subseteq \ball(y_i,3d_i)$ of diameter $6d_i$ that contains points of $C_i$.

Unfortunately, the hard case is when $d_j > d_i$, and hence we can not charge $d_j$ anymore to pay for the cost of $d_i$. 
In this case, we create a new cluster $C'_i \subseteq D_i$ out of the points of $D_i$. However, in this case, we end up taking points from other clusters. But, note that this process does not take points from the clusters that have already been processed and have diameter smaller than $d_i$. This is because we process the clusters in the non-decreasing order and we are in the case when $d_j > d_i$. Hence, this process ends up taking points from the unprocessed clusters. Since, the algorithm needs the guarantee that the points of the unprocessed clusters are not disturbed, we mark these clusters processed. However, we need to make sure that all the affected clusters are taken care of. Towards this, we use a novel idea of exchanging points using the subroutine \texttt{EXCHANGE}. Since, we have created a new cluster $C'_i$ out of $D_i$, any cluster $C_{j'} \in \rchble(D_i)$ that has lost, say $n_{j'}>0$, points in this process, can instead claim $n_{j'}$ points from the original cluster $C_i$, by paying slightly more cost since, in this case, we can charge to $d_i$  as $d_i < d_{j'} \le d_j$. Here, we used the fact that $C_j$ is anchor for $D_i$, and hence $d_j$ is a smallest diameter in $\rchble(D_i)$. We call such clusters partitioned clusters.
More specifically, the diameter of the modified $C'_{j'}$ is at most $\diam(C_{j'}) + d_i+ \delta(y_i, p) \le  d_{j'} + d_i + 2d_i+d_j \le 5 d_{j'}$, for $p \in C_i$, as $d_j \le d_{j'}$. 
Note that a cluster $C_{j'}$ can be partitioned multiple times by different $C_i$'s, as shown in~\Cref{fig:dia}. However, in this case, the diameter of $C'_{j'}$ is at most $9d_{j'}$ as shown in~\Cref{fig:dia}. On the other hand, the diameter of the new cluster $C'_{i}$ is only $2d_i$. We use this gap between the costs to obtain improved approximation factor of $7$.

\begin{algorithm}[h]
\caption{$(7+\epsilon)$ \fpt approximation for Non Uniform \csd} \label{algo:nucsd}
  \KwData{Instance $\cI=((P,\delta),k,\{U_p\}_{p \in P})$ of Non Uniform \csd, $\epsilon>0$}
  \KwResult{A $7 + \epsilon)$ approximate solution}
    $F \leftarrow P, Q \leftarrow P$ \tcp*{$F$ is a set of potential centers}
    Let $C'_i \gets \emptyset$ for $i \in [k]$\;
    \guess $D=(d_1,\cdots,d_k)$, an $\epsilon$-approximation of optimal diameters such that $d_i \le d_{i+1}$ for $1 \le i \le k-1$, and let $(C_1,\dots,C_k)$ be the names of the corresponding clusters in optimal\;
    Let $\beta \leftarrow \nicefrac{5}{2}$\label{algo:nucsd beta}\tcp*{constant for extended balls}
    Color each cluster $C_i \in \cC$ red\;
    Set $\pof(C_t) = \emptyset$ for $C_t \in \cC$\;
    \While{$\exists$ a red cluster\label{algo:nucsd while}}{
        $C_i \leftarrow$ smallest radius red cluster according to $D$\;
        \SC$(C_i)$\;
    }

    \ForEach{black cluster $C_t$ such that $\pof(C_t) \ne \emptyset$}{
        Set $C'_t = \cup_{d_j \in \pof(C_t)} \ball_{Q}(y_j,2d_j+d_t)$\;
    }
    Let $\cC'$ be the collection of $C'_i$\;
    \lIf{$\cC'$ is feasible for $\cI$}{
        \Return $\tilde{\cC}$ obtained from \Cref{thm: feasibility by matching dia}\label{algo: find feasible sol dia nu}}    
   \textbf{fail}\;
\end{algorithm}

\begin{algorithm}[h]
\caption{\SC$(C_i)$} \label{algo:settle cluster csd}

    Let $D_i = \ball_Q(y_i,d_i)$ be the ball returned by \FGDB$(d_i,(Q,F))$\label{algo:settle cluster findgoodenseball csd} \;
    \guess \eIf{$C_i \in \rchble(\ball(y_i,\beta d_i))$\label{algo:settlecluster: type1 csd}}{
         Color $C_i$ black\label{algo:settlecluster type1 black csd}\;
                set $C'_i=\ball(y_i,(\beta + 1)d_i)$ to $\cS$\label{algo: sc add cen2 csd}\;
    }    
    {\texttt{EXCHANGE}$(D_i)$\tcp*{$d_j > (\beta-1)d_i$}}
    \Return
\end{algorithm}

\begin{algorithm}[h]
\caption{\FGDB$(d_i,(Q,F))$} \label{algo:good dense ball csd}
    $P' \leftarrow Q$\;
    \For{$k$ times}{
        Let $D_i = \ball_{P'}(y_i,d_i), y_i \in P'$ be the argument for $\max_{y_j \in P'}|\ball_{P'}(y_j,d_i)|$\label{algo: nuni fgdb max csd}\;

        \guess the diameter $d_j$ of the smallest diameter cluster $\rchble(D_i)$\;
        \guess \lIf{$C_i \notin \rchble(\ball(y_i,d_i+d_j))$}{
            $P' \leftarrow P' \setminus \ball(y_i,d_i+d_j))$\label{algo: nuni fgdb del ball csd}}
        \lElse{
        \Return $D_i$}
    }
     \textbf{fail}\;
\end{algorithm}

\begin{algorithm}[h]
\caption{\texttt{EXCHANGE}$(D_i)$} \label{algo:exchange cluster csd}

    Delete $D_i$ from $Q$\label{algo: nuni exch del di csd}\; 
    Color $C_i$ black\label{algo:exchangecluster type2 black csd}\;
    Set $C'_i=\ball(y_i,d_i)$\;
    \guess $\rchble(D_i)$\;
    \ForEach{$C_t \in \rchble(D_i)$}{
        Add $d_i$ to \pof$(C_t)$\;
        Color $C_t$ black\;
    }
    \Return
\end{algorithm}

\subsubsection{Analysis}
The analysis is exactly same as that of Non-Uniform Node \csr, except for changing definitions and calculations for the diameter case. We only highlight the required changes.
First, we say a cluster $C_i \in \cC$ is reachable from a ball $B:=\ball(y,r)$ if $\delta(y,p) \le r+d$, for $p \in C_i$. The definitions of $\rchble(B)$ and anchor for $B$ remain same.
Similarly, a ball $D_i:=\ball(y_i,d_i)$ is a good dense ball for $C_i$ if $|D_i| \ge |C_i|$ and $C_i \in \rchble(y_i, d_i+d_j)$, where $d_j$ is the diameter of anchor $C_j$ for $C_i.$

Note that~\Cref{lm:good dense ball} can be adapted to show that \FGDB (\Cref{algo:good dense ball csd}) returns a good dense ball for $C_i$ within $k$ iterations. Next, as before, we partition the clusters in $\cC$ into three types depending on Step~\ref{algo:settlecluster: type1 csd} of \SC$(C_i)$.
Let $a(d_i)$ denote the diameter of anchor for $D_i$, where $D_i$ is the ball obtained in Step~\ref{algo:settle cluster findgoodenseball csd}. Now, if $C_i \in \rchble(\ball(y_i,\beta d_i))$, then we call $C_i$ as \textit{Type-$1$ cluster}, otherwise we call it \textit{Type-$2$ cluster}. 
Note that in this case $a(d_i) > (\beta -1 )d_i$ since $\beta d_i < d_i + a(d_i)$. Furthermore, when Type-$2$ cluster $C_i$ is processed in $\texttt{EXCHANGE}(D_i)$, it turns all the clusters in $\texttt{reachable}(D_i)$ black (some of them may have been already black). 
We call a cluster $C_t \in \texttt{reachable}(D_i)$, a \textit{Type-$3$ cluster}. We also say that the cluster $C_t \in \texttt{reachable}(C_i)$ is \textit{partitioned} by $C_i$. Again, it is easy to see that~\Cref{obs:type3} also holds in this case.

Now, we define the clusters as follows. When $C_i$ is a \tp{1}, we define $C'_i=\ball(y_i,(\beta+1) d_i)$
When $C_i$ is a \tp{2} cluster, then we define $C'_i\subseteq \ball(y_i,d_i)$.
Finally, for a \tp{3} cluster $C_t$, we define $C'_t=\cup_{d_j \in \pof(C_t)} \ball_{Q}(y_j,2d_j+d_t)$. Let $\cC' = \{C'_i \mid \in [k]\}$. 

To show that~\Cref{algo:nucsd} returns a feasible solution  in~\Cref{algo: find feasible sol dia nu} using~\Cref{thm: feasibility by matching dia}, we will show that the set $\cC'$ constructed by the algorithm is a feasible set with cost $(7+\eps)\opt$, i.e., there exists a feasible solution $\tilde{\cC} = \{\tilde{C}_1,\cdots, \tilde{C}_{k}\}$ to $\cI$ such that $\tilde{C}_i \subseteq C'_i$, for $i \in [k']$. 
Note that, we only need to show an existence of such a feasible solution $\tilde{C}$, since in this case~\Cref{thm: feasibility by matching dia}, together with~\Cref{rem: matching cluster cap}, recovers some feasible solution whose cost is at most $\diam(\cC')$, which we is bounded by $(7+\eps)\opt$.
We define $\tilde{\cC}$ as follows. For a \tp{1} cluster $C_i \in \cC$, we define $\tilde{C}_i = C_i$. As  $C'_i = \ball(y_i,(\beta+1)d_i)$ and since $C_i \in \rchble(\ball(y_i,\beta d_i))$,  we have that $\tilde{C}_i \subseteq C'_i$, as required. Furthermore, $|\tilde{C}_i| = |C_i|$ and  the diameter of $\tilde{C}_i$ is at most the diameter of $C'_i$ which is bounded by $2(\beta+1)d_i = 7d_i$.
For a \tp{2} cluster $C_i$, we define $\tilde{C}_i \subseteq D_i = C'_i$ such that $|\tilde{C}_i| = |C_i|$, where $D_i = \ball(y_i,d_i)$ is a good dense ball for $C_i$. It is easy to see that the diameter of $\tilde{C}_i$ is  at most the diameter of $D_i$, which is bounded by $2d_i$.
Finally, when $C_i$ is a \tp{3} cluster (partitioned by other clusters; see~\Cref{fig:dia}), we define $\tilde{C}_i$ to be the points of $C_i$ that are not taken by any \tp{2} cluster union the points claimed back from the \tp{2} clusters. Thus, $|\tilde{C}_i| = |C_i|$ and $\tilde{C}_i \subseteq C'_i$. Furthermore,
we have $\diam(\tilde{C}_i) \le d_i + 2 \cdot \max_{C_t \in \pof(C_i)} (3d_t + d_i) = 3d_i + 6\cdot \max_{C_t \in \pof(C_i)} d_t \le 7d_i$, where we used the fact that $d_t < \frac{d_i}{\beta-1}$ since $\beta d_t <d_i+ d_t$. Finally, note that $P= \dot\cup_{i \in [k]} \tilde{C}_i$, and hence $\dcost(\tilde{\cC}) = \sum_{i \in [k]} \diam(\tilde{C}_i) \le  \sum_{i \in [k]}  2(\beta+1)d_i \le 7(1+\eps)\opt$, which also holds in the case of a monotone symmetric norm of diameters.
\begin{figure}
    \centering
    \includegraphics[width=0.8\linewidth]{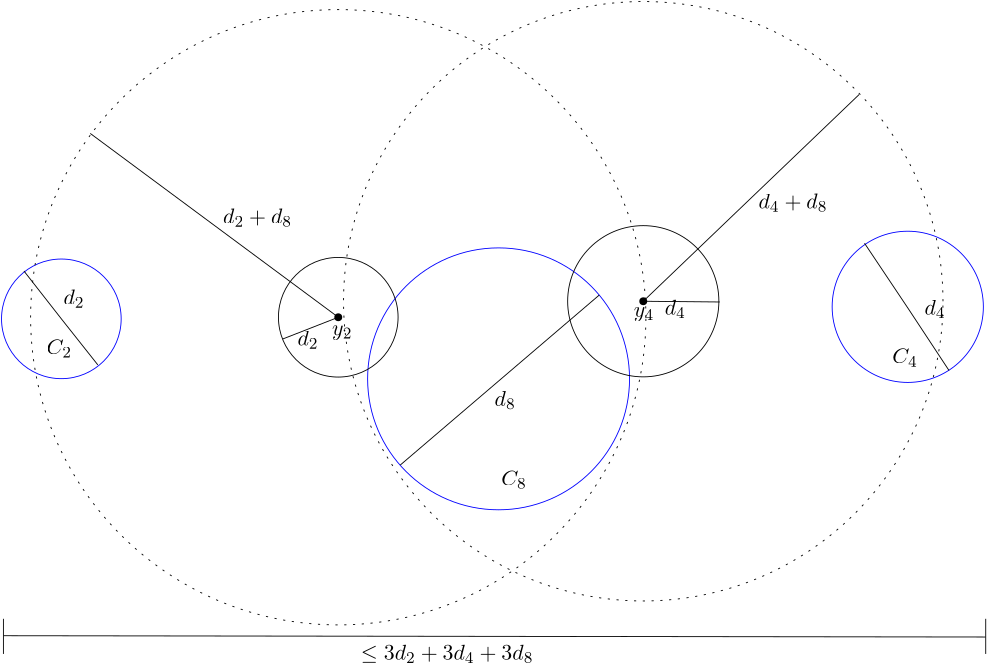}
    \caption{The good dense ball  $\ball(y_2,d_2)$ for $C_2$ partitions $C_8$. Similarly, $C_8$ is also partitioned by good dense ball $\ball(y_4,d_4)$ for $C_4$. However, note that the diameter of the new cluster $C'_8$ is bounded by $9d_8$.}
    \label{fig:dia}
\end{figure}
\section{Hardness of Approximation}\label{sec: hard}
In this section, we prove the following hardness of approximation result for uniform \csr (restated for convenience). Recall that, informally, \gapeth\ says that there exists an $\eps>0$ such that there is no sub-exponential time algorithm for \sat  can distinguish whether a given \sat formula has a satisfying assignment or every assignment satisfies at most $(1-\eps)$ fraction of the clauses (see~\Cref{hyp: gapeth1} for a formal statement).
\FPTLB*

The above theorem implies that there is no $f(k,\epsilon)n^{g(\epsilon)}$ time algorithm that can approximate uniform \csr to a factor better than $(1+1/e)$. Such algorithms are called PAS for Parameterized Approximation Scheme~\cite{a13060146} or  \fpt\ Approximation Scheme. Therefore, \Cref{thm:gapethhard} implies that, assuming \gapeth, there is no PAS for Uniform \csr. 

In fact, our reduction can be modified to rule out PTAS for Uniform \csr, assuming a weaker hypothesis of $\pp \neq \np$ (restated for convenience). 
\SORAPX*

Our hardness results are based on a polynomial time reduction from \textsc{Max $k$-Coverage} (\maxcov), where, given a universe $\cG$ of elements, a positive integer $k$, and $k$ collections $\{\cC_1,\dots,\cC_k\}$ of subsets of $G$, the task is to find $\{S_1,\dots,S_k\}$ such that $S_i\in \cC_i, i\in [k]$ that maximizes the number of elements covered. It is known that \maxcov is NP-hard to approximate to a factor better than $(1-1/e)$, even with a promise that there exists $k$ sets covering $\cG$ and when 
every set in the collection has precisely $|\cG|/k$ elements.

On a high level, given an instance of \maxcov, we create the set-element incidence graph  of the instance. Additionally, for every collection $\cC_i,i\in[k]$, we form a clique on the vertices corresponding to the sets in $\cC_i$. Let $G=(V,E)$ be the resulting graph.
The points of our \csr instance corresponds to the vertices in $V$, and the distance metric is given by the shortest path distance in $G$. Finally, we set the uniform capacity $U=|\cG|/k+m$. It is easy to see that when there are $k$ sets $\{S_1,\dots,S_k\}$ such that $S_i\in \cC_i, i\in [k]$ that cover $\cG$, then we can find a solution for \csr with cost $k$, by selecting the points corresponding to the sets $\{S_1,\dots,S_k\}$ as the centers. For the No case, observe that any solution to \csr must have clusters of size precisely $U$ due to our construction. This simple observation forces the centers to belong to different collections, yielding the result.

\subsection{Preliminaries}\label{ss: hard prelim}
\paragraph*{\textsf{Gap-ETH}}
To rule out \fpt\ algorithms, we will use the following assumption, called \gapeth, introduced by Manurangsi and Raghavendra~\cite{manurangsi_et_al:LIPIcs.ICALP.2017.78}, and independently by Dinur~\cite{e12154b895c549e0875972fe2fb4de91}. \gapeth\ has been instrumental in obtaining tight \fpt\ lower bounds for many fundamental problems~\cite{10.1137/18M1166869,10.5555/3381089.3381094}, including $k$-median, $k$-means~\cite{Cohen-AddadG0LL19}, $k$-center~\cite{GOYAL2023190} in clustering.

\begin{hypothesis}[(Randomized) Gap Exponential Time Hypothesis (\gapeth)~\cite{e12154b895c549e0875972fe2fb4de91,10.5555/3381089.3381094,manurangsi_et_al:LIPIcs.ICALP.2017.78}]\label{hyp: gapeth1}
    There exists constants $\epsilon', \tau >0$ such that no randomized algorithm when given an instance $\phi$ of \sat on $N$ variables and $O(N)$ clauses  can distinguish the following cases correctly with probability $2/3$ in time $O(2^{\tau N})$.
    \begin{itemize}
        \item there exists an assignment for $\phi$ that satisfies all the clauses
        \item every assignment satisfies $< (1-\epsilon')$ fraction of clauses in $\phi$.
    \end{itemize}
\end{hypothesis}

\paragraph*{\textsc{Max $k$-Coverage}}
The starting point of our reduction is \maxcov, where given a universe $\cG$, a collection $\cS$ of subsets of $\cG$, and a positive integer $k$, we want to find $k$ subsets from $\cS$ that maximizes the number of covered elements of $\cG$. For an element $g \in \cG$ and a set $S \subseteq \cG$, we say $g$ is  \emph{covered} by $S$, if $g \in S$.  Similarly, for subsets $S,T \subseteq \cG$, we say $S$ covers the elements of $T$ if $T \subseteq S$.
For our reduction, we need a more general setting.
\begin{definition}[\maxcov]
    Given a universe $\cG$, positive integer $k$, and $k$ collections $\{\cC_1,\dots,\cC_k\}$ of subsets of $\cG$, the \maxcov problem asks to find $k$ sets $\{S_1,\dots,S_k\}$ such that $S_i \in \cC, i\in[k]$ that maximizes the number of covered elements of $\cG$.  
\end{definition}
It is known that it is NP-hard to approximate \maxcov to factors better than $(1-1/e)$.
\begin{theorem}[\cite{10.1145/285055.285059}]\label{thm: maxcov np-hard}
    For every constant $\eps>0$, it is NP-hard to decide whether a given instance of \maxcov has a feasible solution that covers all the elements of $\cG$ or every feasible solution covers at most $(1-1/e+\eps)$ fraction of elements of $\cG$.
    Furthermore, the hardness result holds even when every set in $\cC_i,i\in[k]$ has exactly $|\cG|/k$ elements.
\end{theorem}

In fact, assuming \gapeth, Manurangsi[~\cite{10.5555/3381089.3381094}], showed the following hardness of approximation for \fpt~algorithms.
\begin{theorem}[\cite{10.5555/3381089.3381094}]\label{thm: maxcov enu-hard}
    For every constant $\eps>0$ and any function $f$, there is no $f(k)|\cG|^{o(k)}$ time algorithm that can decide whether a given instance of \maxcov has a feasible solution that covers all the elements of $\cG$ or every feasible solution covers at most $(1-1/e+\eps)$ fraction of elements of $\cG$, unless \gapeth~ fails. 
    Furthermore, the hardness result holds even when every set in $\cC_i,i\in[k]$ has exactly $|\cG|/k$ elements.
\end{theorem}


\subsection{Reduction from \maxcov to Uniform \csr}\label{ss: hardred}

In this section, we show the following result.
\begin{theorem}\label{thm: maxcov red}
    Given any constant $\eps>0$ and an instance $\cI=((\cG,\{\cC_1,\dots,\cC_k\}),k)$ of \maxcov with $|\cC_i| =m,i \in [k]$ such that $m \le \nicefrac{n'}{2k^2}$, where $n'=|\cG|$ and every set in $\cC_i,i \in [k]$ has exactly $\nicefrac{n'}{k}$ elements,  one can compute in $\poly(n')$ time an instance  $\cJ=((P,\delta),k,U)$ of uniform \csr  with $n:=|P| = n'+mk$ points and uniform capacity constrain  $U = n/k$, such that
	\begin{itemize}
		\item If $\exists  \cT=\{T_1,\dots,T_k \mid T_i \in \cC_i, \forall i\in [k]\}$ such that $\cT$ covers $\cG$, then there exist a feasible solution $(S,\sigma)$ with $\cost(S, \sigma) \le k$
		\item If $\forall \cT=\{T_1,\dots,T_k \mid T_i \in \cC_i, \forall i\in [k]\}$,  $\cT$ covers strictly less than $(1-1/e+\eps)n'$ elements of $\cG$, then   then for every feasible solution $(S,\sigma)$ it holds that $\cost(S,\sigma) > (1+1/e-\eps)k$.
	\end{itemize}
\end{theorem}
\begin{proof}
    For every element and every set in $\cI$, create a point in $P$. We call points corresponding to elements, as \emph{element points}, and points corresponding to sets, as \emph{set points}. Formally, for every element $e \in \cG$, create an element point $p_e$ in $P$, and for every set $T \in \bigcup_{i \in [k]} \cC_i$, create a set point $p_T$ in $P$. 
    Note that $n= |P| =n'+mk$.    
    Define the distance function $\delta$ as the shortest path metric of an unweighted graph $G=(V,E)$, where $V=P$, and  $E$ has the following edges. $(i)$ For every $\cC_i, i \in [k]$, there is an edge between every pair of set points $p_T,p_{T'} \in P$ such that the corresponding sets $T,T' \in \cC_i$. $(ii)$ For every set point $p_T \in P$, corresponding to set $T$, there are edges between $p_T$ and the element points corresponding to the elements contained in $T$. Finally, set $U = n/k=n'/k+m$.

 \begin{lemma}\label{lem: yescase}
If $\exists  \cT=\{T_1,\dots,T_k \mid T_i \in \cC_i, \forall i\in [k]\}$ such that $\cT$ covers $\cG$, then there exist a feasible solution $(S,\sigma)$ to the uniform \csr instance $\cJ$ with $\cost(S, \sigma) \le k$.
\end{lemma}   
\begin{proof}
    Without loss of generality,  assume for every $T_i \in \cT, i\in[k]$, it holds that $T_i \in \cC_i$.
    Consider the following solution $(S_\cT,\sigma)$ to $\cJ$, obtained from $\cT$. Let $S_\cT= \{p_{T_1},\dots,p_{T_k}\}$\atodo{these should be $T_i$?}\gtodo{No, $T_i$ are sets, but we want construct a solution for \csr. I changed the notations, though.} contains all the set points corresponding to the sets in $\cT$. 
    For $p_T \in P$, where $p_T$ is a set point corresponding to set $T \in \cC_i,i \in [k]$,    set $\sigma(p_T)=p_{T_i} \in S_\cT$  \atodo{Not clear enough. What is $\cC_j$? why can't you say that $o$ goes to $S_i$ s.t. $p\in S_i$ (and to note that it is unique).}\gtodo{What is $o$? and why do we need uniqueness? Also, it is unclear what it means by``goes". That's why I formulated it differently.}
    Note that this is feasible since $\cT$ contains a set from each $\cC_i$. 
    Furthermore, since $\cT$ covers $\cG$, associate each point $e \in \cG$ with some set $T_j \in \cT, j \in[k]$ covering $e$, by breaking ties arbitrarily. Call $T_j$ as the  \emph{responsible set} for $e$ in solution $\cT$.
    Now, for an element point $p_e \in P$ corresponding to element $e \in \cG$, set $\sigma(p_e)=p_{T_i} \in S_\cT$ such that $T_i$ is the responsible set for $e$ in $\cT$.
    By construction of $(S_{\cT},\sigma)$, the radius of each center in $p_{T_i} \in S_{\cT}$ is $1$, and by the structure of $\cI$, it holds that $\sigma^{-1}(p_{T_i}) = \nicefrac{n'}{k}+m=U$. 
    Hence, $(S_{\cT},\sigma)$ is a feasible solution to $\cJ$ with $\cost((S_\cT,\sigma))=k$.
\end{proof}

 \begin{lemma}\label{lem: nocase}
If $\forall \cT=\{T_1,\dots,T_k \mid T_i \in \cC_i, \forall i\in [k]\}$,  $\cT$ covers strictly less than $(1-1/e+\eps)n'$ elements of $\cG$, then   then for every feasible solution $(S,\sigma)$ it holds that $\cost(S,\sigma) > (1+1/e-\eps)k$.
\end{lemma}   
\begin{proof}
    We prove the contraposition. Suppose there exists $S \subseteq P, |S|\le k$ and a feasible assignment $\sigma: P \rightarrow S$ such that $\cost(S,\sigma) \le (1+1/e-\epsilon)k$. 
    We will show that there exists $\cT=\{T_1,\dots,T_k \mid T_i \in \cC_i, \forall i\in [k]\}$ such that $\cT$ covers at least  $(1-1/e+\eps)n'$ elements of $\cG$.

    First note that every cluster in the solution $(S,\sigma)$  has precisely $n/k=U$ points, since $n= n'+mk=U k$ and each point has a capacity of serving at most $U=n'/k+m$ points. Thus, we have that $|S|=k$ and there are no zero radius clusters in $(S,\sigma)$.    
    Now, let $S_1 \subseteq S$ be the centers that have radius $1$.  We claim that $|S_1| \ge (1-1/e+\eps)k$. Suppose  $|S_1| < (1-1/e+\eps)k$. Then, we have
    \[
        \cost((S,\sigma)) \ge  |S_1|\cdot 1 + (k-|S_1|)2 > (1+\nicefrac{1}{e}-\eps)k,
    \]
    which is a contraction to the cost of $(S,\sigma)$.
    Furthermore, since each cluster in the solution has $n'/k+m$ points, we claim that $S_1$ contains at most one set point from each $\cC_j$, and contains  no element points. For the former, note that any two set points from $\cC_j, j\in[k]$ can cover at most $2n'/k+m <2U$ points within distance $1$, and hence, one of them contains less than $U$ points. Similarly, for the latter, note that any element point can cover at most $mk+1 \le \nicefrac{n'}{2k}+1 < U = \nicefrac{n'}k+m$ points within distance $1$, since $m\ge 1$.
    Now, consider the sets $\cT=\{T_1,\dots, T_{k}\}$ corresponding to the set points of $S_1$. Since, each set point of $S$ covers  at most $m$ set points and in total precisely $U$ points, we have that $\cT$ covers at least $|S_1| (U-m) \ge (1-\nicefrac{1}{e}+\eps)n'/k$ elements of $\cG$, as desired.

\end{proof}
This concludes the proof of the theorem.
\end{proof}

Finally, to obtain the proofs of~\Cref{thm: nphard} and \ref{thm:gapethhard}, we use the reduction of~\Cref{thm: maxcov red} on the hard instances of \maxcov obtained from~\Cref{thm: maxcov np-hard} and~\ref{thm: maxcov enu-hard}, respectively. Note that the condition $m<\nicefrac{n'}{2k^2}$ in~\Cref{thm: maxcov red} can be easily achieved by duplicating $O(k^2m)$ times every element of $\cG$. More specifically, in this case we duplicate each element $2mk^2$ times. Then, letting $n''$ denote the number of elements in the new instance, we have $n'' = n'\cdot 2mk^2$. Now, applying the reduction of~\Cref{thm: maxcov red} on this new instance of \maxcov, it implies that  $U:=\nicefrac{n''}{k}+m$ is still a feasible constraint since each set now contains exactly  ${2mk^2 n'}/{k}=n''/k$ elements. However, we also have that $m=\nicefrac{n''}{2k^2n'}\le \nicefrac{n''}{2k^2}$, as required. Finally, observe that since each element occurs $2n'mk^2$ times, any solution to the new instance of \maxcov covering $\delta$ fraction of elements, also covers $\delta$ fraction of the elements in the original instance.

\paragraph*{Acknowledgments.}
We thank the anonymous reviewer for suggesting improved hardness of approximation construction.

\bibliographystyle{alpha}
\bibliography{papers}
\appendix

\section{Uniform Capacities}\label{sec: uni}
Our first main result of this section is the following (restated for convenience).
 \UnifromMonSymm*

As mentioned previously, \Cref{thm: uni} implies $(6+\eps)$-\fpt \ approximation algorithm for uniform \csd. Interestingly, our algorithmic framework for uniform \csr can be easily adapted for uniform \csd, incurring only an additional factor in the approximation, and thereby yielding $(4+\eps)$-\fpt\ approximation. 

\uniDiameter*
See \Cref{tb:results} for summary of our results for uniform \csr and \csd. 
After setting up the notations and definitions, in~\Cref{ss: csd  algo} we  begin with the proof of \Cref{thm: uni dia} as it is simpler and serves as a warm-up for the more complicated proof of \Cref{thm: uni}, which appears in~\Cref{ss: csr algo}.

\subsection{Technical Overview}\label{ss: overview uni apx}
We overview our algorithms for uniform capacities in more details here. 
For the sake of ease of exposition, the details here differ from the actual algorithms.
We remark that our algorithms  slightly differ from those in~\cite{jaiswal_et_al:LIPIcs.ITCS.2024.65}, and in fact \cite{jaiswal_et_al:LIPIcs.ITCS.2024.65} have superior running time. 
Nonetheless, we designed these algorithms before the publication~\cite{jaiswal_et_al:LIPIcs.ITCS.2024.65}, and choose to keep them as is.

Note that the uniform capacity is at least $U\ge\frac nk$, as otherwise, there will be not feasible solution.
Our algorithm divides the clusters in $\cC$ into two categories: heavy and light (denoted $\cC_H$ and $\cC_L$ respectively).
A cluster $C \in \cC$ is \emph{heavy} if $|C| > \nicefrac{n}{20k^3}$; otherwise, it is a \emph{light} cluster.
The intuition behind this partition is the following: some heavy cluster must exist, and given such $C\in\cC_H$, a randomly selected point from $P$ belongs to $C$ with probability at least $\poly(1/k)$. 
Therefore, with probability at least $k^{-O(k)}$, we can obtain a set $X \subseteq P$ containing a single point from each heavy cluster $C\in\cC_H$.
From the other hand, the total number of points belonging to light clusters is at most $\nicefrac{n}{20k^2}$, and thus it is often the case that all the points belonging to light clusters can be easily ``absorbed''.
We begin by highlighting our algorithm for the capacitated \sod, which also serves as a warm-up for the algorithm we will present for the capacitated \sor.

\paragraph*{\textsf{Sum of Diameters.}}
Let $\cL=\cup\cC_L$ be all the points in light clusters. Initially all the light clusters $\cC_L$ are unsatisfied. 
We consider the heavy clusters iteratively.
Consider a point $x_i \in X$ such that $x_i$ belongs to a heavy cluster $C_i \in \cC_H$. 
The ball $B_i := \ball(x_i, d_i)$ contains the entire cluster $C_i$, i.e., $C_i \subseteq B_i$. 
Let $d_j$ be the maximum diameter of an unsatisfied light cluster $C_j$ intersecting $B_i$ (if there is no such cluster, $d_j=0$). We call such $C_j$, the \emph{leader} of $B_i$.
Let $S_i=C_i\cup(\ball(x_i,d_i+d_j)\cap\cL)$ be the points in the heavy cluster $C_i$ and all the points in light clusters intersecting the ball $\ball(x_i,d_i+d_j)$. The set $S_i$ has diameter at most $2d_i + 2d_j$, and contains all the light clusters intersecting $B_i$.
If $|S_i| \leq U$, we will form a new cluster $C'_i$ corresponding to $S_i$ that includes points from at least one cluster (specifically, $C_i$) from $\cC$.
Next, consider the scenario where $|S_i|>U$.
As $|C_i|\le U$, $S_i$ contains points from at least two clusters, thus we can open two clusters to serve the points in $S_i$ without exceeding the limit of $k$ clusters.
We divide the points in $S_i$ into two clusters: (1) $\hat{C}_i$ containing all the points in $C_i$ with diameter $2d_i$, and (2) $C'_i$ containing all the points in $\ball(x_i,d_i+d_j)\cap\cL$ with diameter $2d_i+2d_j$ (there are at most $|\cL| \leq U$ such points).
The total cost of these two clusters is bounded by $4d_i + 2d_j$. 
At the end of processing $C_i$ all the light clusters intersecting $B_i := \ball(x_i, d_i)$ become satisfied, and we update $\cL\leftarrow\cL\setminus S_i$.

After we finished going over all the heavy clusters (represented by points in $X$), all the points in $\cS=\bigcup_{x_i\in X}S_i$ have been taken care of. 
The set $\cL$ consist of points belonging to light clusters, not yet taken care of.
We proceed greedily: while there is an unclustered point $z\in\cL$, guess its cluster $C_z\in\cC_L$, where $d_z$ is the diameter of $C_z$. As all remaining clusters are light, we can form a cluster $C'_z := \ball(z, d_z)\cap\cL$ without violating the uniform capacity constraint, while ensuring it contains at least one cluster $C_z$ from $\cC$. The diameter of $C'_z$ is at most $2d_z$. 
Next we remove $\ball(z, d_z)$ from $\cL$.
Continue until all the points are clustered, i.e. $\cL=\emptyset$. See \Cref{fig:AltAlgUnifromDiam} for an illustration.

\paragraph*{\textsf{Sum of Radii.}}
We can follow the same ideas from \sod to obtain a $4+\eps$-approximation for \sor with uniform capacities. For each $x_i \in X \cap C_i$ where $C_i \in \cC_H$, we have $B_i := \ball(x_i, 2r_i) \supseteq C_i$. Now, as before, consider the unsatisfied light clusters intersecting $B_i$, and consider the set $S_i=C_i\cup(\ball(x_i,d_i+d_j)\cap\cL)$, where $r_j$ is the maximum radius of such unsatisfied light cluster $C_j$ intersecting $B_i$ (leader of $B_i$).
If $|S_i| \le U$, then a single cluster centered in $x_i$ of radius $2r_i+2r_j$ can serve all the points in $S_i$.
Otherwise, $|S_i| > U$. As previously, we want to serve the points in $S_i$ using two clusters. However, we can use $x_i$ as a cluster center only once. 
Nevertheless, we can sample another point $\hat{x}_i$ from $C_i$ (as $C_i$ is heavy). Next, we open two clusters:
(1) $\ball(\hat{x}_i,2r_i)$ to serve all the points in $C_i$, and (2) $\ball(x_i,2r_i+2r_j)$ to serve all the points in $S_i\setminus C_i$. This gives a $4+\eps$-approximation. 
We introduce a concept of \emph{good dense ball}.

\textbf{Shaving a factor using Good Dense Balls.}
In the algorithm above, the expensive case responsible for the factor $4$ in the approximation is when the set $S_i$ contains more than $U$ points. Note that as $|\cL|<\nicefrac{n}{20k^2}$, it must be the case that the cluster $C$ is very big $|C_i|\ge\nicefrac{n}{2k}$.
We created a cluster centered at an arbitrary center $\hat{x}_i\in C_i$, and thus in order to cover all the points in $C_i$ it had to have radius $2r_i$- twice that of the optimal.
Suppose that instead, we could find a small ball $D_i$ that contains enough points from $S_i$ (which we call a good dense ball) so that $x_i$ can serve the points of $S_i \setminus D_i$ without violating the uniform capacity. Then, we would only pay the radius of $D_i$ for serving the points of $D_i$, in addition to the radius $2r_i + 2r_j$ for $x_i$. 
While~\cite{jaiswal_et_al:LIPIcs.ITCS.2024.65} rely on a matching argument to find low-cost balls, we use a concept of good dense balls and show that one can find them in \fpt \ time.
The key observation is that using the fact that $C_i$ is so big, in \fpt\ time we can find a dense ball $D_i$ containing at least $\nicefrac{n}{20k^2}$ points from $C_i$, with a radius of only $r_i$.
As $|S_i|\le|C_i|+|\cL|\le  U+\nicefrac{n}{20k^2}$, given such a dense ball $D_i$, we can serve the points in $S_i$ using $D_i$ and another ball centered at $x_i$ of radius $2r_i + 2r_j$, and thus a total of $3r_i + 2r_j$ (compared to the $4r_i + 2r_j$ cost in our previous approach). 

The crux of the above argument lies in efficiently finding such a good dense ball $D_i$.  First note that such a dense ball exist: indeed, the ball of radius $r_i$ centered at $o_i$ contains $|C_i|\ge \nicefrac{n}{2k}$ points from $C_i$.
We begin by finding the densest ball $\hat{D}_i$ of radius $r_i$ in $P$. If we are lucky, it happens that $|\hat{D}_i \cap C_i| \geq \nicefrac{n}{20k^2}$, meaning $\hat{D}_i$ is a good dense ball for $C_i$, and we are done. If not, we temporarily delete all the points from $\hat{D}_i$ and recursively find the densest ball of radius $r_i$ among the remaining points. The key observation is that such recursion always finds a good dense ball within a depth of $4k$.
To see this, note that each time we are unlucky, we delete fewer than $\nicefrac{n}{20k^2}$ points from $C_i$.
After $4k$ recursion levels, the number of points remaining in $C_i$ is at least $\nicefrac{n}{2k} - 4k \cdot \nicefrac{n}{20k^2} \geq \nicefrac{n}{4k}$. This implies that $|\hat{D}_i| \geq \nicefrac{n}{4k}$ at every recursion (as we are picking the densest possible ball), and hence the total number of points deleted after $4k$ levels of recursion is at least $4k \cdot \nicefrac{n}{4k} = n$, which is a contradiction since $C_i$ still contains $\nicefrac{n}{4k}$ points.

\begin{wrapfigure}{r}{0.25\textwidth}
	\begin{center}
		\vspace{-20pt}
		\includegraphics[width=0.24\textwidth]{fig/leader}
		\vspace{-5pt}
	\end{center}
	\vspace{-15pt}
\end{wrapfigure}
\paragraph*{Improvement for $\ell_p$ norm objectives.}
For \csr with $\ell_p$ norm objective, our idea to get a better factor for $\ell_p$ norm is to  fine tune the definition of a leader. Recall that for  $B_i = \ball(x_i,2r_i)$,  where $x_i \in X$ belongs to a heavy cluster $C_i \in \cC_H$,  
the leader $C_j$ of $B_i$ was chosen to be the cluster with maximum radius $r_j$ among the unsatisfied light cluster intersecting $B_i$.
Instead, we now define the leader of $B_i$ to be an unsatisfied light cluster $C_j$ intersecting $B_i = \ball(x_i,2r_i)$ with maximum radius such that its center $o_j$ is in $B_i$. 
In the example on the right we consider a heavy cluster $C_i$, where we sampled a point $x_i\in C_i$ and consider the ball $B_i=\ball(x_i,2r_i)$. $B_i$ intersects $3$ light clusters ($C_1,C_2,C_3$) with radii $r_1<r_2<r_3$. The leader is $C_2$ because $o_2\in B_i$ while $o_3\notin B_i$.
Thus, for a light cluster $C_j$, if its center $o_j$ belongs to some ``heavy ball'' $\ball(x_i,2r_i)$ then $C_j$ will be covered in this first phase of the algorithm. 
From the other hand, if $C_j$ was not yet taken care of, then the eventual greedy process will take care of $C_j$.
We formalize this notion by a concept called \emph{Responsibility function} (see~\Cref{def: res fun csr}). 
Note that in this case, we open two clusters: (1) $D_i$ a cluster of radius at most $r_i$ containing at least $\nicefrac{n}{20k^2}$ points from $C_i$ (good dense ball),
and (2) a cluster $C'_i$ containing the remaining points of $C_i$, and all the points from light clusters within the ball $\ball(x_i,2r_i+r_j)$, with radius $2r_i+r_j$ (instead of $2r_i+2r_j$). Note that $C'_i$ fully contain the leader of $C_i$, and thus we fully covered at least two clusters (and hence have enough budget on centers).
Overall, we save a factor of $r_j$ in the radius of the cluster centered at $x_i$. 
Note that this solution yields a $3$ approximation w.r.t. the $\ell_\infty$ norm objective (\ckcen), which is our most significant contribution in this context (improving the $4$-factor from \cite{jaiswal_et_al:LIPIcs.ITCS.2024.65}).
More generally, we get improvement for any $\ell_p$ norm objective (see inequality \ref{eq:ellPratio}).

\begin{figure}[t]
  \centering
  \includegraphics[width=.8\linewidth]{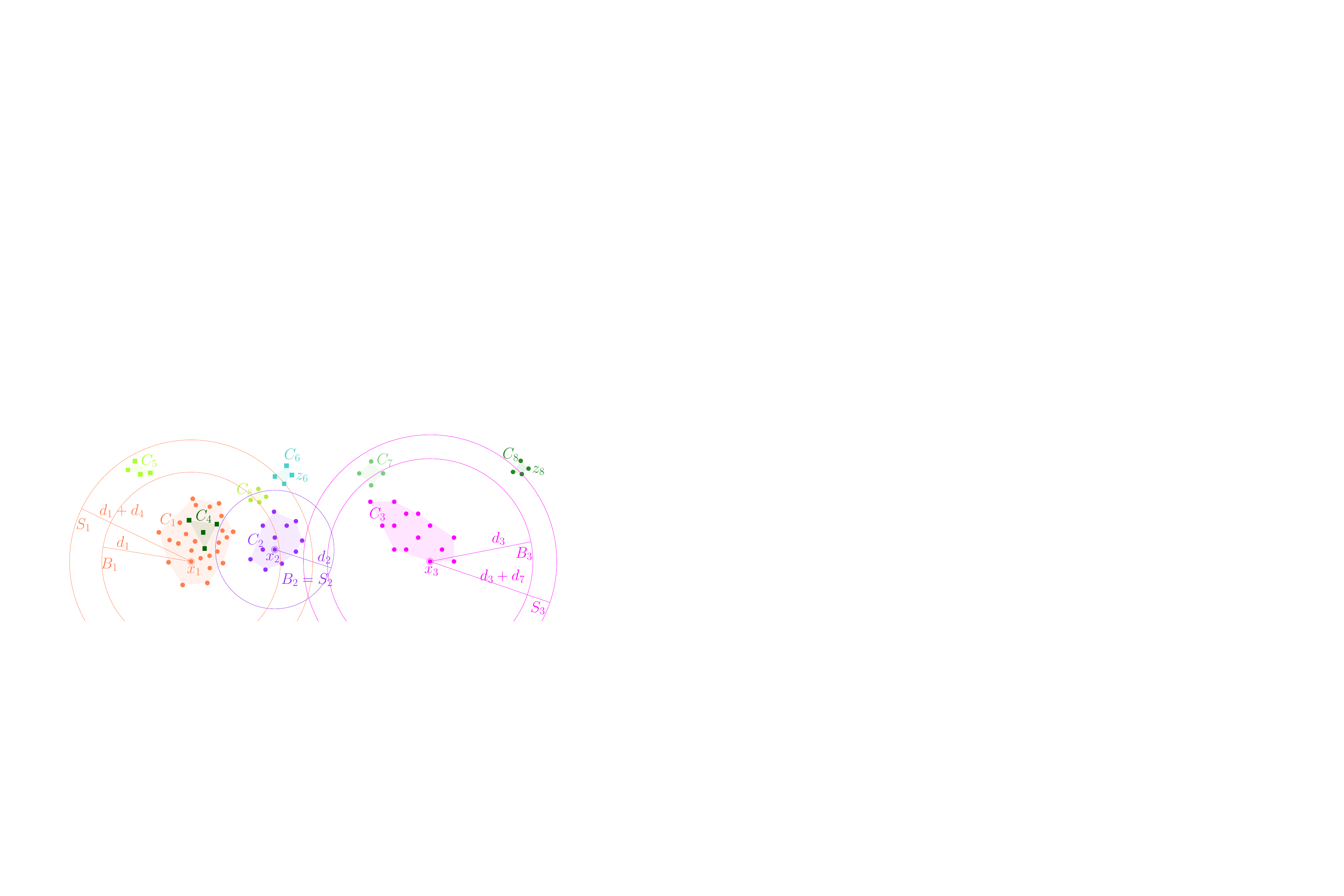}  
\caption{\footnotesize Illustration of our algorithm for uniform Capacitated \sod. The optimal solution consist of $3$ heavy clusters $\cC_H=\{C_1,C_2,C_3\}$, and $6$ light clusters $\cC_L=\{C_4,\dots,C_9\}$. Initially the algorithm guesses a set $X=\{x_1,x_2,x_3\}$ of a single point from each heavy cluster, and let $\cL$ be all the points in light clusters.
The ball $B_1=\ball(x_1,d_1)$ intersects the unsatisfied light clusters $C_4,C_8$. 
$S_1=C_1\cup(\ball(x_1,d_1+d_4)\cap\cL)$ contains the points in $C_1,C_4,C_5,C_8$, and subset of the points in $C_6$. It holds that $|S_1|>U$ and hence we create two clusters to take care of all these points: $C'_1$ for $C_1$, and $\hat{C}_1$ for $S_1\setminus C_1$. $C_4$ and $C_8$ become satisfied, and we update $\cL\leftarrow\cL\setminus S_1$.
Next we consider the ball $B_2=\ball(x_2,d_2)$. The ball $B_2$ does not intersects any unsatisfied cluster ($C_8$ is satisfied), and hence $S_2=B_2$ and we create a single cluster $C'_2$ containing $C_2$ points only.
Next we consider the ball $B_3=\ball(x_3,d_3)$ that intersects the unsatisfied light cluster $C_7$. The set $S_3=C_3\cup(\ball(x_3,d_3+d_7)\cap\cL)$ contains the clusters $C_3,C_7$ and some of the points in $C_8$. $|S_3|\le U$, and thus we cover all $S_3$ points using a single cluster $C'_3$.
Finally, the set $\cL$ consist of subsets of points from $C_6$ and $C_8$. We take care of those greedily by creating the clusters $C'_{z_6}$ and $C'_{z_8}$.
}
\label{fig:AltAlgUnifromDiam}
\end{figure}

\subsection{Notations and definitions}\label{ss: uni notdef}
We denote by $\cI=((P,\delta),k,U)$ an instance of uniform \csr or uniform \csd, depending upon the context. We denote by $\cC=\{C_1,\cdots,C_k\}$  an optimal (but fixed) clustering\footnote{For ease of analysis, we assume that $|\cC| =k$, otherwise, we can add zero radius clusters to make it $k$, without increasing the cost. Another way, which is used in \Cref{algo: 4 apx uni} and \Cref{algo: 3 apx uni}, is to guess $|\cC|$ which results into at most $k$ different choices.} of $\cI$.   
Without loss of generality, we assume that the clusters in $\cC$ are disjoint.
When $\cI$ is an instance of uniform \csr, we denote by $r^*_i$  and $o_i$ as the radius and the center of cluster $C_i \in \cC$, respectively, and let  $O = \{o_1,\cdots, o_k\}$. In this case, we let   $\text{OPT}= r^*_1 + \cdots+ r^*_k$. Note that, here $\cC$ is obtained from the assignment function $\sigma: P \rightarrow O$ of the optimal solution $(O,\sigma)$.
When $\cI$ is an instance of uniform \csd, we denote by $d^*_i$ as the diameter of cluster $C_i \in \cC$. In this case, we let  $\text{OPT}= d^*_1 + \cdots+ d^*_k$.
Finally, we assume we have $\epsilon$-approximation $\{r_1, \cdots, r_k\}$ of $\{r^*_1, \cdots, r^*_k\}$ (for e.g., using~\Cref{lem: eps apx}). Similarly, let  $\{d_1, \cdots, d_k\}$ denote  $\epsilon$-approximation of $\{d^*_1, \cdots, d^*_k\}$.

Once we have fixed an optimal clustering $\cC$ for $\cI$, we can partition the clusters in $\cC$ depending upon their size as follows.
\begin{definition} [Heavy, light, and almost-full clusters]
A cluster $C_i \in \cC$ is said to be a \emph{heavy} cluster if $|C_i| > \nicefrac{n}{20k^3}$, otherwise it is said to be a \emph{light} cluster. Furthermore, a heavy cluster $C_i \in \cC$ is said to be \emph{almost-full} if $|C_i| \ge \nicefrac{n}{2k}$.    
\end{definition}

\begin{definition}
    For $T \subseteq P$ and a cluster $C_i \in \cC$, we say $T$ \emph{hits} $C_i$ if $T \cap C_i \neq \emptyset$. Furthermore, for $\cC' \subseteq \cC$, we say that $T$ is a \emph{hitting set} for $\cC'$ if $T$ hits every cluster in $\cC$.
\end{definition}

\begin{remark}\label{rem:LargeKalgRadii}
        We assume that $k \le \sqrt[4]{n/30}$ (and hence, $n \ge 30$), otherwise $n <30k^4$, and hence the brute force algorithm along with \Cref{thm: feasibility by matching}, running in \fpt\ time in $k$ exactly solves uniform \csr. To see this, note that for \csr, we can guess the set of optimal centers and the corresponding radii correctly in time%
        \footnote{First guess the centers, there are ${n \choose k}$ options. Next for each center $x$, guess the furthest points $p$ assigned to $x$. There are $n$ options (and each such choice determines the corresponding cluster radius). Overall, there are ${n \choose k} n^{k}$ possible guesses. We can go over them one by one, checking for each guess whether it is feasible using \Cref{thm: feasibility by matching}, and thus find the optimum.\label{foot: bruteforce for sor}}
        ${n \choose k} n^{k+O(1)}$ which, in this case, is bounded by $2^{O(k \log k)}$. 

        For the \csd problem where $n <30k^4$, we simply run the same algorithm as above (for \csr) and return the obtained balls with matching as clusters (with the same matching) using \Cref{thm: feasibility by matching dia}. This will be a $2$-approximation (which is better than the guaranteed $4+\eps$ approximation). 
\end{remark}

\subsection{\csd}\label{ss: csd  algo}
The psuedo-code of our algorithm is presented in \Cref{algo: 4 apx uni}, which is based on the ideas presented in~\Cref{sec: tech}.
For the analysis of the algorithm, we make the following assumption. Later, when we prove~\Cref{thm: uni dia} in \Cref{{ss: proof of uni dia}}, we show how to remove this assumption.
\begin{assumption}\label{as: uni dia}
    All the guesses of \Cref{algo: 4 apx uni} are correct. 
\end{assumption}
\subsubsection{Analysis}

Consider $X \subseteq P$, the hitting set for heavy clusters $\cC_H$ obtained by the \Cref{algo: 4 apx uni} at Step~\ref{algo: uni hit heavy dia}, and
$Z \subseteq P$, the hitting set for subset $\cC'_L$ of light clusters in $\cC_L$ processed by the algorithm at Step~\ref{algo: uni hit light dia}. Let $\cC_{hit}$ be the set of clusters hit by $X \cup Z$. Then, note that each cluster in $\cC_{hit}$ is hit exactly by one point in $X \cup Z$.
We will use the following definitions that simplify the exposition.
\begin{definition}
    For a heavy cluster $C_i \in \cC$, let $x_i \in X$ such that $X \cap C_i = x_i$.
    We call $\ball(x_i,d_i)$, the \emph{ball of $C_i$} with respect to $X$ and denote it by $B_i$.
    Similarly, for a light cluster $C_t \in \cC$ hit by $Z$, we call $B_t :=  \ball(z_t,d_t)$,  the \emph{ball of $C_t$} with respect to $Z$, where $z_t = Z \cap C_t$.
\end{definition}

\begin{algorithm} [H]
\caption{$4+\epsilon$ approximation algorithm for uniform \csd} \label{algo: 4 apx uni}
  \KwData{Instance $\cI=((P,\delta),k,U)$ of Uniform \csd, $\epsilon>0$}
  \KwResult{A $4+\epsilon$ approximate solution $\tilde{\cC}$}
     \guess the number of clusters $k'$ in $\cC$\;
     
     \guess $\cC_H, \cC_F,\cC_L$, the set of heavy, almost-full, and light clusters in $\cC$, respectively\;
     Let $ C'_i, \hat{C}_i \leftarrow \emptyset, i\in[k'] $\;
     Let $X \leftarrow \emptyset, Z \leftarrow \emptyset$\;
     \guess $D= \{d_1,\cdots,d_{k'} \}$, an $\epsilon$-approximation of optimal diameters\;
     \lForEach{$C_i \in \cC_H$}{
        \guess a point $x_i$ from $C_i$ and add $x_i$ to $X$\label{algo: uni hit heavy dia}}
     Let $P' \leftarrow P \setminus \bigcup_{x_i \in X} \ball(x_i,d_i)$\;

     \While{$P' \neq \emptyset$}{
        pick a point $z \in P'$ and        add $z$ to $Z$\;\label{algo: uni hit light dia}
        \guess the cluster $C_z \in \cC$ containing $z$ and  
        $P'\leftarrow P' \setminus \ball(z,d_z)$\;
     }
    \guess the responsibility function $\Gamma : \cC_L \rightarrow \cC_{hit}$\tcp*{$\cC_{hit} := \{C \in \cC$  hit by $X \cup Z\}$}
        \ForEach{$C_i \in \cC_{hit}$}{
        let $a \in X \cup Z$ such that $a \in C_i$\;
        set $C'_i = \ball(a,d_i+\ell(d_i))$\label{algo: uni entry for hit dia}\tcp*{$\ell(d_i) := \max_{C_t \in \Gamma^{-1}(C_i)} \diam(C_t)$}
         \lIf{$C_i \in \cC_F$ and $\Gamma^{-1}(C_i) \neq \emptyset$}{
          set $\hat{C}_i = \ball(a,d_i)$\label{algo: uni special entry dia}}
    }
    Let $\cC'$ be the collection of $C'_i$ and $\hat{C}_i$ which are non-empty\;
    \lIf{$\cC'$ is feasible for $\cI$}{
        \Return $\tilde{\cC}$ obtained from \Cref{thm: feasibility by matching dia}\label{algo: find feasible sol dia}}
    \textbf{fail}\;
\end{algorithm}

\begin{definition}[Responsibility function, leader]\label{def: res fun csd}
    Consider a cluster $C_i \in \cC$ hit by $a_i \in X \cup Z$ and the corresponding ball $B_i$ of $C_i$ with respect to $X \cup Z$. For a light cluster $C_t \in \cC$, we say that $C_i$ is \emph{responsible} for $C_t$ in $X \cup Z$  if $\delta(p,a_i) \le d_i + d_t$ for every $p \in C_t$. 
    If there are multiple responsible clusters for $C_t$, then we break ties arbitrarily and pick one.
    Without loss of generality, we assume that a light cluster hit by $Z$ is responsible for itself. Also, note that by construction of $X$ and $Z$, we have that for every light cluster, there is a responsible cluster hit by $X \cup Z$.
    For a light cluster $C_t$, let $\Gamma(C_t)$ denote the responsible cluster  in $X \cup Z$ and hence,  $\Gamma^{-1}(C_i)$ denotes the set of responsibilities (light clusters) of cluster $C_i$ that is hit by $X \cup Z$. Finally, we call a cluster with maximum diameter in $\Gamma^{-1}(C_i)$ as a \emph{leader} of clusters in $\Gamma^{-1}(C_i)$ and denote its diameter by $\ell(d_i)$. If  $\Gamma^{-1}(C_i) =\emptyset$, we define $\ell(d_i)=0$.
\end{definition}
See~\Cref{fig:fig} for a pictorial illustration.
\begin{figure}[h]
	\begin{subfigure}{.55\textwidth}
		\centering
		\includegraphics[width=.8\linewidth]{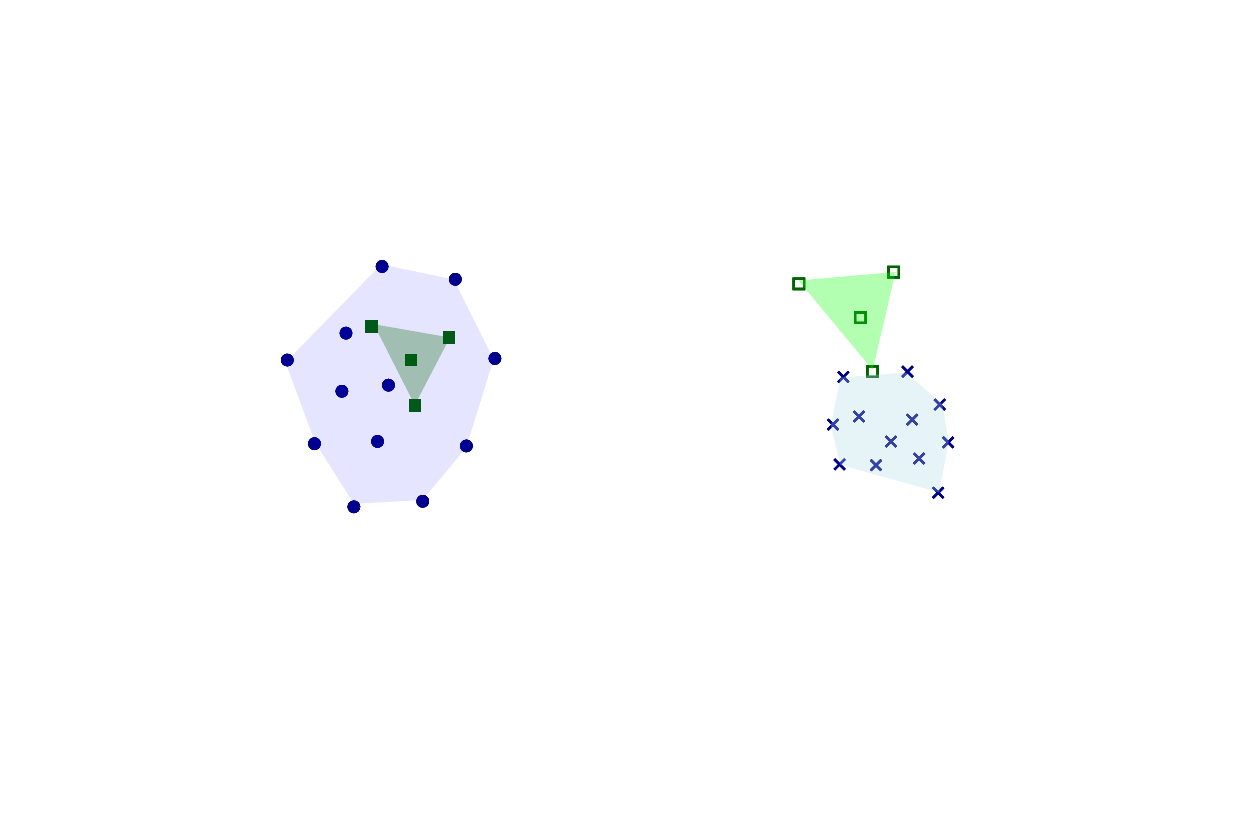}
		\caption{}
	\label{fig:sfig1}
\end{subfigure}%
\begin{subfigure}{.5\textwidth}
	\centering
	\includegraphics[width=.8\linewidth]{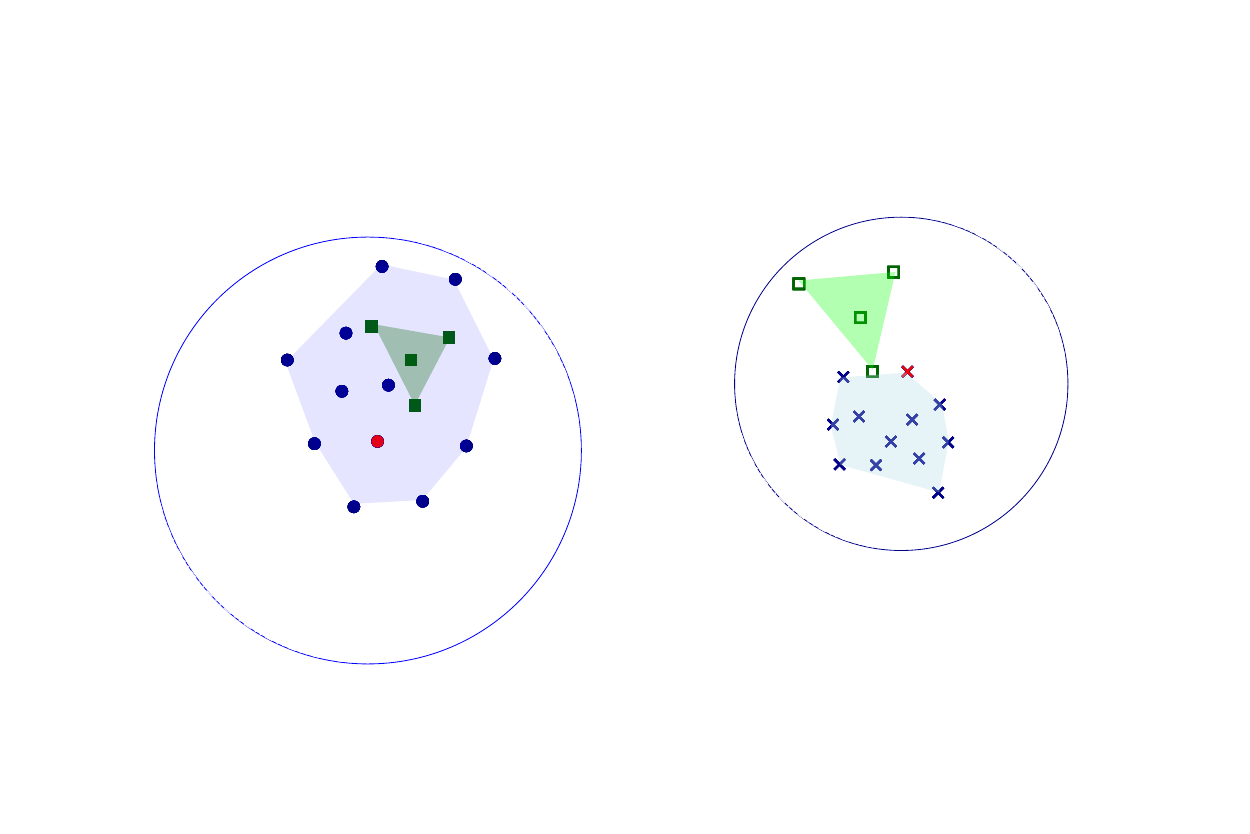}
	\caption{}
\label{fig:sfig2}
\end{subfigure}
\begin{subfigure}{.5\textwidth}
\centering
\includegraphics[width=.9\linewidth]{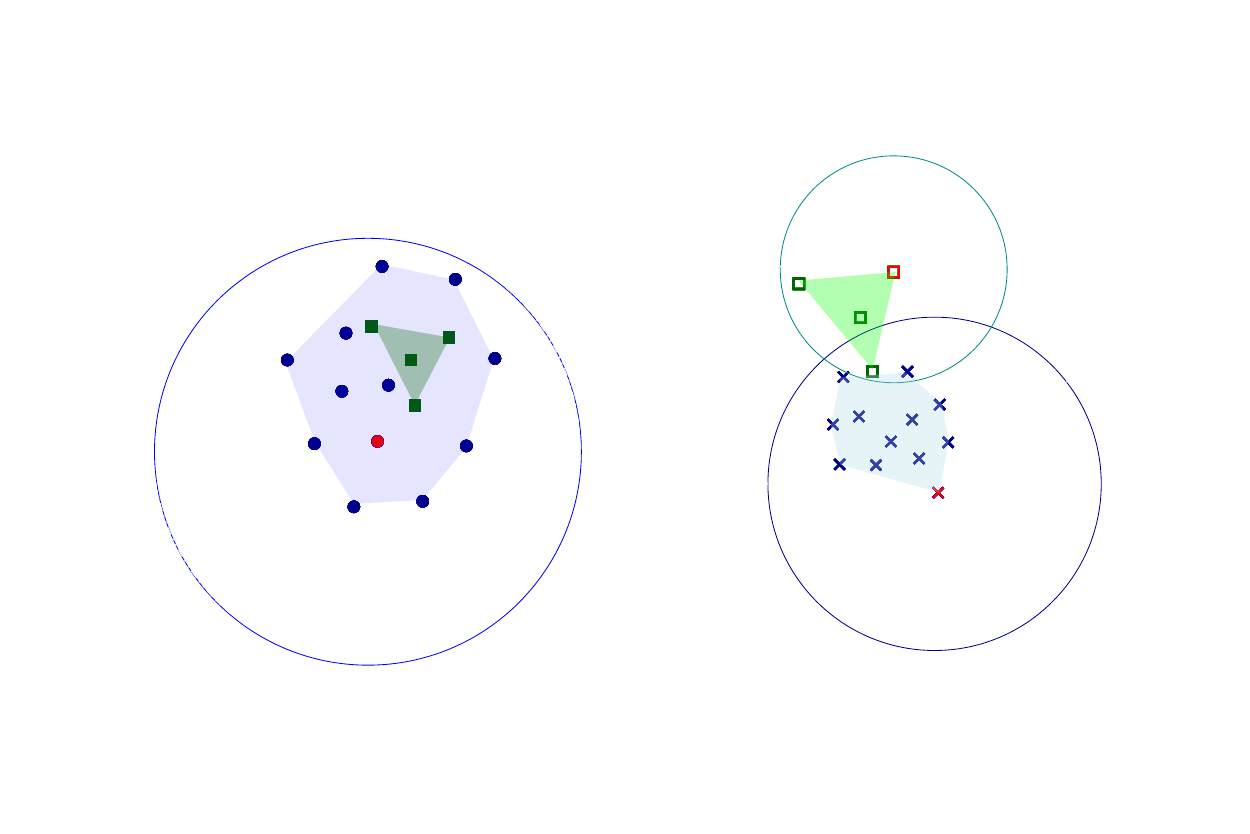}
\caption{}
\label{fig:sfig3}
\end{subfigure}
\begin{subfigure}{.5\textwidth}
\centering
\includegraphics[width=.6\linewidth]{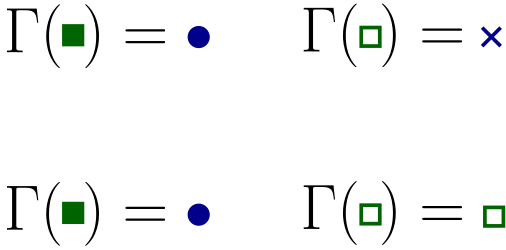}
\caption{}
\label{fig:sfig4}
\end{subfigure}
\caption{\footnotesize \Cref{fig:sfig1} shows an example with $k=4$ clusters -- each corresponding to a specific shape. Furthermore, the clusters with dark blue points correspond to heavy clusters, while cluster with green points correspond to light clusters. In~\Cref{fig:sfig2}, the red colored points are the sampled points from the heavy clusters and the red dotted balls correspond to balls centered at the sampled points of radius equal to the diameter of the corresponding cluster. Note that, in this case, the light clusters are hidden since they are completely covered by the red dotted balls.
\Cref{fig:sfig3} illustrates different choice of sampled points (marked in red) from the heavy clusters. As it can be seen, the red dotted balls of the heavy clusters do not cover all the points and hence, the light cluster of squares is exposed and a point from it is chosen to cover it.
In~\Cref{fig:sfig4}, the first row and second row shows the responsibility function $\Gamma$ due to the red points of \Cref{fig:sfig2} and \Cref{fig:sfig3}, respectively.
}
\label{fig:fig}
\end{figure}
The following lemma bounds the guarantee of \Cref{algo: 4 apx uni}.

\begin{lemma}\label{lm: uni bound dia}
\Cref{as: uni dia} implies that Step~\ref{algo: find feasible sol dia} successfully finds a feasible solution $\tilde{\cC}$ using $\cC'$ with cost $4(1+\epsilon)\opt$ in time $n^{O(1)}$.
\end{lemma}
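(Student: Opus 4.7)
My plan is to exhibit explicitly a partition of $P$ into parts $\tilde C$, each contained in some $C'_i$ or non-empty $\hat C_i$, witnessing that $\cC'$ is a feasible set for $\cI$, and then bound $\dcost(\cC')\le 4(1+\eps)\opt$. Given this, \Cref{thm: feasibility by matching dia} will return the required $\tilde\cC$ in $\poly(n)$ time and complete the lemma.

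First I will assign every original cluster $C_j\in\cC$ to exactly one ``box'' in $\cC'$ using the guessed $\Gamma$: a non-almost-full heavy hit cluster $C_i$ is packed into $C'_i$ together with its responsibilities $\Gamma^{-1}(C_i)$; an almost-full heavy $C_i$ with $\Gamma^{-1}(C_i)=\emptyset$ goes alone into $C'_i$; an almost-full heavy $C_i$ with $\Gamma^{-1}(C_i)\ne\emptyset$ has its own points placed in $\hat C_i$ while its responsibilities go into $C'_i$; and a light $C_t$ hit by $Z$ is packed with all of $\Gamma^{-1}(C_t)$ into $C'_t$. Since $\Gamma$ is defined on all of $\cC_L$ and the sets $\Gamma^{-1}(C_i)$ partition $\cC_L$, this assigns each original cluster to exactly one box; I will then set $\tilde C$ to be the union of (the points of) the clusters in that box. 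The containments $\tilde C\subseteq C'$ (resp.\ $\hat C$) follow from the definition of responsibility: if $C_t\in\Gamma^{-1}(C_i)$ then $C_t\subseteq\ball(a_i,d_i+d_t)\subseteq\ball(a_i,d_i+\ell(d_i))=C'_i$, where $a_i$ is the point of $X\cup Z$ hitting $C_i$, and $C_i\subseteq\ball(x_i,d_i)=\hat C_i$ because $\diam(C_i)\le d_i$.

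Second, I will verify the structural constraints. For $|\cC'|\le k$, write $|\cC'|=|\cC_{hit}|+a$, where $a$ counts almost-full hit $C_i$'s with $\Gamma^{-1}(C_i)\ne\emptyset$; sending each such $C_i$ to an arbitrary element of $\Gamma^{-1}(C_i)$ is injective (disjoint $\Gamma^{-1}$'s) and lands in $\cC_L\setminus\cC_{hit}$ (light clusters in $\cC_{hit}$ are self-responsible and hence cannot sit in $\Gamma^{-1}$ of a heavy cluster), so $a\le|\cC_L\setminus\cC_{hit}|$ and $|\cC'|\le|\cC_H|+|\cC_L|\le k$. For the capacities, any $\Gamma^{-1}(C_i)$ consists of at most $k$ light clusters of size at most $n/(20k^3)$ each, contributing at most $n/(20k^2)$ points; when a heavy body is present, it adds at most $n/(2k)$ in the non-almost-full case (total below $n/k\le U$, using that $U\ge n/k$ for any feasible $\cI$), and in the almost-full split the body sits alone in $\hat C_i$ with $|C_i|\le U$ while the responsibilities in $C'_i$ sum to at most $n/(20k^2)\le U$.

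Finally, I will bound the cost. Each $\diam(C'_i)\le 2(d_i+\ell(d_i))$ and $\diam(\hat C_i)\le 2d_i$, and since $\Gamma^{-1}$ partitions $\cC_L$, $\sum_{C_i\in\cC_{hit}}\ell(d_i)\le\sum_{C_t\in\cC_L}d_t$. Using $\cC_H\subseteq\cC_{hit}$ and $\cC_F\subseteq\cC_H$,
\[
\dcost(\cC')\le 2\sum_{C_i\in\cC_{hit}}d_i+2\sum_{C_t\in\cC_L}d_t+2\sum_{C_i\in\cC_H}d_i\le 4\sum_{C_i\in\cC}d_i\le 4(1+\eps)\opt.
\]
The main obstacle is the bookkeeping across the four cases above: the extra $\hat C_i$'s must be paid in count by the injection into $\cC_L\setminus\cC_{hit}$ and in cost by the spare $2\sum_{C_i\in\cC_H}d_i$ budget. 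The lemma will then follow from applying \Cref{thm: feasibility by matching dia} to $\cC'$, which in $\poly(n)$ time returns a feasible $\tilde\cC$ with $\dcost(\tilde\cC)\le\dcost(\cC')\le 4(1+\eps)\opt$.
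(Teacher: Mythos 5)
Your proof is correct and follows essentially the paper's approach: construct the partition $\tilde{\cC}$ from $\Gamma$ and the responsibility relation, verify $|\cC'|\le k$ and the capacity constraints case by case, bound $\dcost(\cC')\le 4\sum_i d_i$, and conclude via \Cref{thm: feasibility by matching dia}. The only cosmetic differences are that for a \tps cluster you place all of $C_i$ in $\hat{C}_i$ (and $\bigcup\Gamma^{-1}(C_i)$ in $C'_i$) where the paper places a small chunk $D'_i\subseteq C_i$ of size $n/(20k^2)$ in $\hat{C}_i$ and $\cT_i\setminus D'_i$ in $C'_i$ (both splits respect the capacities), and your cost bookkeeping charges the $\hat{C}_i$ terms against $2\sum_{\cC_H}d_i$ directly rather than passing through the paper's looser intermediate bound $\sum_{C_i\in\cC_{hit}}(4d_i+2\ell(d_i))$.
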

\begin{proof}
First note that $|\cC'| \le k'$, since any $\hat{C}_i$ that is added to $\cC'$ can be charged to a cluster in $\Gamma^{-1}(C_i)$, as $\Gamma^{-1}(C_i) \neq \emptyset$.
We partition the clusters in $\cC_{hit}$ into two parts, as follows.
\begin{definition}
We call cluster $C_i \in \cC$ hit by $X \cup Z$, a \emph{\tps} cluster if $C_i \in \cC_F$ and $\Gamma^{-1}(C_i)  \neq \emptyset$.
Otherwise, we call $C_i$, a \emph{\tpr} cluster.
\end{definition}

For $C_i \in \cC_{hit}$, let $\cT_i = C_i \cup  \bigcup_{C_t \in \Gamma^{-1}(C_i)} C_t$. Then, note that $\{\cT_i\}_{C_i \in \cC_{hit}}$ partitions the set $P$. Also, note that $|\cT_i| \le U$ for a \tpr cluster $C_i \in \cC_{hit}$. This is because, in this case, $C_i \in \cC_L$ or $C_i \in \cC_H \setminus \cC_F$ or $\Gamma^{-1}(C_i) = \emptyset$, and hence $|\cT_i| \le U$.
Consider $\cC'$ constructed by the algorithm. Then, note that for every cluster $C_i \in \cC_{hit}$, there is a set $C'_i \in \cC'$. Further, if $C_i$ is a \tps cluster, then there is an additional set $\hat{C}_i \in \cC'$. 

We show that $\cC'$ is a feasible set for $\cI$ (see \Cref{def: feasible set csd}) by showing a feasible solution $\tilde{\cC}$ corresponding to $\cC'$, abusing the notation that is used in Step~\ref{algo: find feasible sol dia}. 
Therefore, Step~\ref{algo: find feasible sol dia} is successful and returns a feasible solution $\tilde{\cC}$ for $\cI$ such that $\dcost(\tilde{\cC}) \le \dcost(\cC')$. 
As argued before, we have $|\cC'| \le k' \le k$.
Hence, it is sufficient to assign points from each $\cT_i$'s to a specific cluster $\tilde{C}_i \in \tilde{\cC}$ without violating the uniform capacity, since  $\cT_i$'s form a partition of $P$ and $\cT_i \subseteq C'_i$ or $\cT_i \subseteq \hat{C}_i$.
Consider a \tpr cluster $C_i$, then define $\tilde{C}_i = \cT_i$.
Note that the number of points in the cluster $\tilde{C}_i$ is 
at most $|\cT_i| \le U$, since $C_i$ is a \tpr cluster. Also, $\diam(\tilde{C}_i) \le 2(d_i + \ell(d_i))$.

Now, consider a \tps cluster $C_i$ and the corresponding sets $C'_i$  and $\hat{C}_i$ in $\cC'$.
Let $C_j \in \cC$ be the leader of $\Gamma^{-1}(C_i)$.
Let $D'_i \subseteq C_i$ such that\footnote{Such $D'_i$ is feasible since $|C_i| \ge n/2k$.} $|D'_i| = \nicefrac{n}{20k^2}$. Then, define $\tilde{C}_i = D'_i$ and define $\tilde{C}_j = \cT_i \setminus D'_i$.
In this case, we have  $|\tilde{C}_i| \le\nicefrac{n}{20k^2} \le U$ and  $|\tilde{C}_j| \le |\cT_i \setminus D'_i| \le U$. Finally, $\diam(\tilde{C}_j) \le 2d_i$ and $\diam(\tilde{C}_i) \le 2d_i + 2d_j$.

\textbf{Cost of $\cC'$. } 
We first analyze the cost $\cC'$ for  $\ell_1$ objective function, which is uniform \csd.
We have, in this case,  $\dcost(\tilde{\cC}) \le \sum_{C'_i \in \cC'} \diam(C'_i) + \sum_{\hat{C}_i \in \cC'} \diam(\hat{C}_i) = \dcost(\cC')$. Furthermore,
\begin{align*}
\dcost(\cC') &= \sum_{C'_i \in \cC'} \diam(C'_i) + \sum_{\hat{C}_i \in \cC'} \diam(\hat{C}_i)
\le \sum_{C'_i \in \cC'} (2d_i + 2\ell(d_i)) + \sum_{\hat{C}_i \in \cC'}2d_i\\
&\le \sum_{C_i \in \cC_{hit}} (4d_i + 2\ell(d_i))
\le 4 \sum_{C_i \in \cC} d_i,
\end{align*}
where the last inequality follows since the leaders are pairwise disjoint  as $\Gamma^{-1}(C_i)$ are pairwise disjoint, for $C_i \in \cC_{hit}$. Therefore, $\dcost(\cC') \le 4 (1+\epsilon)\opt$, and hence the cost of $\tilde{\cC}$ returned by the algorithm at Step~\ref{algo: find feasible sol dia} is $\dcost(\tilde{\cC}) \le 4 (1+\epsilon)\opt$.

Now consider $\ell_p$ objective function, for $p\ge 1$. 
We have,  $\dcost(\tilde{\cC}) \le  (\sum_{C'_i \in \cC'} \diam(C'_i)^p + \sum_{\hat{C}_i \in \cC'} \diam(\hat{C}_i)^p )^{1/p} = \dcost(\cC')$. Furthermore,
\begin{align*}
    \frac{\dcost(\cC')}{(\sum_{i \in [k']} (d_i)^p)^{1/p}} &\le \left(\frac{\sum_{C'_i \in \cC'} (2d_i + 2\ell(d_i))^p + \sum_{\hat{C}_i \in \cC'} 2^p(d_i)^p}{\sum_{i \in [k']} (d_i)^p}\right)^{1/p}\\
     &= 2^p \left(\frac{\sum_{C'_i \in \cC'} (d_i + \ell(d_i))^p + \sum_{\hat{C}_i \in \cC'} (d_i)^p}{\sum_{i \in [k']} (d_i)^p}\right)^{1/p}\\
     &\le  2^p \left(\frac{\sum_{C'_i \in \cC'}2^{p-1} ((d_i)^p + (\ell(d_i))^p) + \sum_{\hat{C}_i \in \cC'} (d_i)^p}{\sum_{i \in [k']} (d_i)^p}\right)^{1/p}\\
     &\le  2^p (2^{p-1} +1) \left(\frac{\sum_{i \in [k']}  (d_i)^p}{\sum_{i \in [k']} (d_i)^p}\right)^{1/p}\\
    &\le 2  (2^{p-1}+1)^{1/p}
\end{align*}
Hence, we have 
\[
\dcost(\cC') \le  2  (2^{p-1}+1)^{1/p} (\sum_{i \in [k']} (d_i)^p)^{1/p} \le  2 (1+\eps) (2^{p-1}+1)^{1/p} (\sum_{i \in [k']} (d^*_i)^p)^{1/p}.\] 
When the objective is a monotone symmetric norm of the cluster diameters, then this gives $(4+\epsilon)$-approximation.

Finally, note that the running time of \Cref{algo: 3 apx uni} is $n^{O(1)}$, since the overall algorithm is linear except Step~\ref{algo: find feasible sol dia} which requires time  $n^{O(1)}$. 
\end{proof}
\subsubsection{Proof of \Cref{thm: uni dia}} \label{ss: proof of uni dia}
Now, we are ready to prove \Cref{thm: uni dia}.
    Note that it is sufficient to show how to remove \Cref{as: uni dia} in \Cref{lm: uni bound dia}. We think of \Cref{algo: 4 apx uni} as a linear program that  has access to a guess function and we will transform it to a branching program without guess function. We show how to replace the guess function.
    The algorithm makes the following guesses.
    \begin{enumerate}
        \item[E1] Guess $|O|$ and the clusters in $\cC_H,\cC_F,$ and $\cC_L$. This can be done by enumerating $k2^k2^k = 2^{O(k)}$ choices.
        \item[E2] Guess an $\epsilon$-approximation of optimal diameters. This can be obtained by enumerating $(k/\epsilon)^{O(k)} n^2$ choices given by~\Cref{lem: eps apx}. 
        \item[E3] Guess the responsibility function $\Gamma: \cC_L \rightarrow \cC_{hit}$. Although, this process depends on $(E4)$ ($X \cup Z$), but we can enumerate all $k^k$ choices for $\Gamma$ and later work with only those that are compatible with $(E4)$
        \item[E4] Guess a point from each heavy cluster. Towards this, we sample a point from every heavy cluster with probability at least $(20k^3)^{-k}$, since a heavy cluster contains at least $n/20k^3$ points. 
        We repeat  this sampling procedure $t=(20k^3)^k$ times and create a branch for each repetition. Hence, the probability that at least one branch of the executions correspond to hitting set for $\cC_H$ (good branch) is $\ge 1 - (1 - (20k^3)^{-k})^t \ge 1- e^{t (20k^3)^{-k}}>3/5$. Thus $(E4)$ results in $(20k^3)^k$ branches such that the probability that there exists a good branch in these executions is $3/5$.

    \end{enumerate}

    As mentioned above, we transform the linear execution of \Cref{algo: 4 apx uni} into a branching program by replacing each guess function with a (probabilistic) branching subroutine. Hence, we obtain an algorithm that successfully finds a $4(1+\epsilon)$-approximate solution $\tilde{\cC}$ with probability $3/5$ with running time $2^{O(k)}(k/\epsilon)^{O(k)}k^k(20k^3)^k n^{O(1)} = 2^{O(k \log (k/\epsilon))}n^{O(1)}$.

\subsection{\csr} \label{ss: csr algo}
The psuedo-code of our algorithm is presented in \Cref{algo: 3 apx uni}. The basic framework is similar to \Cref{algo: 4 apx uni} for uniform \csd. First, find a hitting set for heavy clusters and uncovered light clusters. Then, for regular clusters, assign the points from the corresponding clusters to the hitting set. For special clusters, open another cluster (center) to serve the extra points. It is not hard to see that \Cref{algo: 4 apx uni} can be easily modified to get $(4+\epsilon)$-approximation even for uniform \csr. However, to get factor $(3+\epsilon)$-approximation, we use the following crucial concept of good dense ball.

\begin{definition}[Good Dense Ball for $C_i$]\label{def: uni good dense ball}
    For a cluster $C_i \in \cC, p \in P,$ and a positive real $r$, we say that $\ball(p,r)$ is a \emph{good dense ball for $C_i$} if $|\ball(p_,r) \cap C_i| \ge \nicefrac{n}{20k^2}$ and $p \notin O \setminus \{o_i\}$.
\end{definition}
For the analysis of the algorithm, we make the following assumption. Later, when we prove~\Cref{thm: uni} in \Cref{{ss: proof of uni}}, we show how to remove this assumption.
\begin{assumption}\label{as: uni}
    All the guesses of \Cref{algo: 3 apx uni} are correct. 
     Moreover, the points guessed by the algorithm at Step~\ref{algo: uni hit heavy} from heavy clusters are disjoint from the optimal centers $O$, i.e., $X \cap O = \emptyset$.
\end{assumption}

\begin{algorithm} [h!]
\caption{$3+\epsilon$ approximation algorithm for uniform \csr} \label{algo: 3 apx uni}
  \KwData{Instance $\cI=((P,\delta),k,U)$ of Uniform \csr. $\epsilon>0$}
  \KwResult{A $3+\epsilon$ approximate solution $(S,\sigma)$}
     \guess the number of clusters $k'$ in $\cC$\;
     
     \guess $\cC_H, \cC_F,\cC_L$ the set of heavy, almost-full, and light clusters in $\cC$, respectively;

     Let $\cS = (\emptyset,0), F\leftarrow P, X \leftarrow \emptyset, Z \leftarrow \emptyset,$\;
     \guess $R= \{r_1,\cdots,r_{k'} \}$, an $\eps$-approximation of optimal radii\;
     \lForEach{$C_i \in \cC_H$}{
        \guess a point $x_i$ from $C_i$ and add $x_i$ to $X$\label{algo: uni hit heavy}}
     Let $P' \leftarrow P \setminus \bigcup_{x_i \in X} \ball(x_i,2r_i)$\;
     \While{$P' \neq \emptyset$}{
        pick a point $z \in P'$\;
        add $z$ to $Z$\;
        \guess the cluster $C_z \in \cC$ containing $z$ and  $P'\leftarrow P' \setminus \ball(z,2r_z)$\;\label{algo: uni hit light}
     }
    \guess the responsibility function $\Gamma : \cC_L \rightarrow \cC_{hit}$\tcp*{$\cC_{hit} := \{C \in \cC$  hit by $X \cup Z\}$}
    \ForEach{$a_i \in X \cup Z$}{
        let $C_i$ be the cluster hit by $a_i$\;
         add $(a_i, 2r_i + \ell(r_i))$ to $\cS$\label{algo: uni entry for hit}\tcp*{$\ell(r_i) := \max_{C_t \in \Gamma^{-1}(C_i)} \rad(o_t)$}}

    $F \leftarrow F \setminus (X \cup Z)$\label{algo: uni rem xz}\;
    Mark all $C_i \in \cC_F$ such that $\Gamma^{-1}(C_i) \neq \emptyset$\;
     \While{$\exists$ a marked cluster $C_i$\label{algo: uni while}}{
            $\ball(y_i,r_i) \leftarrow$ \FGDB$(\ball(x_i,2r_i),r_i)$\;\label{algo: uni goodense ball}
            add $(y_i,r_i)$ to $\cS$\; \label{algo: uni special entry}
            unmark $C_i$ and $F \leftarrow F \setminus \{y_i\}$\;\label{algo: uni rem y}
     }
    \lIf{$\cS$ is feasible  for $\cI$}{
        \Return $(S,\sigma)$ obtained from \Cref{thm: feasibility by matching}\label{algo: find feasible sol}}
    \textbf{fail}\;
\end{algorithm}

\begin{algorithm}[h!]
\caption{\FGDB$(\ball(x_i,2r_i),r_i)$} \label{algo:uni good dense ball}

    $P'' \leftarrow P$\;
    \For{$4k$ times}{
        Let $D_i = \ball_{P''}(y_i,r_i), y_i \in F$ be a densest ball contained within $\ball(x_i,2r_i)$\;\label{algo: goodense ball find D_i}
        \guess \lIf{$|D_i \cap C_i| \ge \nicefrac{n}{20k^2}$ and $y_i \notin O \setminus \{o_i\}$}{\Return $D_i$}       
        \lElse{ $P'' \leftarrow P'' \setminus D_i$}
    }
     \textbf{fail}\;
\end{algorithm}

\subsubsection{Analysis}
The analysis is similar to that of uniform \csd.
Consider $X \subseteq P$, the hitting set for heavy clusters $\cC_H$ obtained by the \Cref{algo: 3 apx uni}  at Step~\ref{algo: uni hit heavy}, and
$Z \subseteq P$, the hitting set for subset $\cC'_L$ of light clusters in $\cC_L$ processed by the algorithm at Step~\ref{algo: uni hit light}. We have the similar definitions from the analysis of uniform \csd.

\begin{definition}
    For a heavy cluster $C_i \in \cC$, let $x_i \in X$ such that $X \cap C_i = x_i$.
    We call $\ball(x_i,2r_i)$, the \emph{ball of $C_i$} with respect to $X$ and denote it by $B_i$.
    Similarly, for a light cluster $C_t \in \cC$ hit by $Z$, we call $B_t :=  \ball(z_t,2r_t)$,  the \emph{ball of $C_t$} with respect to $Z$, where $z_t = Z \cap C_t$.
\end{definition}

\begin{definition}[Responsibility function, leaders]\label{def: res fun csr}
    Consider a cluster $C_i \in \cC$ hit by $X \cup Z$ and the corresponding ball $B_i$ of $C_i$ with respect to $X \cup Z$. For a  light cluster $C_t \in \cC$, we say that $C_i$ is \emph{responsible} for $C_t$ in $X \cup Z$  if $o_t \in B_i$. 
    If there are multiple responsible clusters for $C_t$, then we break ties arbitrarily and pick one.
    Without loss of generality, we assume that a light cluster hit by $Z$ is responsible for itself. Also, note that by construction of $X$ and $Z$, we have that for every light cluster, there is a responsible cluster hit by $X \cup Z$.
    For a light cluster $C_t$, let $\Gamma(C_t)$ denote the responsible cluster  in $X \cup Z$ and hence,  $\Gamma^{-1}(C_i)$ denotes the set of responsibilities (light clusters) of cluster $C_i$ that is hit by $X \cup Z$. Finally, we say a cluster with maximum radius in $\Gamma^{-1}(C_i)$ as a \emph{leader} of clusters in $\Gamma^{-1}(C_i)$ and denote its radius by $\ell(r_i)$. If  $\Gamma^{-1}(C_i) =\emptyset$, we define $\ell(r_i)=0$.
\end{definition}
The following claim asserts that 
\FGDB (\Cref{algo:uni good dense ball}) runs at most $4k$ times before finding a good dense ball for $C_i \in \cC$, when invoked on input $(\ball(x_i,2r_i),r_i)$.

\begin{claim}\label{cl: bound on findgoodenseball}
Let $\cC_p  = \{C_{i_1},C_{i_2},\cdots\}$ be the almost-full clusters processed by \Cref{algo: 3 apx uni} in that order, and let $x_{i_j} \in C_{i_j} \cap X$.
    Then, \Cref{as: uni} implies that on input $(\ball(x_{i_j},2r_{i_j}),r_{i_j})$, the subroutine  $\FGDB$ finds a good dense ball for $C_{i_j} \in \cC_p$ within $4k$ iterations. 
\end{claim}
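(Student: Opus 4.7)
My plan is to prove the claim by contradiction, following the outline sketched in \Cref{ss: overview uni apx}: assume FGDB runs for its full $4k$ iterations on input $(\ball(x_{i_j},2r_{i_j}),r_{i_j})$ without returning a good dense ball for $C_{i_j}$, and derive a contradiction via a pigeonhole/disjointness count. The first step is to establish that $o_{i_j} \in F$ throughout this call. Three ingredients suffice: (i) by \Cref{as: uni} we have $X\cap O=\emptyset$, so $o_{i_j}\notin X$; (ii) since $x_{i_j}\in C_{i_j}$ we have $\delta(x_{i_j},o_{i_j})\le r_{i_j}$, hence $o_{i_j}\in\ball(x_{i_j},2r_{i_j})$, so $o_{i_j}$ is never picked into $Z$; and (iii) by induction on the processing order of $\cC_p$, each previously returned center $y_{i_{j'}}$ (for $j'<j$) is, by \Cref{def: uni good dense ball}, outside $O\setminus\{o_{i_{j'}}\}$, so $y_{i_{j'}}\ne o_{i_j}$ because $i_{j'}\ne i_j$. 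Consequently, at every iteration $t$ of FGDB the ball $\ball_{P''_t}(o_{i_j},r_{i_j})$ is a legal candidate — its center lies in $F$ and, by the triangle inequality, it is contained in $\ball(x_{i_j},2r_{i_j})$ — and it contains all of $C_{i_j}\cap P''_t$. Since FGDB selects the densest candidate, setting $m_t:=|C_{i_j}\cap P''_t|$ gives the uniform lower bound $|D_{i',t}|\ge|\ball_{P''_t}(o_{i_j},r_{i_j})|\ge m_t$ at every iteration.

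The second step is to bound the $C_{i_j}$-loss accumulated across non-returning iterations. I would classify each non-returning iteration as either \emph{unlucky} (when $|D_{i',t}\cap C_{i_j}|<n/(20k^2)$) or \emph{wrong-center} (when $y_{i',t}\in O\setminus\{o_{i_j}\}$). Unlucky iterations directly contribute at most $n/(20k^2)$ to the loss. The crucial observation about wrong-center iterations is that each $y\in F$ can serve as the chosen center $y_{i',t}$ at most once throughout FGDB: after $D_{i',t}$ is removed from $P''$ we have $\ball_{P''_{t+1}}(y_{i',t},r_{i_j})=\emptyset$, so $y_{i',t}$ cannot be the densest center again; hence at most $k-1$ iterations are wrong-center. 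Combining the per-unlucky-iteration bound with an aggregate accounting that absorbs the wrong-center losses using the disjointness inequality $\sum_t|D_{i',t}|\le n$, I expect to obtain $m_{4k+1}\ge |C_{i_j}|-4k\cdot\frac{n}{20k^2}\ge\frac{n}{2k}-\frac{n}{5k}=\frac{3n}{10k}\ge\frac{n}{4k}$.

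Finally, I close the argument with a simple pigeonhole. Since $m_t$ is non-increasing, $m_t\ge m_{4k+1}\ge n/(4k)$ for every $t\le 4k$, so by the first step $|D_{i',t}|\ge n/(4k)$ at each such iteration. Summing, $\sum_{t=1}^{4k}|D_{i',t}|\ge 4k\cdot\frac{n}{4k}=n$; yet the balls are pairwise disjoint subsets of $P$ (each is removed from $P''$ immediately after being chosen), so $\sum_t|D_{i',t}|\le|P|=n$. Equality forces $P''_{4k+1}=\emptyset$, whence $m_{4k+1}=0$, contradicting $m_{4k+1}\ge n/(4k)>0$ (note we may assume $k\le\sqrt[4]{n/30}$ by \Cref{rem:LargeKalgRadii}, so $n/(4k)\ge 1$). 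The hardest part of the proof, I anticipate, is the accounting in the second step: turning the ``at most $k-1$ wrong-center iterations'' fact into a usable bound on their cumulative $C_{i_j}$-loss, rather than the naive per-iteration bound which can be as large as $|D_{i',t}|$. I plan to handle this by charging each wrong-center iteration to a distinct element of $O\setminus\{o_{i_j}\}$ and combining with the global disjointness $\sum_t|D_{i',t}|\le n$ to keep the overall loss below $|C_{i_j}|-n/(4k)$, preserving the overview-style bound modulo lower-order terms.
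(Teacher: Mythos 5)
Your skeleton is the same as the paper's: the induction showing that $o_{i_j}$ is still in $F$ when $C_{i_j}$ is processed (via $X\cap O=\emptyset$ from \Cref{as: uni}, $o_{i_j}\in\ball(x_{i_j},2r_{i_j})$ so $o_{i_j}\notin Z$, and the fact that earlier returned centers lie outside $O\setminus\{o_{i_{j'}}\}$), the observation that $\ball_{P''}(o_{i_j},r_{i_j})\supseteq C_{i_j}\cap P''$ is a legal candidate and hence lower-bounds the size of the chosen densest ball, and the final pigeonhole over the $4k$ pairwise-disjoint deleted balls summing to more than $n=|P|$. Those parts are correct and coincide with the paper's argument.

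The genuine gap is exactly the step you defer: controlling the $C_{i_j}$-points lost in iterations whose ball is rejected only because its center lies in $O\setminus\{o_{i_j}\}$. Your proposed charging (at most $k-1$ such iterations, one per element of $O\setminus\{o_{i_j}\}$, combined with $\sum_t|D_t|\le n$) bounds the \emph{number} of such iterations but not their cumulative $C_{i_j}$-loss, and no such bound of the form ``loss $\le |C_{i_j}|-n/(4k)$'' holds in general: a single wrong-center ball can contain \emph{all} of $C_{i_j}$ (take $o'\in O\setminus\{o_{i_j}\}$ very close to $o_{i_j}$, so that $C_{i_j}\subseteq\ball(o',r_{i_j})$ and the ball is contained in $\ball(x_{i_j},2r_{i_j})$; nothing in \Cref{algo:uni good dense ball} or \Cref{as: uni} prevents this ball from being the densest one chosen). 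After that single deletion, $|C_{i_j}\cap P''|$ can drop below $\nicefrac{n}{20k^2}$, your quantity $m_t$ collapses, and the pigeonhole in your third step no longer applies. So as submitted, Step 2 is a plan whose sketched charging argument would fail, not a proof. It is worth noting that the paper's own proof does not perform this accounting either: it asserts that after $\ell$ iterations at least $\nicefrac{n}{2k}-\nicefrac{n\ell}{20k^2}$ points of $C_{i_j}$ remain, which is valid only if every rejected ball contains fewer than $\nicefrac{n}{20k^2}$ points of $C_{i_j}$, i.e.\ it implicitly treats every rejection as a density rejection; the non-uniform variant (\Cref{algo:good dense ball}) avoids the issue by deleting only the offending center from $F'$ rather than the whole ball. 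So you have correctly isolated the one delicate point, but neither your proposed charging nor the naive per-iteration bound closes it for the uniform subroutine as written.
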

\begin{proof}
We prove that \Cref{as: uni} implies the following during the execution of while loop at Step~\ref{algo: uni while}. For  $j \in [|\cC_p|]$, when the while loop at Step~\ref{algo: uni while} is processing $C_{i_j}$,  it holds that
    \begin{enumerate}
        \item  On input $(\ball(x_{i_j},2r_{i_j}),r_{i_j})$, the subroutine  $\FGDB$ finds a good dense ball for $C_{i_j}$ within $4k$ iterations.
        \item For $j' > j$, we have $o_{i_{j'}} \in F$, after Step~\ref{algo: uni rem y}. That is, $o_{i_j'}$ is available in $F$ when $C_{i_{j'}}$ is processed in the future iterations.
    \end{enumerate}
Let $S_{i_j} = \ball(x_{i_j},2r_{i_j})$ and note that $C_{i_j} \subseteq S_{i_j}$. 
Also, let $F_j \subseteq F$ denote the points in $F$ when $C_{i_j}$ is processed at Step~\ref{algo: uni while}.
Consider $j=1$ and note that $o_{i_1} \notin X \cup Z$ due to \Cref{as: uni}, and hence $o_{i_1} \in F_1$. Now, consider the execution of $\FGDB(\ball(x_{i_1},r_{i_1}),r_{i_1})$ for the first $4k$ iterations. 
Let $d^\ell_1$ be the size of the densest ball within $S_{i_1} \cap P''$ centered at a point in $((S_{i_1} \cap F_1) \setminus O) \cup \{o_{i_1}\}$ after iteration $\ell \le 4k$.
Then, note that $d^\ell_1 >  \max\{\frac{n}{2k} - \frac{n\ell}{20k^2},0\} \ge 1$, since the number of points remaining in $C_{i_1}$ after $\ell$ iterations is at least $\max\{\frac{n}{2k} - \frac{n\ell}{20k^2},0\} \ge 1$ and  $o_i\in F_1$, due to \Cref{as: uni}.
Thus, after every iteration $\ell \in [4k]$, the algorithm deletes at least  ${\frac{n}{2k} - \frac{n\ell}{20k^2}} > \frac{3n}{10k}\ge 1$ many points from $S_{i_1}$. Hence, after $\ell=4k$ iterations, the total number of points deleted from $S_{i_1}$ is
\[
> \sum_{t=1}^{4k} \left(\frac{n}{2k} - \frac{nt}{20k^2}\right) > \frac{3n}{10k} \cdot 4k > n.
\]
On the other hand, the number of points remaining in $C_{i_1}$ is at least $\frac{n}{2k} - \frac{4n}{20k} > \frac{n}{20k^2} \ge 1$, resulting in a contradiction. Furthermore, note that the center $y_{i_1}$ of the good dense ball for $C_{i_1}$ returned by \FGDB is such that $y_{i_1} \notin O \setminus \{o_{i_1}\}$, and hence  for $j' > 1$, we have $o_{i_{j'}} \in F$, after Step~\ref{algo: uni rem y}. This concludes the base case of the induction.
Now we assume that the above two points are true for all $j < |\cC_p|$, and consider the processing of $C_{i_{j}}$ at Step~\ref{algo: uni while}.
From induction hypothesis ($2$), we have that $o_{i_j} \in F_j$. And hence, the above argument implies that  $d^\ell_j >  \max\{\frac{n}{2k} - \frac{n\ell}{20k^2},0\} \ge 1$, since the number of points remaining in $C_{i_j}$ after $\ell$ iterations is at least $\max\{\frac{n}{2k} - \frac{n\ell}{20k^2},0\} \ge 1$.
Thus, after every iteration $\ell \in [4k]$, the algorithm deletes at least  ${\frac{n}{2k} - \frac{n\ell}{20k^2}} > \frac{3n}{10k}\ge 1$ many points from $S_{i_j}$. Hence, after $\ell=4k$ iterations, the total number of points deleted from $S_{i_j}$ is
\[
> \sum_{t=1}^{4k} \left(\frac{n}{2k} - \frac{nt}{20k^2}\right) > \frac{3n}{10k} \cdot 4k > n.
\]
On the other hand, the number of points remaining in $C_{i_j}$ is at least $\frac{n}{2k} - \frac{4n}{20k} > \frac{n}{20k^2} \ge 1$, resulting in a contradiction.  Finally, note that the center $y_{i_j}$ of the good dense ball for $C_{i_j}$ returned by \FGDB is such that $y_{i_j} \notin O \setminus \{o_{i_j}\}$, and hence  for $j' > j$, we have $o_{i_{j'}} \in F$, after Step~\ref{algo: uni rem y}.

\end{proof}
Now we can bound the guarantee of \Cref{algo: 3 apx uni}.
\begin{lemma}\label{lm: uni bound}
\Cref{as: uni} implies that Step~\ref{algo: find feasible sol} successfully finds a feasible solution $(S,\sigma)$ using $\cS$ with cost $3(1+\epsilon)\opt$ in time $n^{O(1)}$.
\end{lemma}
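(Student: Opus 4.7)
The plan is to exhibit an explicit assignment $\sigma':P\to Y$, where $Y$ is the set of centres appearing as the first coordinate of entries in $\cS$, that respects the uniform capacity $U$ and at each entry $(y,\gamma)\in\cS$ serves only points within distance $\gamma$ of $y$. This will certify that $\cS$ is feasible for $\cI$, whereupon \Cref{thm: feasibility by matching} recovers a feasible $(S,\sigma)$ of cost $\le\cost(\cS)$ in time $n^{O(1)}$; the remaining task is to bound $\cost(\cS)$ by $3(1+\eps)\opt$. Following the diameter case, I will partition the hit clusters into \emph{special} ones (almost-full with non-empty $\Gamma^{-1}$) and \emph{regular} ones (everything else in $\cC_{hit}$), set $\cT_i=C_i\cup\bigcup_{C_t\in\Gamma^{-1}(C_i)}C_t$, and note that $\{\cT_i\}_{C_i\in\cC_{hit}}$ partitions $P$ by the definition of $\Gamma$.

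For a regular $C_i$ send all of $\cT_i$ to $a_i$. For a special $C_i$, invoke \Cref{cl: bound on findgoodenseball} to obtain $y_i$ with $|\ball(y_i,r_i)\cap C_i|\ge n/(20k^2)$; pick an arbitrary $D'_i\subseteq\ball(y_i,r_i)\cap C_i$ with $|D'_i|=n/(20k^2)$, send $D'_i$ to $y_i$ (trivially at distance $\le r_i$ and within capacity $U$), and send $\cT_i\setminus D'_i$ to $a_i$. Capacity for regular $a_i$: either $C_i$ is light and $|\cT_i|\le k\cdot n/(20k^3)<U$, or $\Gamma^{-1}(C_i)=\emptyset$ and $|\cT_i|=|C_i|\le U$, or $C_i$ is heavy but not almost-full so $|\cT_i|<n/(2k)+k\cdot n/(20k^3)\le n/k\le U$; for special $a_i$, $|\cT_i\setminus D'_i|\le|C_i|+n/(20k^2)-n/(20k^2)=|C_i|\le U$. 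The radius entry $(a_i,2r_i+\ell(r_i))$ is respected because for $p\in C_i$ we have $\delta(a_i,p)\le 2r_i$, and for $p\in C_t\in\Gamma^{-1}(C_i)$ the responsibility condition $o_t\in B_i=\ball(a_i,2r_i)$ gives $\delta(a_i,p)\le\delta(a_i,o_t)+\delta(o_t,p)\le 2r_i+r_t\le 2r_i+\ell(r_i)$. Finally the centre budget: every special cluster contributes one light witness in $\cC_L\setminus\cC_{hit}$, and these witnesses are distinct by disjointness of the $\Gamma^{-1}$ sets, so $|\cS|\le|\cC_{hit}|+|\cC_L\setminus\cC_{hit}|\le|\cC_H|+|\cC_L|=k'$.

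For the cost, write $A=\sum_{C_i\in\cC_{hit}}r_i$ and $D=\sum_{C\in\cC\setminus\cC_{hit}}r_C$, so that $A+D=\opt$ (all clusters not hit are light). Each $\ell(r_i)$ is the radius of some leader light cluster in $\cC_L\setminus\cC_{hit}$, and different hit clusters have disjoint $\Gamma^{-1}$, hence distinct leaders; thus $\sum_{C_i\in\cC_{hit}}\ell(r_i)\le D$. The special-cluster centres are heavy hit clusters, so $\sum_{\text{special}}r_i\le A$. Adding up,
\begin{equation*}
\cost(\cS)=\sum_{C_i\in\cC_{hit}}(2r_i+\ell(r_i))+\sum_{\text{special }C_i}r_i\le 2A+D+A=3A+D\le 3(A+D)=3\opt,
\end{equation*}
and replacing the optimal radii by their $\eps$-approximations yields $\cost(\cS)\le 3(1+\eps)\opt$.

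The main obstacle is the bookkeeping that makes the $3\opt$ bound tight rather than $4\opt$: special clusters are \emph{heavy hit} clusters, while the $\ell(r_i)$ terms pay for \emph{light non-hit} leaders, and these two sets of radii are disjoint subsets of $\cC$; this is precisely what lets the ``double-counted'' contribution $A+D$ be absorbed into $\opt$. The only place where the good-dense-ball refinement over \csd enters is in ensuring $|D'_i|=n/(20k^2)$ points can actually be removed from $\cT_i$ without blowing up $a_i$'s capacity, which follows because the ball of a regular heavy-non-almost-full cluster is never overloaded to begin with and special clusters have precisely this slack.
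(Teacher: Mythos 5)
Your proof is correct and follows essentially the same approach as the paper's: you define the regions $\cT_i$, exhibit the same assignment (full $\cT_i$ to $a_i$ for regular hit clusters; for special ones, split a chunk $D'_i$ of size $n/(20k^2)$ off to the good-dense-ball centre $y_i$), verify capacities and radii, and then bound the cost by separately charging the $\sum \ell(r_i)$ terms to distinct light non-hit leaders and the $\sum_{\text{special}} r_i$ terms to hit clusters, yielding $3A + D \le 3\,\opt$. Your bookkeeping with $A$ and $D$ is a slightly more explicit rendering of the paper's chain $\sum_{\cC_{hit}}(2r_i+\ell(r_i))+\sum_{\text{special}}r_j \le \sum_{\cC_{hit}}(3r_i+\ell(r_i)) \le 3\sum_{\cC} r_i$, and your centre-budget count via distinct light witnesses in $\cC_L\setminus\cC_{hit}$ is a cleaner version of the paper's assertion that $|Y|\le k'-(|X|+|Z|)$.
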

\begin{proof}

We partition the clusters in $\cC_{hit}$ into two parts, as follows.
\begin{definition}
We call cluster $C_i \in \cC$ hit by $X \cup Z$, a \emph{\tps} cluster if $C_i \in \cC_F$ and $\Gamma^{-1}(C_i)  \neq \emptyset$.
Otherwise, we call $C_i$, a \emph{\tpr} cluster.
\end{definition}

For $C_i \in \cC_{hit}$, let $\cT_i = C_i \cup  \bigcup_{C_t \in \Gamma^{-1}(C_i)} C_t$. Then, note that $\{\cT_i\}_{C_i \in \cC_{hit}}$ partitions the set $P$. Also, note that $|\cT_i| \le U$ for a \tpr cluster $C_i \in \cC_{hit}$. This is because, in this case, $C_i \in \cC_L$ or $C_i \in \cC_H \setminus \cC_F$ or $\Gamma^{-1}(C_i) = \emptyset$, and hence $|\cT_i| \le U$.
Consider $\cS$ constructed by the algorithm. Then, note that for every cluster $C_i \in \cC_{hit}$, there is a entry $(a_i, 2r_i+\ell(r_i))$ in $\cS$, added by the algorithm at Step~\ref{algo: uni entry for hit}. Further, if $C_i$ is a \tps cluster, then there is an additional entry $(y_i,r_i)$ in $\cS$, added at Step~\ref{algo: uni special entry}. 
Let $Y$ be the set of points $y_i$ corresponding to the additional entries of \tps clusters. Note that by construction, the sets $X$, $Y$, and $Z$ are pairwise disjoint.

\textbf{Intuition.} Consider  $a_i \in X \cup Z$ and let $C_i \in \cC_{hit}$ hit by $a_i$.
The key idea is that if $C_i$ is a \tpr cluster, then $a_i$ can serve all the points in $\cT_i$ without violating the uniform capacity and within radius $2r_i + \ell(r_i)$, since $|\cT_i| \le U$, as reasoned above. If $C_i$ is \tps, then it can happen that the number of points in $\cT_i$ is larger than $U$, violating the uniform capacity. But, in this case, as mentioned above,  there is an additional entry $(y_i,r_i)$ in $\cS$. Further, \Cref{cl: bound on findgoodenseball} guarantees that $\ball(y_i,r_i)$ is a good dense ball for $C_i$, and hence we have $|\ball(y_i,r_i) \cap C_i| \ge \nicefrac{n}{20k^2}$. Therefore, in this case $y_i \in Y$ can serve $\nicefrac{n}{20k^2}$ points $D'_i$ from $C_i$ within radius $r_i$, while $a_i$ can serve the remaining points in $\cT_i$. Note that, $|\cT_i \setminus D'_i| \le |\cT_i| - |D'_i| \le |C_i|$. Since, $\cT_i$'s form a partition of $P$, the total cost of this assignment is at most $3\sum_{i \in [k']} r_i \le 3(1+\epsilon)\opt$, as desired. Also, note that the number of centers opened is at most $k' \le k$.

Now, we prove the above intuition formally. Let $S = X \cup Y \cup Z$ and note that $|S| \le k'$ since $|X|+|Z| \le k'$ and $|Y| \le k' - (|X|+|Z|)$. We will show an assignment $\sigma': P \rightarrow S$ such that $(S,\sigma')$ is feasible for $\cI$ with $\cost((S,\sigma')) \le \cost(\cS)$. Therefore, Step~\ref{algo: find feasible sol} is successful and returns a feasible solution $(S,\sigma)$ for $\cI$ such that $\cost((S,\sigma)) \le \cost(\cS)$. 
Towards this, it is sufficient to assign points from each $\cT_i$'s to a specific center in $S$ without violating the uniform capacity, since  $\cT_i$'s form a partition of $P$.
Consider a \tpr cluster $C_i$ and let $s_i \in S$ be such that $s_i \in C_i$. 
For $p \in \cT_i $, define, \[\sigma'(p) = s_i.\]
Note that $|\sigma'^{-1}(s_i)| \le |\cT_i| \le U$, since $C_i$ is a \tpr cluster. Also, $\rad(s_i) \le 2r_i + \ell(r_i)$.
Now, consider a \tps cluster $C_i$ and let $(s_i,2r_i + \ell(r_i))$ and $(y_i,r_i)$ be the entries corresponding to $C_i$ in $\cS$, where $s_i \in X$ and $y_i \in Y$.
Recall that $\ball(y_i,r_i)$ returned by the subroutine \FGDB$(\ball(s_i,2r_i),r_i)$ at Step~\ref{algo: uni goodense ball} is a good dense ball for $C_i$, due to \Cref{cl: bound on findgoodenseball}. Hence, we have $|\ball(y_i,r_i) \cap C_i| \ge \nicefrac{n}{20k^2}$. Let $D'_i \subseteq D_i \cap C_i$ such that $|D'_i| = \nicefrac{n}{20k^2}$. For $p \in \cT_i$, define
\[
\sigma'(p) = \begin{cases}
 y_i \qquad \text{if } p \in D'_i,\\
 s_i \qquad \text{otherwise}.
\end{cases}
\]
In this case, we have  $|\sigma'^{-1}(y_i)| \le\nicefrac{n}{20k^2} \le U$ and  $|\sigma'^{-1}(s_i)| \le |\cT_i \setminus D'_i| \le U$. Finally, $\rad(y_i) \le r_i$ and $\rad(s_i) \le 2r_i + \ell(r_i)$.

\textbf{Cost of $(S,\sigma')$. } 
We first analyze the cost $(S,\sigma)$ for  $\ell_1$ objective function, which is uniform \csr.
We have, in this case,  $\cost((S,\sigma')) \le \sum_{s_i \in X \cup Z} \rad(s_i) + \sum_{y_i \in Y} \rad(y_i) = \cost(\cS)$. Furthermore,
\begin{align*}
\cost(\cS)) &= \sum_{s_i \in X \cup Z} \rad(s_i) + \sum_{y_i \in Y} \rad(y_i)
\le \sum_{C_i \in \cC_{hit}} (2r_i + \ell(r_i)) + \sum_{C_j \in \cC_{hit}: C_j \text{is \tps}} r_j\\
&\le \sum_{C_i \in \cC_{hit}} (3r_i + \ell(r_i))
\le 3 \sum_{C_i \in \cC} r_i,
\end{align*}
where the last inequality follows since the leaders are pairwise disjoint  as $\Gamma^{-1}(C_i)$ are pairwise disjoint, for $C_i \in \cC_{hit}$. Therefore, $\cost(\cS) \le 3 (1+\epsilon)\opt$, and hence the cost of $(S,\sigma)$ returned by the algorithm at Step~\ref{algo: find feasible sol} is $\cost((S,\sigma)) \le 3 (1+\epsilon)\opt$.

Now consider $\ell_p$ objective function, for $p\ge 1$. 
We have,   $\cost((S,\sigma')) \le (\sum_{s_i \in X \cup Z} \rad(s_i)^p + \sum_{y_i \in Y} \rad(y_i)^p )^{1/p} = \cost(\cS)$. Furthermore,
\begin{align}
    \frac{\cost(\cS)}{(\sum_{i \in [k']} (r_i)^p)^{1/p}} &\le \left(\frac{\sum_{C_i \in \cC_{hit}} (2r_i + \ell(r_i))^p + \sum_{C_j \text{is \tps}} r_j^p}{\sum_{i \in [k']} (r_i)^p}\right)^{1/p}\nonumber\\
    &\le  \max_{C_i \text{ is special}} \left(\frac{(2r_i + \ell(r_i))^p + r_i^p}{(r_i)^p + (\ell(r_i))^p}\right)^{1/p}\nonumber\\
    &\le (2\cdot 3^{p-1}+1)^{1/p}\label{eq:ellPratio}
\end{align}
Please refer \Cref{apx: apx ratio p} for details about the last step. Hence, we have
\[
\cost(\cS) \le (2\cdot 3^{p-1}+1)^{1/p}(\sum_{i \in [k']} (r_i)^p)^{1/p} \le (1+\eps)(2\cdot 3^{p-1}+1)^{1/p}(\sum_{i \in [k']} (r^*_i)^p)^{1/p}.
\]
In fact, for specific values of $p$, we obtain better factors than the one obtained from the above closed form.
\begin{remark}\label{rem: better factor for p}
    For specific values of $p$, we obtain better factors than the above mentioned closed form. For instance, for  $p=2 \text{ and } 3$, we obtain factors $\approx 2.414$ and $\approx2.488$, respectively, instead of $\approx2.65$ and $\approx2.67$ obtained from the closed form. For comparison, optimizing the final expression of~\cite{jaiswal_et_al:LIPIcs.ITCS.2024.65},  $ \left(\frac{\sum_{C_i \in \cC_{hit}} (2r_i + 2\ell(r_i))^p + \sum_{C_j \text{is \tps}} r_j^p}{\sum_{i \in [k']} (r^*_i)^p}\right)^{1/p}$, for $p=2 \text{ and } 3$, yields factors $\approx 2.92$ and $\approx3.191$.
\end{remark}

When the objective is a monotone symmetric norm of the cluster radii, then this gives $(3+\epsilon)$-approximation.

Finally, note that the running time of \Cref{algo: 3 apx uni} is $n^{O(1)}$, since each iteration of the \emph{while} loop unmarks at least one marked cluster, and hence runs at most $k$ times.
 
\end{proof}

\subsubsection{Proof of \Cref{thm: uni}}\label{ss: proof of uni}
Now, we are ready to prove \Cref{thm: uni}.
    Again, as in \Cref{ss: csd  algo}, note that it is sufficient to show how to remove \Cref{as: uni} in \Cref{lm: uni bound}. We think of \Cref{algo: 3 apx uni} as a linear program that  has access to a guess function and we will transform it to a branching program without guess function. We show how to replace the guess function.
    The algorithm makes the following guesses.
    \begin{enumerate}
        \item[E1] Guess $|O|$ and the clusters in $\cC_H,\cC_F,$ and $\cC_L$. This can be done by enumerating $k2^k2^k = 2^{O(k)}$ choices.
        \item[E2] Guess an $\epsilon$-approximate radius profile of the clusters in $\cC$. This can be obtained by enumerating $(k/\epsilon)^{O(k)}  n^2$ choices using~\Cref{lem: eps apx}. 
        \item[E3] Guess the responsibility function $\Gamma: \cC_L \rightarrow \cC_{hit}$. Although, this process depends on $(E4)$ ($X \cup Z$), but we can enumerate all $k^k$ choices for $\Gamma$ and later work with only those that are compatible with $(E4)$
        \item[E4] Guess a point from each heavy cluster. We sample a point from every heavy cluster with probability at least $(20k^3)^{-k}$, since a heavy cluster contains at least $n/20k^3$ points. However, for \Cref{as: uni}, we want that $X \cap O = \emptyset$.
        But, since $|O| \le k$, we have
        \[
        \Pr_{x_i \sim P}[x_i \in C_i \text{ and } x_i \notin O] \ge \frac{n/20k^3 - k}{n} \ge \frac{1}{60k^3},
        \]
        since $k \le \sqrt[4]{n/30}$. 
        Therefore, with probability at least $(60k^3)^{-k}$, we sample $X$ that hits every heavy cluster such that $X \cap O = \emptyset$. 
        Let $k'' \le k$ be the number of heavy clusters in $\cC$.
        We say a $k''$-tupple $A \in P^{k''}$ is \emph{good} if $|A \cap C_i|=1$ for all $C_i \in \cC_H$ and $X \cap O = \emptyset$.
        We repeat the above sampling $t=(60k^3)^k$ times and create a branch for each repetition. Hence, the probability that at least one branch of the executions correspond to good $k''$-tuples is $\ge 1 - (1 - (60k^3)^{-k})^t \ge 1- e^{t (60k^3)^{-k}}>3/5$. Thus $(E4)$ results in $(60k^3)^k$ branches such that the probability that there exists a good $k''$-tuple in these executions is $3/5$.

        \item[E5] Guesses in \FGDB. Note that there are two sub-conditions in the \emph{if} statement - event $(A)$ when $|D_i \cap C_i| \ge n/20k^2$ and event $(B)$ when $y_i \notin O \setminus \{o_i\}$. Hence, for each $D_i$ obtained at Step~\ref{algo: goodense ball find D_i}, there are $4$ possible outcomes - $(A,B), (\neg A, B),\allowbreak (A,\neg B),\allowbreak (\neg A, \neg B)$. Thus, for each iteration of the \emph{for} loop, we create $4$ branches. Since, there are $4k$ iterations, we have at most $4^{4k}$ many different executions such that at least one of them correctly computes a good dense ball for $C_i$ (\Cref{cl: bound on findgoodenseball}). Furthermore, \FGDB\ is invoked at most $k$ times, and hence, we have at most $4^{4k^2}$ many executions such that at least one of them computes a good dense ball for every $C_i$ for which \FGDB is invoked.
    \end{enumerate}

    As mentioned above, we transform the linear execution of \Cref{algo: 3 apx uni} (and \Cref{algo:good dense ball}) into a branching program by replacing each guess function with a (probabilistic) branching subroutine. Hence, we obtain an algorithm that successfully finds a $3(1+\epsilon)$-approximate solution $(S,\sigma)$ with probability $3/5$ with running time $2^{O(k)}(k/\epsilon)^{O(k)}k^k(60k^3)^k 4^{4k^2} n^{O(1)} = 2^{O(k^2 + k \log (k/\epsilon))}n^{O(1)}$.

\section{Omitted Proof}\label{apx: apx ratio p}
We want to show the following.
\begin{lemma} 
 For $x,y \in \mathbb{R}_{\ge 0}$  and $p \ge 1$,
    \[\left(\frac{(2x+y)^p + x^p}{x^p+ y^p}\right) \le (2\cdot 3^{p-1}+1)
    \]
\end{lemma}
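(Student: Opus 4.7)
The natural approach is a single application of convexity of the map $t \mapsto t^p$ for $p \ge 1$. I would rewrite $2x+y = x + x + y$, view this as $3$ times the average of the three non-negative numbers $x, x, y$, and apply Jensen's inequality (equivalently, the power mean inequality) to obtain the key estimate
\[
(2x+y)^p \;=\; 3^p \left(\frac{x+x+y}{3}\right)^{\!p} \;\le\; 3^p \cdot \frac{x^p + x^p + y^p}{3} \;=\; 3^{p-1}\bigl(2x^p + y^p\bigr).
\]

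With this in hand, the remaining step is purely algebraic. Adding $x^p$ to both sides gives
\[
(2x+y)^p + x^p \;\le\; (2 \cdot 3^{p-1} + 1)\, x^p + 3^{p-1}\, y^p.
\]
I would then compare the right-hand side to the target $(2\cdot 3^{p-1}+1)(x^p + y^p)$: the difference is $(2 \cdot 3^{p-1} + 1)\, y^p - 3^{p-1}\, y^p = (3^{p-1}+1)\, y^p \ge 0$, so
\[
(2x+y)^p + x^p \;\le\; (2 \cdot 3^{p-1} + 1)(x^p + y^p),
\]
and dividing by $x^p + y^p$ (assumed positive; the degenerate case $x=y=0$ makes the original ratio undefined and can be skipped) yields the claim.

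\textbf{Potential obstacles.} There is really no obstacle here once the decomposition $2x+y = x+x+y$ is noticed, since everything else is a one-line Jensen step plus a direct comparison. The only minor care needed is to verify the boundary $y = 0$ (where the bound becomes $2^p + 1 \le 2 \cdot 3^{p-1} + 1$, equivalent to $(2/3)^{p-1} \le 1$, which holds for $p \ge 1$), and to handle $x^p + y^p = 0$ separately. I would present the proof in the order above: state the Jensen estimate, add $x^p$, observe the trivial sign of $(3^{p-1}+1)y^p$, and conclude.
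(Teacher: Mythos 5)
Your proof is correct, and it takes a genuinely different (and cleaner) route than the paper. The paper splits into two cases: for $y \ge x$ it writes $2x+y \le \tfrac{3}{2}(x+y)$-style and invokes the two-term convexity bound $(a+b)^p \le 2^{p-1}(a^p+b^p)$, while for $x > y$ it sets $\beta = x/y$ and proves $(2\beta+1)^p \le 2\cdot 3^{p-1}(\beta^p+1)-1$ by a derivative/monotonicity argument on $f(\beta)=(2\beta+1)^p - 2\cdot 3^{p-1}(\beta^p+1)+1$. Your single three-term Jensen step $(2x+y)^p = 3^p\bigl(\tfrac{x+x+y}{3}\bigr)^p \le 3^{p-1}(2x^p+y^p)$ dispenses with the case split and the calculus entirely, and in fact yields the slightly stronger intermediate bound $(2x+y)^p + x^p \le (2\cdot 3^{p-1}+1)x^p + 3^{p-1}y^p$, from which the stated inequality follows by the sign observation you make. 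The only thing the paper's more laborious case analysis buys is a template that can be tightened for specific values of $p$ (the paper separately notes that optimizing the exact ratio $\max_{\alpha}\bigl(\frac{(2+\alpha)^p+1}{1+\alpha^p}\bigr)^{1/p}$ gives better constants, e.g.\ $\approx 2.414$ for $p=2$), but for the closed-form bound actually claimed in the lemma your argument is a complete and simpler proof; your handling of the boundary cases ($y=0$ and $x^p+y^p=0$) is also fine.
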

\begin{proof}
Consider the case when $y \ge x$.  In this case, we will ``charge" the cost of $x$ to $y$ as follows.
\begin{align*}
    (2x+y)^p + x^p &\le  (\nicefrac{x}{2} + \nicefrac{3x}{2}+y)^p + x^p \\
    &\le (\nicefrac{3y}{2} + \nicefrac{3x}{2})^p + x^p \qquad \text{since } x \le y\\
    &\le \left(\frac{3}{2}\right)^p 2^{p-1} (x^p+y^p)  + x^p \\
    &\le \left(\frac{3^p}{2} +1 \right) (x^p + y^p)\\
    &\le \left(2\cdot3^{p-1} +1 \right) (x^p + y^p),
\end{align*}
where in the third step we used the Minkowski inequality, which states that $(a+b)^p \le 2^{p-1}(a^p+b^p)$, for $a,b \ge 0$.
 
 Now, suppose $x > y$. If $y=0$, then, $(2x+y)^p + x^p \le (2^p+1) x^p \le (2\cdot 3^{p-1}+1)$, for $p\ge 1$. Assume $y >0$, and let $\beta = x/y > 1$. We need the following claim.
    \begin{claim} \label{apx: cl: beta}
        $(2\beta + 1)^p \le 2\cdot 3^{p-1}(\beta^p +1)-1$, for $p \ge 1$.
    \end{claim}
    Before proving the claim, we finish the proof of the lemma as follows.
    \[
      (2x+y)^p + x^p = y^p (2\beta + 1)^p + x^p \le y^p (2\cdot 3^{p-1}(\beta^p +1)-1) + x^p \le  (2\cdot 3^{p-1} + 1) (x^p + y^p),
    \]
    as desired.
    \begin{proof}[Proof of \Cref{apx: cl: beta}]
        Note that for $p=1$, the claim holds with equality. Therefore, assume $p >1$.     Consider the function $f(\beta) = (2\beta + 1)^{p} - 2\cdot 3^{p-1}(\beta^{p} +1)+1$. 
    We will show that $f(\beta)$ is non-increasing for $\beta \ge 1$. Since $f(1)\le 0$, this implies, $f(\beta) \le 0$ for $\beta \ge 1$, finishing the proof. First note that $(2\beta+1) < 3\beta$ since $\beta >1$. 
    Now, consider the derivative of $f(\beta)$,
    \begin{align*}
        f'(\beta) &= 2p(2\beta + 1)^{p-1} - 2p\cdot 3^{p-1} \beta^{p-1}\\
        &< 2p(3\beta)^{p-1}- 2p\cdot 3^{p-1} \beta^{p-1}\\
        &=0.
    \end{align*}        
    \end{proof}

\end{proof}

\end{document}